\newcommand{\vect}[1]{\boldsymbol{#1}}
\newcommand{\av}{{\vect{a}}}
\newcommand{\bv}{{\vect{b}}}
\newcommand{\yv}{{\vect{y}}}
\newcommand{\zv}{{\vect{z}}}
\renewcommand{\v}{{\vect{v}}} % \v is the caron accent
\newcommand{\gammav}{{\vect{\gamma}}}
\newcommand{\Gammav}{\boldsymbol\Gamma}
\newcommand{\Thetav}{\boldsymbol\Theta}
\newcommand{\betav}{{\vect{\beta}}}
\newcommand{\paramv}{{\vect{\gamma}, \vect{\beta}}}
\newcommand{\of}[1]{\left( #1 \right)}
\newcommand{\ofc}[1]{\left\{ #1 \right\}}
\newcommand{\ofb}[1]{{\left[#1\right]}}
\newcommand{\oleft}[1]{\overset{\leftarrow}{#1}}
\newcommand{\oright}[1]{\overset{\rightarrow}{#1}}
\newcommand{\parent}{\mathfrak{p}}
\newcommand{\Biggg}[1]{\mathopen{\bBigg@{3}#1}}
\newcommand{\Bigggg}[1]{\mathopen{\bBigg@{4}#1}}
\newcommand{\Biggggg}[1]{\mathopen{\bBigg@{5}#1}}
\newcommand{\Bigggggg}[1]{\mathopen{\bBigg@{6}#1}}
\definecolor{lgreen}{rgb}{0.85,1,.85}
\newtheorem{lemma}{Lemma}
\newtheorem{theorem}{Theorem}
\newtheorem{corollary}{Corollary}
\newtheorem{claim}{Claim}
\newtheorem{remark}{Remark}
\newcommand{\defeq}{\stackrel{\mathrm{\scriptscriptstyle def}}{=}}
\begin{document}
\title{Performance of Variational Algorithms for Local Hamiltonian Problems on Random Regular Graphs}

\author{
  Kunal Marwaha \\
  University of Chicago
  \and
  Adrian She \\
  University of Toronto
  \and
  James Sud\thanks{Corresponding Author: \texttt{jsud@uchicago.edu}} \\
  University of Chicago
}

\date{}
\maketitle
\vspace{-1em}
\begin{abstract}
We design two variational algorithms to optimize specific 2-local Hamiltonians defined on graphs. 
Our algorithms are inspired by the Quantum Approximate Optimization Algorithm. 
We develop formulae to analyze the energy achieved by these algorithms with high probability over random regular graphs in the infinite-size limit, using techniques from~\cite{basso2022}. 
The complexity of evaluating these formulae scales exponentially with the number of layers of the algorithms, so our numerical evaluation is limited to a small constant number of layers. 
We compare these algorithms to simple classical approaches and a state-of-the-art worst-case algorithm.
We find that the symmetry inherent to these specific variational algorithms presents a major \emph{obstacle} to successfully optimizing the Quantum MaxCut (QMC) Hamiltonian on general graphs. 
Nonetheless, the algorithms outperform known methods to optimize the EPR Hamiltonian of~\cite{king2023} on random regular graphs, and the QMC Hamiltonian when the graphs are also bipartite. 
As a special case, we show that with just five layers of our algorithm, we can already prepare states within $1.62\%$ error of the ground state energy for QMC on an infinite 1D ring, corresponding to the antiferromagnetic Heisenberg spin chain.
\end{abstract}

\maketitle

\small
\tableofcontents

\normalsize

\section{Introduction}\label{sec:intro}
The natural quantum analogue of combinatorial optimization is the $\QMA$-complete local Hamiltonian problem~\cite{kempe2005}. 
Although we do not expect quantum algorithms to exactly solve every instance of this problem, we may instead design good quantum \emph{approximation} algorithms. 
So far, we do not know which efficient quantum algorithm will achieve the best performance on this problem.
This contrasts with the classical case~\cite{raghavendra2008optimal}, and is connected to the possibility of quantum PCPs~\cite{aharonov2013,hwang2022}.

One compelling formulation of the local Hamiltonian problem is inspired by the $\NP$-hard problem Maximum Cut. In Maximum Cut, one must 2-color the vertices of a graph $G$ in a way that maximizes the number of edges connecting different colors. In $\emph{Quantum MaxCut}$ (QMC), one aims to find a global quantum state that maximizes
the energy of a $2$-local Hamiltonian, where a single local term projects each edge of a graph $G$ into the antisymmetric singlet state.
This problem was previously studied as the spin-1/2 antiferromagnetic quantum Heisenberg model in statistical mechanics~\cite{heisenberg1928}, but has seen a resurgence in the quantum information literature~\cite{gharibian2019, anshu2020, parekh2021, parekh2022, lee2022, takahashi2023, king2023, lee2024, watts2024, kannan2024}.

Although most work on this topic considers worst-case approximation algorithms, it is natural to wonder whether there exist good \emph{average-case} approximation algorithms for these problems. Unlike the worst-case, \emph{random} instances of classical combinatorial optimization problems can often be solved in polynomial time~\cite{dembo2017,alaoui2020,alaoui2023,jekel2024}. 
Some random local Hamiltonians have near-optimal \emph{product states}~\cite{brandao2014, bergamaschi2024}, but it is unclear if better performance can be achieved with a quantum algorithm.

In this work, we consider a family of variational algorithms for Quantum MaxCut and related problems. 
These algorithms are inspired by the Quantum Approximate Optimization Algorithm (QAOA)~\cite{farhi2014, hogg2000}, itself inspired by quantum annealing~\cite{farhi2000a}. We leverage previous studies of the QAOA~\cite{basso2022, basso2022gamarnik} to analyze the expected performance of our algorithms on random regular graphs. 
Our results show that we can outperform previously-studied classical algorithms for some random Hamiltonians in the infinite size limit, including the EPR Hamiltonian of \cite{king2023} on random regular graphs and QMC on random regular bipartite graphs.

Despite this, our algorithms are not universally successful.
For QMC on more general graphs, we do not outperform even simple classical algorithms. 
This may be because the framework we use to analyze performance requires that our algorithms start with a symmetric (i.e. permutation-invariant) product state.
We find that this symmetry creates a \emph{barrier} to outperforming classical methods. However, since it is an artifact of our analysis, it is possible that similar variational algorithms may perform well on QMC when initialized with better product states. Finally, in the large-degree limit, we find that our algorithms do \emph{not} outperform classical approaches for any of Hamiltonians we consider, potentially due to the phenomenon known as \emph{monogamy of entanglement}.

\subsection{Our contributions}\label{sec:intro/contributions}
We consider local Hamiltonian problems defined on graphs. An instance $H$ is specified by a 2-local Hamiltonian term $H_{uv}$ and a graph $G$, where
\begin{align*}
    H \defeq \sum_{(u,v) \in E(G)} H_{uv}
\end{align*}
We consider three model Hamiltonians, each using a different 2-local term: \emph{QMC} (Quantum MaxCut), \emph{EPR}, and \emph{XY}. We define these models in \Cref{sec:background}.

We then introduce two variational algorithms to prepare high-energy states for these Hamiltonians. 
For each Hamiltonian, we derive the expected energy achieved by these algorithms on high-girth regular graphs as the number of qubits goes to infinity (\Cref{sec:iterations}).
We also find the limit of this value as the degree tends to infinity. These variational algorithms and analyses can be extended to arbitrary $k$-local Hamiltonian problems on (hyper)graphs; we do this in \Cref{apx:generic_ansatz}.

At this point, the expected energy is a function only of each algorithm's variational parameters. We numerically optimize these parameters for each Hamiltonian for degree $2 \le d \le 5$ and algorithm depth (number of variational layers) up to $5$.
In \Cref{sec:results}, we compare these values with simple classical algorithms described in \Cref{sec:background/classical_algos}.
On the EPR Hamiltonian, the variational algorithms outperform other algorithms at any depth.
Moreover, when the graphs are also bipartite, the variational algorithms outperform other algorithms on the QMC Hamiltonian.

When the graphs are further restricted to be \emph{edge-transitive}, we can compare to the state-of-the-art algorithms which use global solutions obtained by SDP solvers. 
We find that our algorithms match the energy of \cite{king2023}'s algorithm at just depth $1$, and can only improve at higher layers (\cref{sec:results/small_d}).
As an example, QMC on infinite bipartite $2$-regular graphs corresponds to the exactly-solvable Heisenberg XXX model on a 1D ring. For this case, our algorithm prepares a state within 1.62\% error of the true ground state, outperforming \cite{king2023}'s algorithm. 

Finally, we provide an explicit calculation the energy obtained by one of our algorithms at depth one for QMC, EPR, and XY on arbitrary unweighted graphs in \Cref{apx:mcansatz_onelayer}. An interactive notebook with all relevant code is available online~\cite{sud2024}.

\subsection{Related work}\label{sec:intro/related}
In the computer science literature, the first algorithms specifically for Quantum MaxCut (QMC) used semidefinite programs to prepare \emph{product states}~\cite{gharibian2019, parekh2021, parekh2022}.
It was then discovered that a short variational circuit~\cite{anshu2020, king2023, lee2022} or an algorithm based on maximum weight matchings~\cite{anshu2020, lee2024} could improve on this product state. \cite{king2023} introduced the EPR Hamiltonian, which is sign-problem free and thus not known to be $\QMA$-hard. As of the first version of this paper, the best worst-case approximation ratio is $0.584$ for QMC~\cite{lee2024} and $0.707$ for EPR~\cite{king2023}.
There exists a hierarchy of semidefinite programs to find the optimal energy of a QMC Hamiltonian, but  these are not expected to be efficient~\cite{takahashi2023,watts2024}.

Our algorithm is inspired by the Quantum Approximate Optimization Algorithm~\cite{farhi2014,hogg2000}, designed to solve problems in combinatorial optimization. Whether the QAOA actually gives a quantum computational advantage is a subject of much debate~\cite{farhi2015quantum,barak2015beating,chou2021limitations,basso2022gamarnik,boulebnane2024solving,chen2023local,montanaro2024quantum}. 
To analyze our algorithm on random local Hamiltonians, we generalize techniques developed to study the performance of the QAOA for the Maximum Cut problem on random graphs~\cite{basso2022, sureshbabu2024}. 
Other variational algorithms include quantum annealing~\cite{farhi2000a} and the Hamiltonian Variational Ansatz~\cite{wecker2015}.

The QMC Hamiltonian on a $2$-regular graph is exactly the 1-dimensional antiferromagnetic Heisenberg spin-$1/2$ chain.
This is solvable in the infinite size limit~\cite{bethe1931, faddeev1996}, with maximum per-edge energy $2\ln2$. Several recent works~\cite{dyke2021, li2022,sopena2022a, ruiz2024, ruiz2024b, raveh2024} design methods to prepare the ground state of this model on a quantum computer, but all approaches require circuit depth growing exponentially in the graph size.

We note that a concurrent paper \cite{kannan2024} simultaneously and independently studies the power of a QAOA-inspired variational circuit (``HamQAOA'') for Quantum MaxCut and related local Hamiltonian problems on random graphs. 
They are able to analyze the performance over a \emph{distribution} of arbitrary initial product states, bypassing the symmetry barriers in our variational algorithms.
As a result, HamQAOA may outperform our algorithm on the QMC Hamiltonian instances where our algorithm fails.
Additionally, HamQAOA achieves a slightly better error ($1.05\%$ vs $1.62\%$) on the infinite 1D Heisenberg XXX ring.
However, in the infinite size limit, the concurrent work explicitly computes the performance of HamQAOA only for QMC. 
We compute the performance of our algorithms for the QMC, EPR, and XY model Hamiltonians. We also show how to extend our analysis to evaluate the energy of ans\"atze consisting of (at most) $k$-local gates with respect to $k$-local Hamiltonians defined on regular $k$-uniform hypergraphs, for any constant $k$. Our analysis requires that each gate in the ansatz and each term in the Hamiltonian is invariant under permutations of the qubits it acts on non-trivially.

\section{Our setup}\label{sec:background}
We now introduce and motivate the problems and algorithms studied in this work.

\subsection{Problems}\label{sec:background/problems}

Maximum Cut (MaxCut) is a canonical combinatorial optimization problem in computer science. The (unweighted) problem is as follows: given an unweighted, undirected graph $G = (V, E)$ described by a set of $n$ vertices $V$ and a set of $m$ edges $E$, the goal is to partition the vertices into two disjoint sets $S$ and $\bar{S}$ such that the number of edges between the two sets is maximized.

Although MaxCut is known to be NP-hard, the Goemans-Williamson algorithm \cite{goemans1995} can be used to find \emph{approximate} solutions in polynomial time. Recently, the Quantum Approximate Optimization Algorithm \cite{farhi2014, hogg2000} has been studied for the MaxCut problem on random $D$-regular graphs, with performance monotonically increasing with the depth of the algorithm~\cite{basso2022}.

To solve this problem with a variational algorithm like the QAOA, one may formulate MaxCut as finding the maximal eigenstate of the below Hamiltonian (which we call the \emph{MC Hamiltonian}):
\begin{align}\label{eq:mc_H} \begin{split}
    H_{MC} 
    &\defeq \frac{1}{2} \sum_{(u,v) \in E}I_u I_v - Z_uZ_v,
    \\
    &= \sum_{(u,v) \in E} \ket{01}_{uv}\bra{01}_{uv} + \ket{10}_{uv}\bra{10}_{uv},
\end{split} \end{align}
where $I_u$ and $Z_u$ refer to the identity operator and the Pauli $Z$ operator acting solely on the qubit corresponding to vertex $u$. The second line demonstrates that the MC Hamiltonian is a sum over projectors on each edge into the subspace where qubit $i$ is not equal to qubit $j$ in the computational basis.

Quantum MaxCut (QMC) is a natural \emph{quantum} extension of this formulation of MaxCut. Here, the goal is to find the maximum eigenstate of what we call the $QMC$ \emph{Hamiltonian}:

\begin{align}\label{eq:qmc_H} \begin{split}
    H_{QMC} 
    &\defeq \frac{1}{2} \sum_{(u,v) \in E} I_u I_v - X_uX_v - Y_uY_v - Z_uZ_v, \\
    &= 2\sum_{(u,v) \in E} \ket{\psi^-}_{uv} \bra{\psi^-}_{uv},
\end{split} \end{align}

where the second line demonstrates that the QMC Hamiltonian is a sum over projectors on each edge onto the antisymmetric singlet state  $\ket{\psi^-}\defeq\frac{1}{\sqrt{2}} \ket{01}-\ket{10}$, with a coefficient of $2$. 

We also study two variants of Quantum MaxCut, where again the goal is to find a maximum eigenstate. First, we introduce the $XY$ \emph{Hamiltonian}:

\begin{align}\label{eq:xy_H}\begin{split}
    H_{XY} &\defeq \frac{1}{2} \sum_{(u,v) \in E} I_uI_v -Y_uY_v - Z_uZ_v \\
    &= \sum_{(u,v) \in E} \frac{1}{2} + \ket{\psi^-}_{uv} \bra{\psi^-}_{uv} - \ket{\phi^-}_{uv} \bra{\phi^-}_{uv}.
    \end{split}
\end{align}

The $XY$ Hamiltonian is is obtained by removing all $X_u X_v$ terms from the QMC Hamiltonian\footnote{The name ``XY'' is potentially confusing, as we only keep the $YY$ and $ZZ$ terms. 
We use this name because up to constant offsets and local rotations, this Hamiltonian is equivalent to the \emph{Heisenberg XY model} from statistical physics on arbitrary spin-coupling graphs. The choice of Hamiltonian in \cref{eq:xy_H} makes it easier to compare with \cref{eq:mc_H}, which allows us to borrow more notation from the analysis in \cite{basso2022}.}.
The second line shows that it is a sum over edges of a constant offset, a projector onto $\ket{\psi^-}$, and a projector onto $\ket{\phi^-}\defeq\frac{1}{\sqrt{2}} \ket{00}-\ket{11}$, with coefficients $1/2$, $1$, and $-1$, respectively.

Second, we study the $EPR$ \emph{Hamiltonian}, introduced in \cite{king2023}:
\begin{align}\label{eq:epr_H} \begin{split}
    H_{EPR} 
    &\defeq \frac{1}{2} \sum_{(u,v) \in E} I_u I_v + X_uX_v - Y_uY_v + Z_uZ_v \\
    &=  2\sum_{(u,v) \in E} \ket{\psi^+}_{uv}\bra{\psi^+}_{uv},
\end{split} \end{align}
where second line demonstrates that EPR is a sum over projectors on each edge onto the EPR pair $\ket{\psi^+}\defeq\frac{1}{\sqrt{2}} (\ket{00}+\ket{11})$ with coefficient $2$. Thus, the all-zeros state $\ket{0}^{\otimes n}$ is an optimal product state for every graph.
As a consequence, the EPR Hamiltonian removes all of the ``classical'' difficulty from the problem, as the optimization problem purely asks how best to distribute entanglement among vertices. 

\begin{remark}[\cite{king2023}]\label{rem:bipartite_qmc_epr_equivalence}
    On bipartite graphs with a vertex partition $(A,B)$, the EPR and QMC Hamiltonians are equivalent under a local Pauli $Y$ rotation of all qubits in (wlog) part $A$.
\end{remark}

These three Hamiltonians are all of the form

\begin{align}\label{eq:2local_H_form}
    H = \sum_{(u,v) \in E}  c_I I_u I_v + c_X X_uX_v + c_Y Y_uY_v + c_Z Z_uZ_v,
\end{align}

and we in fact derive the performance of our algorithm for any $\{c_I, c_X, c_Y, c_Z\}\in \mathbb{R}$.

\subsection{Quantum algorithms}\label{sec:background/quantum_algos}
We analyze two variational algorithms 
to optimize the Hamiltonians defined above.
Both are inspired by the QAOA, which has a \emph{mixer} Hamiltonian and a \emph{phaser} or \emph{cost} Hamiltonian. 
In a classical problem, the phaser is exactly the Hamiltonian we wish to optimize. 
This does not easily extend to local Hamiltonian problems, as implementing $e^{i \gamma H}$ is nontrivial when $H$ contains non-commuting terms. To overcome this, we choose the phaser to be some other classical Hamiltonian that drives the QAOA.

The first algorithm is exactly the QAOA driven by the \emph{MaxCut Hamiltonian}, preparing a state $\ket{\gammav, \betav}$. We refer to this algorithm as the \emph{MC ansatz}:
\begin{align}\begin{split}\label{eq:mc_ansatz}
    &\ket{\gammav, \betav} =e^{-i \beta_p B}   e^{-i \gamma_p C_z}  \cdots e^{-i \beta_1 B} e^{-i \gamma_1 C_z} \ket{s}, \\
    &\ket{s} \defeq \ket{+}^{\otimes n}, \quad B \defeq \sum_{j=1}^n X_j, \quad C_z \defeq -\frac{1}{\sqrt{D}}\sum_{(u,v)\in E}Z_uZ_v, \\
    &\gammav \defeq (\gamma_1, \dots, \gamma_p), \quad \betav = (\beta_1, \dots, \beta_p).
\end{split}\end{align}
In the phasing operator $e^{i\gamma C_z}$, we apply a scaled and shifted version of the MaxCut Hamiltonian. 
We do this for ease of analysis: it recovers
the original QAOA operator under a rescaling of $\gammav$.

We refer to our second variational algorithm as the \emph{XY ansatz}, preparing the state $\ket{\gammav_y, \gammav_z,  \betav}$. This is the QAOA driven by a first-order Trotter expansion of the XY Hamiltonian:
\begin{align}
\ket{\gammav_y, \gammav_z,  \betav} =e^{-i \beta_p B}   e^{-i \gamma_{y_p} C_y}  e^{-i \gamma_{z_p} C_z}  \cdots e^{-i \beta_1 B} e^{-i \gamma_{y_1} C_y}  e^{-i \gamma_{z_1} C_z}  \ket{s}\label{eq:xy_ansatz}.
\end{align}
Here, $\gammav_z$, $\gammav_y$ are defined the same way as $\gammav$, with suitable subscripts $y$ and $z$. Also, $C_y$ is defined as $C_z$ above, but with all Pauli $Z_j$ replaced by $Y_j$. 

This algorithm is similar in spirit to the Hamiltonian Variational Ansatz (HVA) \cite{wecker2015}, which iteratively applies commuting sets of terms in the Hamiltonian. However, unlike in the HVA, we do not start in an eigenstate of one of the sets of commuting terms.

\subsection{Classical algorithms}\label{sec:background/classical_algos}
We study three simple classical algorithms to benchmark the quantum algorithms introduced above. 
\begin{itemize}
    \item ZERO: Prepare all qubits in the $\ket{0}$ state.
    \item MATCH: Find a maximum weight matching in $G$. For each edge in the matching, prepare the state $\ket{\psi^-}$ on the associated qubits for $H_{QMC}$ and $H_{XY}$, and $\ket{\phi^+}$ on the associated qubits for the $H_{EPR}$. Prepare the remaining qubits to the maximally mixed state.
    \item CUT: Find the maximum cut of $G$, and return the associated computational basis state.
\end{itemize}
The algorithm ZERO clearly runs in polynomial time. MATCH runs in polynomial time due to efficient algorithms for maximum weight matching \cite{edmonds1965}. Although CUT does not necessarily run in polynomial time, it upper-bounds any polynomial-time strategy that efficiently finds a large cut of $G$.
For every unweighted graph $G$ with $m$ edges, maximum matching size $M$, and maximum cut size $C$, the algorithms achieve the energies listed in \Cref{tab:classical_algos_regular_graphs}.
\begin{table}[ht]
\begin{equation*}
    \begin{array}{|c||c|c|c|}
    \hline
        & \text{ZERO}  & \text{MATCH} & \text{CUT} \\
    \hline
    \text{QMC} & 0   & (3M+m)/2 & C \\
    \text{EPR} & m &   (3M+m)/2 & C \\
    \text{XY}  & 0  & (2M+m)/2 & C \\
    \hline
\end{array} 
\end{equation*}
\caption{\footnotesize Energies obtained by the classical algorithms ZERO, MATCH, and CUT on the Hamiltonians QMC, EPR, and XY, defined on unweighted graphs.}
\label{tab:classical_algos_regular_graphs}
\end{table}

In some cases, we also compare the quantum algorithms to the worst-case algorithm of \cite{king2023}. This algorithm finds a \emph{global} solution using a semidefinite program, and locally optimizes with a short variational circuit.
Because of this global structure, we cannot easily predict the performance of this algorithm from the locally tree-like properties of random regular graphs.
However, when a graph is \emph{edge-transitive} (i.e. there exists an automorphism of the graph mapping any edge to any other edge) the output of the semidefinite program can be written down exactly. For example, this occurs for the infinite 1D ring and for infinite regular trees.
We refer to this algorithm as KING.

Throughout this work, we also use the terms \{ZERO, MATCH, CUT, KING\} to refer to the energy obtained by the corresponding algorithms on a particular problem. The problem will be clear from context.

\section{Iterations for the quantum ans\"atze}\label{sec:iterations}
We now give formulae that compute the energy obtained by the MC and XY ansatz at depth $p$ for Hamiltonians of the form in \cref{eq:2local_H_form} with high probability over random $(D+1)$-regular graphs with high-girth ($\geq 2p+1$), both for small degree $D$ and in the $D \rightarrow \infty$ limit. We define random regular graphs to be those chosen uniformly 
at random space of all labeled regular graphs \cite{wormald1999}. The formulae are largely inspired by the work of \cite{basso2022}. We further generalize the analysis in this section to more expressive ans\"atze and for larger locality Hamiltonians in \cref{apx:generic_ansatz}.

We comment briefly about why analyzing a high-girth graph is asymptotically equivalent to analyzing a random regular graph. For any constant $p$, a random regular graph has with high probability a constant number of cycles up to any size $2p+2$~\cite{WORMALD1981168}. As a result, with high probability, all but a vanishing fraction of depth-$p$ neighborhoods are locally treelike. Finally, the energy our algorithms achieve on each edge is only determined by the local neighborhood around each edge.

We denote the energy obtained by a specific ansatz A with optimal parameters $\Thetav$ at depth $p$ on a specific Hamiltonian H with $\mathrm{A}_{(p, \mathrm{H})}$. For example, $\mathrm{MC}_{(p, \mathrm{QMC})}$ denotes the  energy obtained by the MC ansatz on the QMC Hamiltonian at optimal parameters. When the Hamiltonian can be inferred from context, we simply denote the energy as (for the same example) $\mathrm{MC}_p$. There is a subtle distinction regarding the phrase ``at optimal parameters''. In the present work we obtain parameters through numerical optimization, without a certificate of global optima. Thus, we actually obtain lower bounds for $\mathrm{A}_{(p, \mathrm{H})}$. 

Following \cite{basso2022}, it is convenient to also denote the expected per-edge energy (or \emph{normalized energy}) for some state described by parameters $\Thetav$ and ansatz $A$ on Hamiltonian $H$ by

\begin{align}
	\frac{\braket{\Thetav|H|\Thetav}}{m} = E_0 + \frac{\nu_{p}(D,H, A,\Thetav)}{\sqrt{D}}, \label{eq:nu_DT}
\end{align} 

where $E_0$ is the expected normalized energy of a trivial randomized algorithm (for example, a maximally mixed state) on a single edge, given by the coefficient in front of the identity terms, and $\nu_{p}(D,H, A,\Thetav)$ is a constant that represents the expected normalized energy obtained by ansatz $A$ with some set of parameters $\Thetav$ on $H$ for $(D+1)$-regular graphs.

Exactly as in \cite{basso2022}, due to the regularity and high-girth of the graph, all distance-$p$ local neighborhoods of the graph are simply two $D$-ary trees, glued at the roots. Since QAOA at depth $p$ only sees these neighborhoods, the contribution to the expected normalized energy is the same for each neighborhood. Thus, we can write express $\nu_{p}(D,H, A,\Thetav)$ in terms of the non-identity components of the the expectation value of a single edge in the graph. For example, on the XY Hamiltonian we have, 
\begin{align}
     \nu_p(D, XY, A, \Thetav)= -\frac{\sqrt{D}}{2}\bra{\Thetav} X_u X_v + Y_uY_v \ket{\Thetav},\label{eq:nu_XY_DT}
\end{align}

and for a Hamiltonian $H$ of the form in \cref{eq:2local_H_form} we have,
\begin{align}
    \nu_p(D, H, A, \Thetav) = \frac{\sqrt{D}}{2}\bra{\Thetav} c_X X_u X_v + c_Y Y_uY_v + c_Z Z_uZ_v \ket{\Thetav} ,\label{eq:nu_H_DT}
\end{align}

for any edge $(u,v)$. We then can define the large-degree limit of the normalized energy:
\begin{align*}
    \nu_p(H, A, \Thetav) \defeq \lim_{D \rightarrow \infty} \nu_p(D, H, A, \Thetav).
\end{align*}
We use $\nu_p(H,A)$ to denote the maximal value of $\nu_p (H, A, \Thetav)$ over all parameters $\Thetav$:
\begin{align}\label{eq:nu}
	\nu_p(H, A) \defeq  \max_{\Thetav} \; \nu_p (H, A, \Thetav).
\end{align} 

\subsection{MC ansatz}\label{sec:iterations/mc_ansatz}

We first give the energy obtained by the MC ansatz.
We defer the proof to \cref{apx:iteration_proofs/mc}. 
\begin{theorem}[Energy of the MC ansatz]\label{thm:mc_ansatz_energy}
The energy obtained by the MC ansatz at depth $p$ with parameters $(\gammav, \betav)$ on $(D+1)$-regular graphs with girth $>2p+1$ on a $2$-local Hamiltonian in the form of \cref{eq:2local_H_form} is given by the iteration

\begin{align*}
    &\braket{\paramv | H | \paramv}=c_I m + c_X \braket{\paramv | X_L X_R | \paramv} + c_Y \braket{\paramv | Y_L Y_R | \paramv} + c_Z \braket{\paramv | Z_L Z_R | \paramv}, \\
    &\braket{\paramv | X_L X_R | \paramv} = \sum_{\av, \bv} f'(\av) f'(\bv) H^{(p)}_D(\av) H^{(p)}_D(\bv) \cos\left[\frac{\Gammav \cdot (\av \bv)}{\sqrt{D}}\right], \\
    &\braket{\paramv | Y_L Y_R | \paramv} = i\sum_{\av, \bv} a_0 b_0 f'(\av) f'(\bv) H^{(p)}_D(\av) H^{(p)}_D(\bv) \sin\left[\frac{\Gammav \cdot (\av \bv)}{\sqrt{D}}\right], \\
    &\braket{\paramv | Z_L Z_R | \paramv} = -i \sum_{\av, \bv} a_0 b_0 f(\av) f(\bv) H^{(p)}_D(\av) H^{(p)}_D(\bv) \sin\left[\frac{\Gammav \cdot (\av \bv)}{\sqrt{D}}\right],
\end{align*}

where $\av$, $\bv$ are $\pm 1$-valued, length-$2p+1$ bitstrings indexed as $(a_1, a_2, \ldots, a_p, a_0, a_{-p}, \ldots, a_{-2}, a_{-1})$ and $\av \bv$ denotes element-wise multiplication. In the above we define

\begin{align*}
    \Gammav &\defeq (\gamma_1, \gamma_2, \ldots, \gamma_p, 0, -\gamma_p, \ldots, -\gamma_2, -\gamma_1), \\
    f(\av) &\defeq \frac{1}{2} \braket{a_1 | e^{i \beta_1 X} | a_2} \cdots \braket{ a_{p-1} | e^{i \beta_{p-1} X} | a_p} \braket{a_p | e^{i \beta_p X} | a_0} \nonumber \\
    &\quad \times \braket{a_{0} | e^{-i \beta_p X} | a_{-p}} \braket{a_{-p} | e^{-i \beta_{p-1} X} | a_{-(p-1)}} \cdots \braket{a_{-2} | e^{-i \beta_1 X} | a_{-1}}, \\
    f'(\av) &\defeq \frac{1}{2} \braket{a_1 | e^{i \beta_1 X} | a_2} \cdots \braket{ a_{p-1} | e^{i \beta_{p-1} X} | a_p} \braket{a_p | e^{i \beta_p X} | -a_0} \nonumber \\
    &\quad \times \braket{a_{0} | e^{-i \beta_p X} | a_{-p}} \braket{a_{-p} | e^{-i \beta_{p-1} X} | a_{-(p-1)}} \cdots \braket{a_{-2} | e^{-i \beta_1 X} | a_{-1}}, \\
    H_D^{(m)}(\av) &\defeq \of{\sum_{\bv}  H^{(m-1)}_D(\bv) \exp\ofb{- \frac{i}{\sqrt{D}} \Gammav  \cdot (\av \bv)} f(\bv)}^D \nonumber \\
    &= \of{\sum_{\bv}  H^{(m-1)}_D(\bv) \cos\ofb{\frac{1}{\sqrt{D}} \Gammav  \cdot (\av \bv)} f(\bv)}^D , \\
    H_D^{(0)}(\av) &\defeq 1,
\end{align*}
and where $\ket{a_i}$ represents the single qubit state with eigenvalue $a_i$ in the computational (Pauli $Z$) basis.
The time and space complexity of evaluating this iteration are $O(p 16^p)$ and $O(16^p)$, respectively.    
\end{theorem}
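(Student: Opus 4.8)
The plan is to follow the strategy of \cite{basso2022} for MaxCut QAOA, but to track the action of the circuit on a \emph{Pauli} operator $P_L P_R \in \{X_L X_R, Y_L Y_R, Z_L Z_R\}$ sitting on the two endpoints of a fixed edge, rather than on $Z_L Z_R$ alone. First I would write the expectation $\braket{\paramv|P_L P_R|\paramv}$ in the Heisenberg picture, conjugating $P_L P_R$ by all $2p$ unitaries. Since $C_z$ is diagonal in the $Z$-basis, I would insert resolutions of the identity $\sum_a \ket{a}\bra{a}$ between successive $e^{-i\beta B}$ and $e^{-i\gamma C_z}$ layers on each of the two ``sides'' of the light cone; this turns the amplitude into a sum over a $\pm1$-string $\av$ of length $2p+1$ (one index per layer going ``up'' the tree from the edge and one ``down''), with the $\braket{a_j|e^{i\beta_j X}|a_{j+1}}$ transfer factors producing $f$ or $f'$. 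The role of $f$ vs.\ $f'$ is whether the central Pauli commutes or anticommutes with $B$/flips the central bit: for $Z_L Z_R$ the central operator is diagonal so we get $f$ and a factor $a_0 b_0$ from the $Z$-eigenvalues, for $X_L X_R$ it flips the central bit so we get $f'$ and no sign factor, and for $Y_L Y_R$ we get both the bit-flip ($f'$) and the $i a_0$ eigenvalue factor — matching the three displayed formulas including the $i$ and $a_0 b_0$ prefactors.

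Next I would handle the phaser contributions. The operator $e^{-i\gamma C_z}$ contributes a phase $\exp[\tfrac{i\gamma}{\sqrt D}\sum_{(x,y)\in E} a_x a_y]$ on each basis string. Because the graph has girth $>2p+1$, the depth-$p$ neighbourhood of the edge $(L,R)$ is exactly two $D$-ary trees glued at an edge, so the edges split into: the central edge $(L,R)$, which gives the factor $\exp[\tfrac{i}{\sqrt D}\Gammav\cdot(\av\bv)]$ (the $\Gammav$ vector encodes $+\gamma_j$ on the ``forward'' half and $-\gamma_j$ on the ``backward'' half because the bra and ket pick up opposite phases); and the tree edges hanging off $L$ and off $R$, each of which is the root of an independent $D$-ary subtree. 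Summing over the internal basis strings of one such subtree factorizes into a $D$-th power of the same sum over one child, which is precisely the recursion defining $H_D^{(m)}(\av)$, with $H_D^{(0)}\equiv1$ at the leaves. The reality of $H_D^{(m)}$ (hence the replacement of $\exp$ by $\cos$) follows from the $\av\mapsto-\av$ symmetry of $f$ together with the fact that $\sum_\bv(\cdots)\sin[\cdots]f(\bv)$ is odd under this involution; I would state this as a small lemma. Finally, at the central edge the $\cos$/$\sin$ split comes from pairing $\av$ with $-\av$: the $X$ term survives as $\cos$, while the $Y$ and $Z$ terms, carrying the odd factor $a_0 b_0$, survive as $\sin$.

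For the complexity claim: $\av,\bv$ range over $\{\pm1\}^{2p+1}$, so there are $4^{2p+1}=O(16^p)$ pairs in the outer sums, giving $O(16^p)$ space to tabulate $H_D^{(p)}$ over all $\av$ (length-$(2p+1)$ strings, $2^{2p+1}=O(4^p)$ of them — the table of $H_D^{(p)}$ itself is only $O(4^p)$, but the outer double sum over $(\av,\bv)$ dominates at $O(16^p)$). Computing each level of the recursion $H_D^{(m)}\to H_D^{(m+1)}$ is a sum over $\bv$ for each $\av$, i.e.\ $O(4^p\cdot 4^p)=O(16^p)$ work, repeated $p$ times, for $O(p\,16^p)$ total; the final contraction of the three expectation sums is another $O(16^p)$. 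So time $O(p\,16^p)$ and space $O(16^p)$, as stated.

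The main obstacle I expect is bookkeeping the phase signs correctly — i.e.\ justifying the precise form of $\Gammav$ (why the backward indices get $-\gamma_j$), the placement of $a_0 b_0$, and the $f$-vs-$f'$ dichotomy — all of which hinge on carefully conjugating the central Pauli through $e^{-i\beta_p B}$ and tracking the bra/ket phase conventions; once the transfer-matrix/tree-factorization structure is set up, the recursion and the $\cos$-reduction are routine adaptations of \cite{basso2022}.
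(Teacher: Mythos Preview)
Your proposal is correct and follows essentially the same approach as the paper: insert computational-basis resolutions of identity between layers, use the action of $X,Y,Z$ on $\ket{z}$ to produce the $f$/$f'$ and $a_0b_0$ factors, reindex the sum vertex-by-vertex, factorize over the $D$-ary tree neighbourhood to obtain the $H_D^{(m)}$ recursion, and finally use the $\av\mapsto-\av$ parity (with $f(-\av)=f(\av)$, $f'(-\av)=f'(\av)$, $H_D^{(m)}(-\av)=H_D^{(m)}(\av)$) to reduce the central $\exp$ to $\cos$/$\sin$. Your anticipated obstacle---the sign bookkeeping for $\Gammav$, the $a_0b_0$ placement, and the $f$ vs.\ $f'$ split---is exactly where the paper spends its effort, and your explanation of each is accurate.
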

We also take the $D \to \infty$ limit of the energy in the following corollary.
\begin{corollary}[Infinite degree limit of MC ansatz energy]\label{cor:mc_inf_d_energy}
    As $D\rightarrow \infty$, the expected normalized energy obtained by the MC ansatz at depth $p$ with parameters $(\gammav, \betav)$ of order $\Theta(1)$ on $(D+1)$-regular graphs with girth $>2p+1$ on the QMC, EPR, and XY Hamiltonians satisfy the following:
    \begin{enumerate}
        \item $\braket{\gammav, \betav | H_{EPR}|\gammav, \betav} \leq ZERO$.
        \item $\nu_p(QMC, MC, \gammav, \betav) \leq \nu_p(XY, MC, \gammav, \betav)$.
        \item $\nu_p(XY, MC, \gammav, \betav) 
        = \frac{i}{2} \sum_{j=-p}^{p} \Gamma_j  \of{G_{0,j}^{(m)}}^2- \of{G_{0,j}^{'(m)}}^2$.
    \end{enumerate}

    where for the last item we defined
    \begin{align*}
        H^{(p)}(\av) &\defeq \exp\ofb{-\frac{1}{2}\sum_{j,k=-p}^p \Gamma_j \Gamma_k a_j a_k G_{j,k}^{(p-1)}}, \\
        G_{j,k}^{(m)} &\defeq \sum_{\av} f(\av) a_ja_k\exp\ofb{-\frac{1}{2}\sum_{j',k'} \Gamma_{j'} \Gamma_{k'} a_{j'} a_{k'} G_{j',k'}^{(m-1)}}, \\
        G_{j,k}^{'(m)} &\defeq \sum_{\av} f'(\av) a_ja_k\exp\ofb{-\frac{1}{2}\sum_{j',k'} \Gamma_{j'} \Gamma_{k'} a_{j'} a_{k'} G_{j',k'}^{(m-1)}}, \\ 
        G_{j,k}^{(1)}&\defeq G_{j,k}^{'(1)}\defeq 1.
    \end{align*}
Here, $\av$, $f(\av)$, and $\Gammav$ are defined as in \cref{thm:mc_ansatz_energy}. Finally, the iteration can be evaluated in $\mathcal{O}(p^2 4^p)$ time and $\mathcal{O}(4^p)$ space or in $\mathcal{O}(p^4 4^p)$ time and $\mathcal{O}(p^2)$ space.
\end{corollary}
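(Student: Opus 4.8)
```latex
\textbf{Proof proposal.} The plan is to take the $D \to \infty$ limit of each of the three iteration formulae in \Cref{thm:mc_ansatz_energy}, treating $(\gammav, \betav)$ as fixed constants and carefully tracking which terms survive and which vanish. The central observation is that $\Gammav \cdot (\av \bv) = \sum_{j} \Gamma_j a_j b_j$ is a sum of $2p+1$ terms each of size $\Theta(1)$ (since $\Gamma_0 = 0$, really $2p$ terms), so the argument of each cosine and sine is $\Theta(1/\sqrt{D})$. Expanding, $\cos[\Gammav\cdot(\av\bv)/\sqrt{D}] = 1 - \frac{1}{2D}(\Gammav\cdot(\av\bv))^2 + O(D^{-2})$ and $\sin[\Gammav\cdot(\av\bv)/\sqrt{D}] = \frac{1}{\sqrt{D}}\Gammav\cdot(\av\bv) + O(D^{-3/2})$. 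The key bookkeeping step is then to understand the limit of $H_D^{(m)}(\av)$: writing it as $\big(\sum_{\bv} H_D^{(m-1)}(\bv) f(\bv) \cos[\Gammav\cdot(\av\bv)/\sqrt{D}]\big)^D$, and using $\sum_{\bv} f(\bv) = 1$ (which follows since $f$ is a product of transition amplitudes summing to $1$ — this needs a short check), the sum inside is $1 - \frac{1}{2D}\sum_{\bv} H_D^{(m-1)}(\bv) f(\bv)(\Gammav\cdot(\av\bv))^2 + \cdots$, so the $D$-th power converges to $\exp\!\big[-\frac{1}{2}\sum_{j,k}\Gamma_j\Gamma_k a_j a_k \sum_{\bv} f(\bv) b_j b_k H^{(m-1)}(\bv)\big]$. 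This is exactly the definition of $H^{(p)}(\av)$ with $G_{j,k}^{(m)}$ as stated, provided one checks inductively that $H_D^{(m-1)}(\bv) \to H^{(m-1)}(\bv)$ uniformly enough to pass the limit through the finite sum over $\bv$, and that the base case $H_D^{(0)} = 1$ matches $G^{(1)}_{j,k} = 1$. The $G'$ iteration arises identically but with an $f'$ insertion at the outermost layer only (matching how $f'$ vs.\ $f$ appears in the stated energy formulae).

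For item 3, I would substitute these limits into the $XY$ energy $\nu_p(XY, MC, \gammav,\betav) = -\frac{\sqrt D}{2}\braket{\cdot|X_LX_R + Y_LY_R|\cdot}$. The $X_LX_R$ term has $\cos \to 1$ to leading order, so $\braket{X_LX_R} \to \big(\sum_\av f'(\av) H^{(p)}(\av)\big)^2 + O(1/D)$; one must check this leading constant cancels against something, or more precisely that the $\sqrt D$ prefactor forces us to extract the $O(1/\sqrt D)$ piece. For the $Y_LY_R$ term, $\sin \to \frac{1}{\sqrt D}\Gammav\cdot(\av\bv)$ at leading order, giving a clean $1/\sqrt D$ that cancels the prefactor; combined with the $a_0 b_0$ insertion and the product structure, $-\frac{\sqrt D}{2}\braket{Y_LY_R} \to -\frac{i}{2}\sum_j \Gamma_j \big(\sum_\av a_0 a_j f'(\av) H^{(p)}(\av)\big)\big(\sum_\bv b_0 b_j f'(\bv) H^{(p)}(\bv)\big)$. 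The cross-terms $\sum_\av a_0 a_j f'(\av)H^{(p)}(\av)$ should be recognized as (closely related to) $G^{'(m)}_{0,j}$, and similarly the surviving $O(1/\sqrt D)$ correction from $X_LX_R$ (which comes from the next-order term in the expansion of the cosines hidden inside $H_D^{(p)}$) should produce the $\big(G^{(m)}_{0,j}\big)^2$ piece. Reconciling signs and the $f$ vs.\ $f'$ distinction across the two contributions to land exactly on $\frac{i}{2}\sum_{j=-p}^p \Gamma_j\big[(G^{(m)}_{0,j})^2 - (G^{'(m)}_{0,j})^2\big]$ is the delicate part.

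For items 1 and 2, these are inequalities and should be lighter. For item 1, on $H_{EPR}$ we have $E_0 = c_I = 1/2$ per edge relative to the normalization in \Cref{tab:classical_algos_regular_graphs} (ZERO achieves $m$, i.e.\ per-edge energy $1$), so the claim $\braket{H_{EPR}} \le \mathrm{ZERO} = m$ amounts to showing $c_X\braket{X_LX_R} + c_Y\braket{Y_LY_R} + c_Z\braket{Z_LZ_R} \le m/2$ in the limit — I would argue the non-identity contribution is $O(\sqrt D) \cdot$ (something), but since the degree grows and the per-edge correction is the $\nu_p/\sqrt D$ term, one shows $\nu_p(EPR, MC, \cdot) \le 0$ in the $D\to\infty$ limit using the explicit limit forms (likely the relevant combination of $G, G'$ terms has a definite sign because of monogamy-of-entanglement-type cancellation, consistent with the discussion in the introduction). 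For item 2, comparing $\nu_p(QMC, MC, \cdot)$ to $\nu_p(XY, MC,\cdot)$: from \cref{eq:nu_H_DT}, $\nu_p(QMC) = \frac{\sqrt D}{2}\braket{-X_LX_R - Y_LY_R - Z_LZ_R}$ while $\nu_p(XY) = \frac{\sqrt D}{2}\braket{-Y_LY_R - Z_LZ_R}$, so the difference is $-\frac{\sqrt D}{2}\braket{X_LX_R}$, and I would show $\braket{X_LX_R} \ge 0$ in the relevant limit (its leading term is a perfect square $\big(\sum f' H^{(p)}\big)^2 \ge 0$, and one checks the $1/\sqrt D$ correction doesn't flip this after multiplying by $\sqrt D$ — in fact the $\sqrt D$-scaled quantity has a definite sign).

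\textbf{Main obstacle.} The hard part will be item 3: correctly identifying the $O(1/\sqrt D)$ subleading term of $\braket{X_LX_R}$ — which requires expanding the cosines \emph{inside} the nested $H_D^{(p)}$ recursion to second order and tracking how that correction propagates through all $p$ levels — and matching it, with the right sign, against the leading-order $\sin$ contribution from $\braket{Y_LY_R}$ so that everything collapses into the compact $(G_{0,j}^{(m)})^2 - (G_{0,j}^{'(m)})^2$ form. Verifying the complexity claims ($\mathcal{O}(p^2 4^p)$ time / $\mathcal{O}(4^p)$ space, or the time-space tradeoff) is then routine: the sums over $\av$ range over $4^p$ bitstrings (length $2p+1$ with $a_0$ effectively free or fixed), each $G_{j,k}^{(m)}$ update touches $O(p^2)$ pairs $(j,k)$, and there are $O(p)$ recursion levels; the space-efficient variant recomputes the $\av$-sums on the fly rather than tabulating $H^{(m)}(\av)$.
```
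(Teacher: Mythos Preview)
Your treatment of the $H_D^{(m)} \to H^{(m)}$ limit (via $\cos[\cdot/\sqrt D] = 1 - \tfrac{1}{2D}(\Gammav\cdot(\av\bv))^2 + \cdots$, combined with $\sum_{\bv} f(\bv) H^{(m-1)}(\bv) = 1$ and raising to the $D$-th power) is exactly what the paper does, and your approach to item~2 is also right in spirit: the difference $\nu_p(QMC)-\nu_p(XY)$ is $-\tfrac{\sqrt D}{2}\braket{X_LX_R}$, whose leading term is the square $\bigl(\sum_{\av} f'(\av) H^{(p)}(\av)\bigr)^2$. One point you gloss over is that this sum is a priori complex, so its square need not be nonnegative; the paper invokes a separate lemma (\Cref{lemma:f_prime_h_sum_is_real}) to show it is real.

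However, there are two genuine gaps. For item~1, you reduce the claim to ``$\nu_p(EPR,\mathrm{MC},\cdot) \le 0$'', but this is false: the $c_X$-term in $\nu_p$ carries an explicit $\sqrt D$ prefactor with $c_X = +\tfrac12 > 0$, so $\nu_p(EPR)$ is generically positive and grows like $\sqrt D$. What actually needs to be shown is $\nu_p/\sqrt D \le \tfrac12$. The paper's argument is not algebraic at all: as $D\to\infty$ the $YY$ and $ZZ$ contributions to the per-edge energy are $O(1/\sqrt D)$ and vanish, so one is left evaluating the ansatz on the Hamiltonian $\tfrac12\sum_e (I_uI_v + X_uX_v)$, which is trivially maximized by $\ket{+}^{\otimes n}$ with energy $m = \mathrm{ZERO}$.

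For item~3, your ``main obstacle'' is a mirage. For the XY Hamiltonian $c_X = 0$ (the displayed \cref{eq:nu_XY_DT} appears to contain a typo; the definition \cref{eq:xy_H} has only $YY$ and $ZZ$), so the $\braket{X_LX_R}$ term is simply absent and no subleading expansion through the nested recursion is needed. Both surviving contributions, $\braket{Y_LY_R}$ and $\braket{Z_LZ_R}$, carry $\sin[\Gammav\cdot(\av\bv)/\sqrt D]$ factors; to leading order these give $\tfrac{1}{\sqrt D}\Gammav\cdot(\av\bv)$, which cancels the $\sqrt D$ prefactor cleanly. Expanding $\Gammav\cdot(\av\bv)=\sum_j \Gamma_j a_j b_j$ and factorizing the double sum yields the $(G^{(p)}_{0,j})^2$ piece from $\braket{Z_LZ_R}$ (which carries $f(\av)f(\bv)$) and the $(G'^{(p)}_{0,j})^2$ piece from $\braket{Y_LY_R}$ (which carries $f'(\av)f'(\bv)$). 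The derivation is then a direct leading-order calculation, not the delicate second-order matching you anticipate.
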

In particular, \cref{cor:mc_inf_d_energy} demonstrates that 1) the energy of the MC ansatz on the EPR Hamiltonian with any angles is upper bounded by the ZERO algorithm in the large degree limit and 2) the normalized energy of the MC ansatz on the QMC Hamiltonian with any angles is upper bounded by the normalized energy of the MC ansatz on the XY Hamiltonian with the same angles.
The proof of \cref{cor:mc_inf_d_energy} is deferred to \cref{apx:iteration_proofs/mc}.
We optimize over angles in \cref{sec:results/infinite_d} to determine the optimal normalized energy for the XY Hamiltonian.

\subsection{XY ansatz}\label{sec:iterations/xy_ansatz}
For the XY ansatz, we may similarly evaluate the energy, deferring the proof to \cref{apx:iteration_proofs/xy}:
\begin{theorem}[Energy of the XY ansatz]\label{thm:xy_ansatz_energy}
The energy obtained by the XY ansatz at depth $p$ with parameters $\Thetav = (\gammav_y, \gammav_z, \betav)$ on $(D+1)$-regular graphs with girth $>2p+1$ on a $2$-local Hamiltonian in the form of \cref{eq:2local_H_form} is given by the iteration

\begin{align*}
    &\braket{\Thetav | H | \Thetav}=c_I m + c_X \braket{\Thetav | X_L X_R | \Thetav} + c_Y \braket{\Thetav | Y_L Y_R | \Thetav} + c_Z \braket{\Thetav | Z_L Z_R | \Thetav}, \\
    &\braket{\Thetav | X_L X_R | \Thetav} = \sum_{\av, \bv} f'(\av) f'(\bv) H^{(p)}_D(\av) H^{(p)}_D(\bv) \cos\left[\frac{\Gammav \cdot (\av \bv)}{\sqrt{D}}\right], \\
    &\braket{\Thetav | Y_L Y_R | \Thetav} = i\sum_{\av, \bv} a_0 b_0 f'(\av) f'(\bv) H^{(p)}_D(\av) H^{(p)}_D(\bv) \sin\left[\frac{\Gammav \cdot (\av \bv)}{\sqrt{D}}\right], \\
    &\braket{\Thetav | Z_L Z_R | \Thetav} = -i \sum_{\av, \bv} a_0 b_0 f(\av) f(\bv) H^{(p)}_D(\av) H^{(p)}_D(\bv) \sin\left[\frac{\Gammav \cdot (\av \bv)}{\sqrt{D}}\right],
\end{align*}

where $\av$, $\bv$ are $\pm 1$-valued, length-$4p+1$ bitstrings indexed as 

\begin{align}
    ({a_z}_1, {a_y}_1, {a_z}_2, {a_y}_2 \ldots, {a_z}_p, {a_y}_p, a_0, {a_y}_{-p}, {a_z}_{-p} \ldots, {a_y}_{-2}, {a_z}_{-2}, {a_y}_{-1}, {a_z}_{-1}),
\end{align}

and where we define

\begin{align*}
    \Gammav &\defeq ({\gamma_z}_1, {\gamma_y}_1, {\gamma_z}_2, {\gamma_y}_2, \ldots, {\gamma_z}_p, {\gamma_y}_p, 0, -{\gamma_y}_p, -{\gamma_z}_p, \ldots, -{\gamma_y}_1, -{\gamma_z}_1), \\
    f(\av) \defeq& \frac{1}{2} \braket{a_{z_1}|a_{y_1}} \braket{a_{y_1}|e^{i \beta_1 X_u}|a_{z_2}} \cdots \braket{a_{z_p}|a_{y_p}} \braket{a_{y_p}|e^{i \beta_p X_u}|a_{0}} \nonumber\\
    &\times \braket{a_{0}|e^{-i \beta_p X_u}|a_{y_{-p}}}\braket{a_{y_{-p}}|a_{z_{-p}}} \cdots \braket{a_{z_{-2}}|e^{-i \beta_1 X_u}|a_{y_{-1}}}\braket{a_{y_{-1}}|a_{z_{-1}}}, \\
    f'(\av) \defeq& \braket{a_{z_1}|a_{y_1}} \braket{a_{y_1}|e^{i \beta_1 X_u}|a_{z_2}} \cdots \braket{a_{z_p}|a_{y_p}} \braket{a_{y_p}|e^{i \beta_p X_u}|-a_{0}} \\
    &\times \braket{a_{0}|e^{-i \beta_p X_u}|a_{y_{-p}}}\braket{a_{y_{-p}}|a_{z_{-p}}} \cdots \braket{a_{z_{-2}}|e^{-i \beta_1 X_u}|a_{y_{-1}}}\braket{a_{y_{-1}}|a_{z_{-1}}},  \\
    H_D^{(m)}(\av) &\defeq \of{\sum_{\bv}  H^{(m-1)}_D(\bv) \exp\ofb{- \frac{i}{\sqrt{D}} \Gammav  \cdot (\av \bv)} f(\bv)}^D \nonumber \\
    &= \of{\sum_{\bv}  H^{(m-1)}_D(\bv) \cos\ofb{\frac{1}{\sqrt{D}} \Gammav  \cdot (\av \bv)} f(\bv)}^D  \\
    H_D^{(0)}(\av) &\defeq 1,
\end{align*}

and where $\ket{{a_O}_i}$ represents the single qubit state with eigenvalue ${a_O}_i$ in the basis $O \in \{Y,Z\}$. The time and space complexity of evaluating this iteration are $O(p \cdot 64^p)$ and $O(64^p)$, respectively.
\end{theorem}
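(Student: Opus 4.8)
The plan is to follow the transfer-matrix / worldline derivation of \cref{thm:mc_ansatz_energy} (which itself adapts \cite{basso2022}), adding one extra layer of bookkeeping to accommodate that each layer of the XY ansatz in \cref{eq:xy_ansatz} applies a two-piece phaser $e^{-i\gamma_{y_k}C_y}e^{-i\gamma_{z_k}C_z}$, one factor diagonal in the $Z$-eigenbasis and one in the $Y$-eigenbasis. First, by linearity of $\braket{\Thetav|H|\Thetav}$ over the edge terms of \cref{eq:2local_H_form}, it suffices to compute $\braket{\Thetav|P_u P_v|\Thetav}$ for $P\in\{X,Y,Z\}$ and one fixed edge $(u,v)$, plus the trivial $c_I m$ from $\braket{\Thetav|\Thetav}=1$. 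Since the girth exceeds $2p+1$ and the graph is $(D+1)$-regular, the depth-$p$ reverse causal cone of every edge is the same fixed graph — two $D$-ary trees glued at their roots $L=u,\,R=v$ — so each edge contributes identically and the unitaries outside the cone cancel between the circuit and its conjugate; this turns the theorem into a finite computation on a tree.

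On that tree I would expand $\braket{s|U^\dagger P_L P_R\, U|s}$ by inserting a resolution of the identity between consecutive elementary factors: a $Z$-eigenbasis resolution just before each $e^{-i\gamma_{z_k}C_z}$, a $Y$-eigenbasis resolution just before each $e^{-i\gamma_{y_k}C_y}$, and likewise on the conjugated (``backward'') half. Each qubit then carries a worldline running $Z,Y,Z,Y,\dots$ through the forward half, taking a single value $a_0$ at the observable, then running back $\dots,Y,Z$ through the backward half — exactly the length-$4p+1$ bitstring with the stated index order. For one vertex, the forward-then-backward product of its mixer amplitudes $\braket{a\,|\,e^{\pm i\beta_k X}\,|\,a'}$, its basis-change overlaps $\braket{a_z\,|\,a_y}$, and its overlap with the initial $\ket{+}$ reproduces $f(\av)$ — or $f'(\av)$ when the vertex is a root on which $P$ flips the $Z$-value (i.e.\ $P=X$ or $P=Y$). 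All the phases generated by the $Z_uZ_v$ and $Y_uY_v$ factors inside the various $C_z,C_y$ on an edge whose endpoints carry worldlines $\av,\bv$ collapse to $\exp\ofb{\pm i\,\Gammav\cdot(\av\bv)/\sqrt D}$, with the sign flipping between the forward and backward halves, which is why $\Gammav$ is taken antisymmetric about its middle coordinate.

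Next I would use the absence of cycles to factorize. A non-root vertex couples to the rest of the diagram only through the single edge to its parent, so the sum over the worldlines of the subtree it roots splits into a product over its $D$ children, each contributing the same subtree-sum one level shallower; this gives the recursion $H_D^{(m)}(\av)=\of{\sum_{\bv}H_D^{(m-1)}(\bv)\exp\ofb{-i\,\Gammav\cdot(\av\bv)/\sqrt D}\,f(\bv)}^{D}$ with base case $H_D^{(0)}\equiv1$, and the imaginary part of the exponent drops after pairing $\bv$ with $-\bv$ (using the parity of $f$), yielding the cosine form. Finally I assemble the two roots $L,R$ — each attached to $D$ such subtrees — with the single $LR$ edge, its phase $\exp\ofb{\pm i\,\Gammav\cdot(\av\bv)/\sqrt D}$, and the action of $P_L P_R$ at the fold; one checks this contributes $f'(\av)f'(\bv)$ for $XX$, an extra $i\,a_0 b_0$ with $f'(\av)f'(\bv)$ for $YY$ (since $Y$ both flips and reweights the $Z$-value), and $-i\,a_0 b_0$ with $f(\av)f(\bv)$ for $ZZ$, and that the relevant worldline-flip symmetry keeps the real ($\cos$) part for $XX$ and the imaginary ($\sin$) part for $YY,ZZ$. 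This reproduces the three displayed formulas.

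I expect the main obstacle to be organizational rather than conceptual: correctly threading the doubled worldline — the interleaved $Y$- and $Z$-basis resolutions and the overlaps $\braket{a_z\,|\,a_y}$ — through the fold at the observable without misplacing a sign or a factor of $i$, and confirming that the non-commutativity of $C_y$ and $C_z$ does not spoil the subtree factorization. The latter holds because, within each layer, both operators still act only along edges of the (acyclic) local neighborhood, so the reverse causal cone and the product structure over children are identical to the MC case; only the per-vertex amplitude $f$ changes. The complexity claims then follow by counting: the worldlines have length $4p+1$, so there are $O(16^p)$ of them; evaluating the $H_D^{(m)}$ recursion and the final double sums over $\av,\bv$, after expanding $\cos\ofb{\Gammav\cdot(\av\bv)/\sqrt D}=\mathrm{Re}\prod_j\of{\cos(\Gamma_j/\sqrt D)+i\,a_j b_j\sin(\Gamma_j/\sqrt D)}$ to decouple the $\av$- and $\bv$-dependence, gives the stated $O(p\,64^p)$ time and $O(64^p)$ space, exactly as for the MC ansatz but with the worldline length raised from $2p+1$ to $4p+1$.
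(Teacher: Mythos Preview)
Your plan is the paper's plan: insert alternating $Z$- and $Y$-basis resolutions of the identity, reindex the layer-by-layer sum into per-vertex worldlines of length $4p+1$, observe that the resulting expression has the same \emph{form} as the MC-ansatz expression (just with longer $\av$, a longer $\Gammav$, and a new $f$), and then borrow the tree recursion and the final root-pair assembly verbatim. Two places deserve more care than you give them.

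First, your sentence ``the depth-$p$ reverse causal cone of every edge is the same fixed graph'' is off. Each XY layer applies \emph{two} edge-diagonal unitaries, so one layer spreads an operator by two hops; the local subgraph is a pair of $D$-ary trees of depth $2p$, and the paper's proof runs the $H_D^{(m)}$ recursion for $2p$ steps (writing $H_D^{(2p)}$ in the final sums), not $p$. This also means the girth hypothesis as stated in the theorem is weaker than what the argument actually uses.

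Second, and more substantively, the step ``the imaginary part of the exponent drops after pairing $\bv$ with $-\bv$ (using the parity of $f$)'' hides a genuine lemma. In the MC case $f(-\av)=f(\av)$ is immediate because $\braket{-a|e^{i\beta X}|-a'}=\braket{a|e^{i\beta X}|a'}$. Here it is \emph{false} factor by factor: from $\braket{y|z}=i^{(yz-y)/2}/\sqrt 2$ one gets $\braket{-a_y|-a_z}/\braket{a_y|a_z}=i^{a_y}$, not $1$. The paper proves $f(-\av)=f(\av)$ and $f'(-\av)=f'(\av)$ by pairing each $\braket{a_z|a_y}$ with the adjacent $\braket{a_y|e^{i\beta X}|a_{z'}}$ so that the stray $i^{\pm a_y}$ phases cancel in pairs; the same pairing underlies the replacement of $\exp$ by $\cos$ in the $H_D^{(m)}$ recursion and the reduction of the root sums to their $\sin$/$\cos$ forms. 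Your outline would go through once you supply this, but as written it treats a nontrivial cancellation as obvious.
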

Much like the MC ansatz, we can take the $D \rightarrow \infty$ limit to get the following corollary.
\begin{corollary}[Infinite degree limit of MC ansatz energy]\label{cor:xy_inf_d_energy}
    As $D\rightarrow \infty$, the expected normalized energy obtained by the XY ansatz at depth $p$ with parameters $\Thetav = (\gammav_y, \gammav_z, \betav)$ of order $\Theta(1)$ on $(D+1)$-regular graphs with girth $>2p+1$ on the QMC, EPR, and XY Hamiltonians satisfy the following:
    \begin{enumerate}
        \item $\braket{\Thetav | H_{EPR}|\Thetav} \leq ZERO$.
        \item $\nu_p(QMC, XY, \Thetav) \leq \nu_p(XY, XY, \Thetav)$.
        \item $\nu_p(XY, XY, \Thetav) 
        = \frac{i}{2} \sum_{j=-p}^{p} \Gamma_j  \of{G_{0,j}^{(m)}}^2- \of{G_{0,j}^{'(m)}}^2$.
    \end{enumerate}

where in the last line we define
    \begin{align}
        H^{(p)}(\av) &\defeq \exp\ofb{-\frac{1}{2}\sum_{j,k=-p}^p \Gamma_j \Gamma_k a_j a_k G_{j,k}^{(p-1)}}, \\
        G_{j,k}^{(m)} &\defeq \sum_{\av} f(\av) a_ja_k\exp\ofb{-\frac{1}{2}\sum_{j',k'} \Gamma_{j'} \Gamma_{k'} a_{j'} a_{k'} G_{j',k'}^{(m-1)}}, \\
        G_{j,k}^{'(m)} &\defeq \sum_{\av} f'(\av) a_ja_k\exp\ofb{-\frac{1}{2}\sum_{j',k'} \Gamma_{j'} \Gamma_{k'} a_{j'} a_{k'} G_{j',k'}^{(m-1)}}, \\ 
        G_{j,k}^{(1)}&\defeq G_{j,k}^{'(1)}\defeq 1.
    \end{align}
Here, $\av$, $f(\av)$, and $\Gammav$ are defined as in \cref{thm:xy_ansatz_energy}. The iteration can be evaluated in $\mathcal{O}(p^5 16^p)$ time and $\mathcal{O}(p^2)$ space.
\end{corollary}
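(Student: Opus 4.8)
The plan is to take the $D\to\infty$ limit of the finite-$D$ iteration in \Cref{thm:xy_ansatz_energy}, following the same route as the (deferred) proof of \Cref{cor:mc_inf_d_energy}. The only structural change is that the configuration strings $\av$ now have length $4p+1$ instead of $2p+1$, and the single-edge weights $f,f'$ carry the extra basis-change overlaps $\braket{{a_z}_k|{a_y}_k}$ coming from the $Y,Z$-Trotter structure of the phaser; since these overlaps have modulus $\tfrac{1}{\sqrt 2}$, they do not affect any of the asymptotic estimates, and the refined indexing of $\Gammav$ is purely cosmetic.

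The first and main step is to show that, for parameters of order $\Theta(1)$, the nested-power quantity $H_D^{(m)}(\av)$ converges pointwise to the Gaussian-type object $H^{(m)}(\av)$ of the statement. I would prove this by induction on $m$ from the base case $H_D^{(0)}\equiv 1$. The inductive step combines three ingredients: (i) the normalization identity $\sum_{\bv} f(\bv)\,H_D^{(m)}(\bv) = 1$, which holds for every $m$ and $D$ because $\ket{\Thetav}$ is a unit vector --- equivalently, by summing out the interior indices so the forward and backward chains of $e^{\pm i\beta_k X_u}$ factors, together with the basis-change overlaps, telescope; (ii) the Taylor expansion $\cos\!\of{\Gammav\cdot(\av\bv)/\sqrt D} = 1 - \tfrac{1}{2D}\of{\Gammav\cdot(\av\bv)}^2 + O(D^{-2})$, licensed by the $\Theta(1)$ scaling of the parameters; and (iii) the elementary limit $(1 + c/D + o(1/D))^D\to e^{c}$. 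Combining (i) and (ii), the base of the $D$-th power defining $H_D^{(m)}(\av)$ equals $1 - \tfrac{1}{2D}\sum_{\bv} f(\bv)\,H_D^{(m-1)}(\bv)\of{\Gammav\cdot(\av\bv)}^2 + O(D^{-2})$, so by (iii) and the inductive hypothesis $H_D^{(m)}(\av)\to \exp\!\ofb{-\tfrac12\sum_{j,k}\Gamma_j\Gamma_k a_j a_k\, G_{j,k}^{(m-1)}} = H^{(m)}(\av)$, where $G_{j,k}^{(m-1)} \defeq \sum_{\bv} f(\bv)\,H^{(m-1)}(\bv)\,b_j b_k$; unrolled one level this is exactly the recursion stated in the corollary. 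To pass the limit through the finite sums over $\av,\bv$ I would record the uniform bound $|H_D^{(m)}(\av)|\le 1$.

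With the limiting $H^{(p)}$ in hand, I would substitute back into the three correlator formulas of \Cref{thm:xy_ansatz_energy}. Since $\cos\!\of{\cdot/\sqrt D}\to 1$, the double sum for $\braket{\Thetav|X_LX_R|\Thetav}$ collapses to the perfect square $\of{\sum_\av f'(\av)\,H^{(p)}(\av)}^2$; this gives item~2, because $H_{QMC}$ and $H_{XY}$ differ only in the $-X_uX_v$ term, so in the limit $\nu_p(QMC,XY,\Thetav)-\nu_p(XY,XY,\Thetav)$ is a negative multiple of this nonnegative quantity. For $\braket{\Thetav|Y_LY_R|\Thetav}$ and $\braket{\Thetav|Z_LZ_R|\Thetav}$ I would use $\sin(x/\sqrt D) = x/\sqrt D + O(D^{-3/2})$ together with the factorization $a_0 b_0\of{\Gammav\cdot(\av\bv)} = \sum_j \Gamma_j\,(a_0 a_j)(b_0 b_j)$ to write the rescaled double sums as $i\sum_j\Gamma_j\of{G_{0,j}^{'(p)}}^2$ and $-i\sum_j\Gamma_j\of{G_{0,j}^{(p)}}^2$, where $G_{0,j}^{(p)} = \sum_\av a_0 a_j\, f(\av)\,H^{(p)}(\av)$ (and likewise with $f'$); plugging these, with the XY coefficients, into the definition of $\nu_p$ in \eqref{eq:nu_DT} and \eqref{eq:nu_H_DT} and cancelling the $\sqrt D$ gives item~3. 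Item~1 is the same substitution applied to $H_{EPR}$: the $Y_LY_R$ and $Z_LZ_R$ pieces vanish like $1/\sqrt D$, so in the strict limit $\braket{\Thetav|H_{EPR}|\Thetav}/m$ reduces to an expression in the single perfect-square term, which one checks does not exceed the per-edge ZERO value $1$.

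The part I expect to require the most care is making the limit interchange rigorous --- verifying that the $O(D^{-2})$ remainders in the cosine expansion do not accumulate through the $p$ nested $D$-th powers (a uniform-in-$\av$ bound suffices), and tracking the one-step superscript shift between the theorem's $H_D^{(m)}$ (with $H_D^{(0)}\equiv 1$) and the corollary's $H^{(m)},G^{(m)}$ (with $G^{(1)}\equiv 1$). Establishing that the quadratic forms in the exponents have nonnegative real part --- the fact behind both the uniform bound $|H_D^{(m)}|\le 1$ and the EPR inequality, and behind the reality of the perfect-square terms --- is where I would budget most of the effort. The complexity claim is then a routine count: there are $p$ recursion levels and $O(p^2)$ index pairs $(j,k)$, each $G_{j,k}^{(m)}$ is a sum over the $2^{4p+1} = \Theta(16^p)$ strings, and evaluating the exponent $\sum_{j',k'}\Gamma_{j'}\Gamma_{k'}a_{j'}a_{k'}\,G_{j',k'}^{(m-1)}$ costs $O(p^2)$ per string, giving $O(p^5 16^p)$ time; retaining only the current level's $G$-matrix gives $O(p^2)$ space.
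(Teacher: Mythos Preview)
Your proposal follows essentially the same route as the paper: the paper reduces the XY-ansatz limit to the MC-ansatz argument verbatim (under $p\mapsto 2p$ and the re-definitions of $\av,\Gammav,f$), invoking the normalization identity $\sum_{\bv} f(\bv) H_D^{(m)}(\bv)=1$ and the cosine expansion to obtain the Gaussian limit $H^{(m)}$, and then reading off items~1--3 exactly as you describe.

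One point to correct. For item~2 you need $\big(\sum_\av f'(\av) H^{(p)}(\av)\big)^2\ge 0$, i.e.\ that the inner sum is \emph{real}. You attribute this (and the EPR bound, and the putative bound $|H_D^{(m)}|\le 1$) to ``nonnegative real part of the quadratic forms in the exponents'', but that mechanism does not give reality of the $f'$-sum: positivity of $\sum_{j,k}\Gamma_j\Gamma_k a_j a_k G_{j,k}$ only controls $|H^{(m)}(\av)|$, not the phase of a weighted sum against the complex weights $f'(\av)$. The paper instead proves reality via a conjugation symmetry (\Cref{lemma:f_prime_h_sum_is_real}): with $\ddot{\av}$ the index-reversal and $g$ the map flipping $a_0$, one has $H_D^{(m)}(\ddot{\av})=H_D^{(m)}(\av)^*$ and $f'(g(\ddot{\av}))=f'(\av)^*$, while $H_D^{(m)}(g(\cdot))=H_D^{(m)}(\cdot)$ since $\Gamma_0=0$; pairing $\av$ with $\ddot{\av}$ then shows the sum equals its own conjugate. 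This is the piece you should budget effort on, and the route is combinatorial rather than via the quadratic form. Relatedly, the uniform bound $|H_D^{(m)}(\av)|\le 1$ is not established in the paper and is not needed: the outer sums over $\av,\bv$ are finite, so pointwise convergence of $H_D^{(m)}$ suffices to pass the limit.
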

In particular, \cref{cor:xy_inf_d_energy} demonstrates that 1) the energy of the XY ansatz on the EPR Hamiltonian with any angles is upper bounded by the ZERO algorithm in the large degree limit and 2) the normalized energy of the XY ansatz on the QMC Hamiltonian with any angles is upper bounded by the normalized energy of the XY ansatz on the XY Hamiltonian with the same angles.
The proof of \cref{cor:xy_inf_d_energy} is deferred to \cref{apx:iteration_proofs/xy}.
We optimize over angles in \cref{sec:results/infinite_d} to determine the optimal normalized energy for the XY Hamiltonian.

\section{Results}\label{sec:results}
We benchmark the performance of our variational algorithms (MC ansatz and XY ansatz) against simple classical approaches and prior work.
We focus on both $(D+1)$-regular graphs for $2 \le D+1 \le 5$ and in the $D \to \infty$ limit.
We summarize our results in \cref{fig:optimal_algs}, which depicts the best algorithm at each value of $d$ and for each of the three problems we consider.

We investigate small-degree graphs in \Cref{sec:results/small_d} and the infinite-degree limit in \Cref{sec:results/infinite_d}. Before going into more detail, we comment on how we arrived at these numerical values.
\begin{figure}
    \centering
    \includegraphics[width=0.85\linewidth]{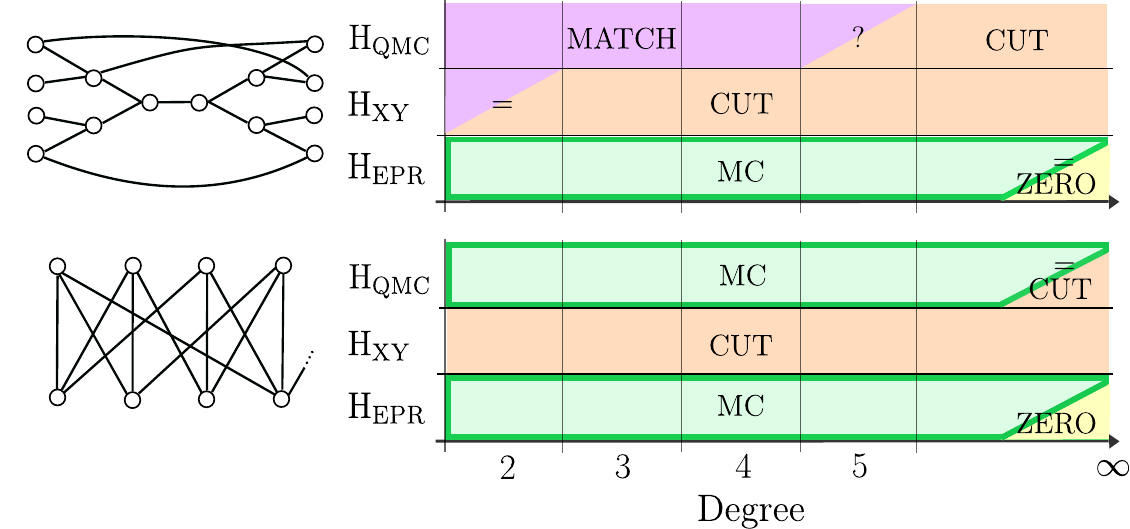}
    \caption{\footnotesize Optimal algorithms for the QMC, EPR, and XY Hamiltonians on random regular graphs (top) and random regular bipartite graphs (bottom). The \emph{green boxed regions} indicate regimes where our \emph{quantum algorithms outperform the classical algorithms} ZERO, MATCH, and CUT. For the MC ansatz we use $p=5$ when degree $\in [2,5]$, and $p=10$ at infinite degree. For the XY ansatz we use $p=2$ when degree $\in [2,5]$, and $p=4$ at infinite degree.}
    \label{fig:optimal_algs}
\end{figure}
\paragraph{Variational algorithms}
The iterations in \Cref{sec:iterations} depend on parameters $\Thetav$. Here, we numerically optimize over all choices of parameters. To achieve this, we implemented the iterations in Python, using both CPU and CuPy (for GPU implementation). Parameters were optimized by selecting a constant number of initial points (typically $20-50$) uniformly at random from range $[-1,1]$ for each parameter, optimizing with the BFGS algorithm for each trial, then choosing the best energy among trials. The code to reproduce these results is available at \cite{sud2024}. For the MC ansatz at $p=1$, we also derive the expected energy in closed form in \Cref{apx:mcansatz_onelayer}. In fact, in \cref{eq:p1_xyz_exp_vals_analytic}, we compute the \emph{exact} energy on \emph{any} unweighted graph for each Hamiltonian.

\begin{remark}[Last mixing layer commutes with QMC and XY Hamiltonians]\label{rem:mixer_commute}
    The last mixing layer $e^{-i\beta_p B}$ commutes with the XY and QMC Hamiltonian, and thus has no effect on the energy. Because of this, we leave out the last parameter $\beta_p$ or set $\beta_p=0$ when presenting results for these cases. The mixing layer does not commute with the EPR Hamiltonian.
\end{remark}

To verify the accuracy of our code and analysis of the MC and XY ans\"atze, we compared our calculations to that of exact statevector simulations on regular subgraphs of equal depth. This was used to cross-check the MC ansatz at depth $2$ for $D=2$ and depth $1-4$ for $D=1$. The XY ansatz was cross-checked by setting all $\gamma_y$ to $0$ and comparing to the MC ansatz at the same depth. This was done at depths $1$, $2$, and $3$ for a set of random parameters at infinite $D$ and at depth $1$ for a set of random parameters for $D=3$. We also verified that the depth-1 results for the MC ansatz agreed with the analytically-derived formula for general graphs given in \cref{eq:p1_reg_triangle_free_xyz_exp_vals_analytic}.

\paragraph{Classical algorithms}
For each local Hamiltonian problem, we compare the ans\"atze to the three simple classical algorithms ZERO, MATCH, and CUT (as described in \cref{sec:background/classical_algos}). 
On a $(D+1)$-regular graph, we have exactly $m=\frac{(D+1)n}{2}$ edges.
We can then use \Cref{tab:classical_algos_regular_graphs} to compute the energy, once we specify the maximum matching size $M$, and the maximum cut size $C$.

We handle the $2$-regular case separately from the other cases. As discussed in \cref{sec:iterations}, we define random regular graphs to be chosen uniformly over all labeled regular graphs. Under this definition, the graphs are not required to be connected. A random graph of degree $D + 1 = 2$ can be seen as a random permutation of $n$ elements. With high probability, there are at most $O(\log n)$ disjoint cycles, and so we can whp construct a matching of size at least $\frac{n}{2} - O(\log n)$. Similarly, we can whp construct a cut of size at least $\frac{n}{2} - O(\log n)$.

Now we handle random $(D+1)$-regular graphs when $(D+1) \ge 3$. With high probability over the graph, a matching of size $\frac{n}{2}-O(\log n)$ can be found in time polynomial in $n$~\cite{anastos2021}. 
When $D \rightarrow \infty$, we can efficiently find the maximum cut whp over the graph~\cite{alaoui2023}. This has expected size $C=\frac{n(D+1)}{4}+n\frac{P_\star \sqrt{D+1}}{2}$~\cite{alaoui2023}, where $P_\star \approx 0.7631...$ is derived via the Parisi formula from spin glass theory. 
At finite degree, we use known upper bounds on the expected maximum cut size from~\cite{coja-oghlan2020}, and lower-bound C with an explicit algorithm finding good cuts of high-girth regular graphs~\cite{thompson2022explicit}.

\subsection{Small degree}\label{sec:results/small_d}

We consider the QMC, XY, and EPR Hamiltonians on $(D+1)$-regular graphs for each $D\in\ofc{1,2,3,4}$.
In every situation, we optimize the parameters on the MC ansatz for $p=\ofc{1,2,3,4,5}$ and on the XY ansatz for $p=\ofc{1,2}$. We compare them to the values of classical approaches $ZERO$, $MATCH$, and $CUT$ in \Cref{tab:small_d}. Optimal parameters for the ans\"atze are given in \cite{sud2024}.

\begin{table}[t]
\centering
\resizebox{\columnwidth}{!}{
\setlength\tabcolsep{.5mm}
\renewcommand{\arraystretch}{1.4} 
\begin{tabular}{|cl|ccccccc|ccc|}
\hline
& & $MC_{(1,H)}$ & $MC_{(2,H)}$ & $MC_{(3,H)}$ & $MC_{(4,H)}$ & $MC_{(5,H)}$ & $XY_{(1,H)}$ & $XY_{(2,H)}$ &\text{ZERO} & \text{CUT} & \text{MATCH}\\
\hline
\multirow{3}{*}{$D=1$} 
  & $H=\frac{\text{QMC}}{m}$ & 0.5       & 0.8119     & 0.9117     & 1.0134     & 1.0554     & 0.5       & 0.9941     & 0 & 1 & \cellcolor{lgreen}1.25 \\
  & $H=\frac{\text{XY}}{m}$  & 0.5       & 0.8248     & 0.8536     & 0.9408     & 0.9553     & 0.5       & 0.9239     & 0 & \cellcolor{lgreen}1 & \cellcolor{lgreen}1 \\
  & $H=\frac{\text{EPR}}{m}$ & 1.3090    & 1.3493     & 1.3594     & 1.3626     & \cellcolor{lgreen}1.3638     & 1.3493     & 1.3626     & 1 & 1 & 1.25 \\
\hline
\multirow{3}{*}{$D=2$} 
  & $H=\frac{\text{QMC}}{m}$ & 0.5       & 0.6849     & 0.7478     & 0.7938     & 0.8431     & 0.5       & 0.7604     & 0 & [0.8918-0.9241] & \cellcolor{lgreen}1 \\
  & $H=\frac{\text{XY}}{m}$  & 0.5       & 0.7041      & 0.7677      & 0.8032     & 0.8346     & 0.5       & 0.7714     & 0 & \cellcolor{lgreen}[0.8918-0.9241] & 0.8333 \\
  & $H=\frac{\text{EPR}}{m}$ & 1.1922    & 1.2105     & 1.2132     & 1.2140      &\cellcolor{lgreen} 1.2144     & 1.2105     & 1.2140      & 1 & [0.8918-0.9241] & 1 \\
\hline
\multirow{3}{*}{$D=3$} 
  & $H=\frac{\text{QMC}}{m}$ & 0.5       & 0.6622     & 0.7144     & 0.7474     & 0.77425    & 0.5       & 0.72125    & 0 & [0.8333-0.8683] & \cellcolor{lgreen} 0.875 \\
  & $H=\frac{\text{XY}}{m}$  & 0.5       & 0.6690      & 0.7220      & 0.7530    & 0.7730    & 0.5       & 0.7241    & 0 & \cellcolor{lgreen}[0.8333-0.8683] & 0.75 \\
  & $H=\frac{\text{EPR}}{m}$ & 1.1395   & 1.1496    & 1.1507    & 1.1510    & \cellcolor{lgreen}1.1512    & 1.1496    & 1.1510      & 1 & [0.8333-0.8683] & 0.8750\\
\hline
\multirow{3}{*}{$D=4$} 
  & $H=\frac{\text{QMC}}{m}$ & 0.5       & 0.6431    & 0.6906    & 0.7197    & 0.7448    & 0.5       & 0.6926    & 0 & \cellcolor{lgreen}[0.7951-0.8350] & \cellcolor{lgreen} 0.8 \\
  & $H=\frac{\text{XY}}{m}$  & 0.5       & 0.6479    & 0.6944    & 0.7217    & 0.7368     & 0.5       & 0.6961    & 0 & \cellcolor{lgreen}[0.7951-0.8350] & 0.7 \\
  & $H=\frac{\text{EPR}}{m}$ & 1.1094   & 1.1156    & 1.1162    & 1.11636    & \cellcolor{lgreen}1.11639     & 1.1156    &  1.11635    & 1 & [0.7951-0.8350] & 0.8 \\
\hline
\end{tabular}
}
\caption{\footnotesize Per-edge energy obtained by the MC ansatz up to depth $5$, the XY ansatz up to depth $2$, and the classical algorithms ZERO, CUT, and MATCH on regular graphs of degree $D+1$ with $D\in [1,4]$. We report $4$ digits after the decimal point, except in ties. As described in \cref{sec:iterations}, the notation $\text{A}_{(p, \text{H})}$ refers to the energy obtained by ansatz A with depth $p$ on Hamiltonian H. Cells highlighted in green denote to the best algorithm for $H$ at each value of $D$. The upper bounds for CUT are from \cite{coja-oghlan2020}; the lower bounds are from \cite{thompson2022explicit}. 
} 
\label{tab:small_d}
\end{table}

From \Cref{tab:small_d} we can see that for the QMC and XY Hamiltonian, the MC and XY ans\"atze do not outperform MATCH or CUT. For QMC we observe that when $D<4$, MATCH outperforms even the existential upper bound for CUT from \cite{coja-oghlan2020} (which may not be achievable in polynomial time); when $D=4$, MATCH lies between this upper bound on CUT and the algorithmic lower bound from \cite{thompson2022explicit}. (When $D>4$, the algorithmic lower bound on CUT is higher than MATCH, and so CUT outperforms MATCH.) For the XY Hamiltonian, the algorithmic lower bound of CUT always outperforms MATCH, except at $D=1$, where the algorithms yield equal energies. For the EPR Hamiltonian, MATCH outperforms ZERO at $D=1$, the algorithms are equal at $D=2$, and ZERO outperforms MATCH at $D>2$. However, for this Hamiltonian, \emph{the MC ansatz at five layers outperforms all other algorithms for all $D$}. 

The values in \Cref{tab:small_d} also apply to high-girth bipartite regular graphs, as iterations in \cref{sec:iterations} are not altered by the requirement that the graphs have no odd cycles. For bipartite graphs, \Cref{rem:bipartite_qmc_epr_equivalence} tells us that QMC and EPR are equivalent, up to local rotations. Thus, any algorithm for EPR yields an algorithm for QMC obtaining the same energy. For QMC on bipartite graphs then, we compare the values of MATCH and ZERO on EPR with that of MC\textsubscript{(p, EPR)} in \cref{fig:optimal_algs}. Thus, we show that we can outperform classical algorithms at any finite degree.

As described in \cref{sec:background/classical_algos}, we also consider the algorithm KING, an SDP-based approach, which is to our knowledge the only other proposed algorithm for maximizing the EPR Hamiltonian \cite{king2023}. KING outputs the greater of the algorithm ZERO or a one layer quantum circuit composed of two-body Pauli rotations on each edge $(ij)$ parameterized by some $\theta_{ij}$, where the parameters $\theta_{ij}$ are derived from the solution of the SDP. This solution depends on global properties of the graph, so  it is unclear how to analyze this approach on arbitrary high-girth regular graphs. 
However, for regular trees, the edge-transitivity forces parameters $\theta_{ij}=\theta$ to be identical for all $(ij)\in E$. We can then optimize one variable $\theta$ in \cite[Lemma 12]{king2023} to find the energy obtained by KING. We find that for the EPR Hamiltonian on regular trees of any degree, the optimal $\theta$ returns an energy equal to that the MC ansatz at depth $1$ for all $D$ we consider in \cref{tab:small_d}. As the energies obtained by the ansatz are strictly larger at depth two, we outperform the algorithm for for any $p\geq 2$.

\subsection{Infinite degree limit}\label{sec:results/infinite_d}

For the EPR Hamiltonian, 
 our algorithms are matched by simple classical approaches in the $D \rightarrow \infty$ limit, as discussed in \cref{sec:iterations/mc_ansatz/d_inf} and \cref{sec:iterations/xy_ansatz/d_inf}.
For the QMC Hamiltonian, item 2) of \cref{cor:mc_inf_d_energy,cor:xy_inf_d_energy} show that the normalized energy for each algorithm is upper-bounded by that on the XY Hamiltonian.
Thus, we investigate the normalized energy achieved by our algorithms in the infinite degree limit \emph{only} for the XY Hamiltonian.
As we will see, at the depths we can probe numerically, our quantum algorithms are no better than simply finding a good cut (CUT). 

For the MC ansatz, we numerically obtain the values from \cref{eq:mc_nu_infinite_d} in \Cref{tab:mc_nu_values}. We visualize the optimal parameters in \cref{fig:angles_mc} and list all optimal parameters in \Cref{apx:angles}. 
\begin{figure}[h]
    \centering
    \includegraphics[width=0.85\linewidth]{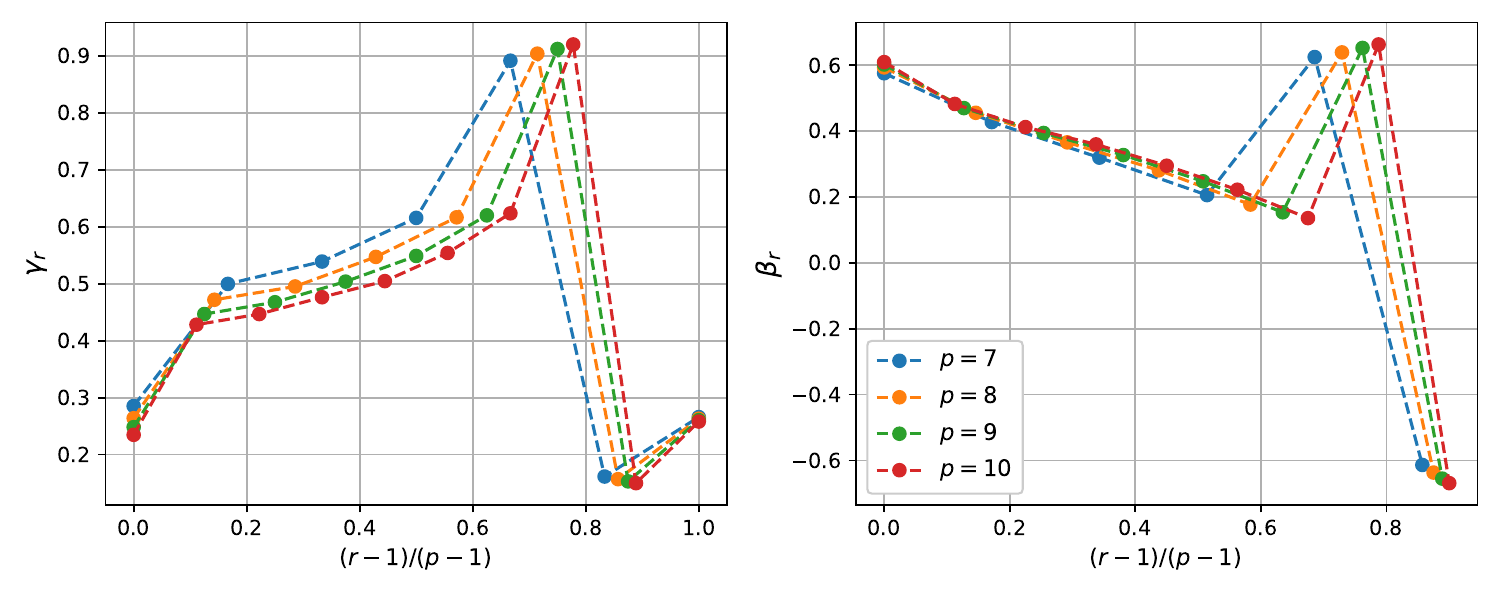}
    \caption{\footnotesize Optimal angles by ansatz layer $r$ for the MC ansatz on the XY Hamiltonian in the infinite degree limit at various depths $p$.}
    \label{fig:angles_mc}
\end{figure}

\begin{table}[ht]
    \centering
\noindent
\resizebox{.7\textwidth}{!}{%
\begin{tabular}{|c | c c c c c |}
    \hline
    p & 1 & 2 & 3 & 4 & 5  \\
    \hline
    $\nu_p(XY, MC)$   & 0 & 0.3086 & 0.4099 & 0.4726 & 0.5157 \\
    $\nu_p(MC, MC)$ (\cite{basso2022}) & 0.3033 & 0.4075 & 0.4726 & 0.5157 & 0.5476 \\
    \hline
    \end{tabular}
    }
    \vspace{1em} % Adds vertical space between the tables
    \noindent
    \resizebox{.7\textwidth}{!}{%
    \begin{tabular}{|c | c c c c c |}
        \hline
        p & 6 & 7 & 8 & 9 & 10 \\
        \hline
        $\nu_p(XY, MC)$   & 0.5476 & 0.5799 & 0.6093 & 0.6321 & 0.6503 \\
        $\nu_p(MC, MC)$ (\cite{basso2022}) & 0.5721 & 0.5915 & 0.5915 & 0.6073 & 0.6203 \\
        \hline
    \end{tabular}
    }
    \caption{\footnotesize Optimal values of $\nu_p$ for the MC ansatz applied to the XY Hamiltonian (top)  up to depth $p=10$.
    We contrast these values to the ansatz's performance on the MC Hamiltonian (bottom), calculated in \cite{basso2022}. Note that these values are quite similar; we do not fully understand why this occurs.}
    \label{tab:mc_nu_values}
\end{table}

Inspecting \Cref{tab:mc_nu_values}, we see that at $p \le 7$, the depth-$p$ performance on the XY Hamiltonian is similar to the depth-$(p-1)$ performance on the (classical) MC Hamiltonian. However, this pattern changes at higher depth; here, we see that the energy achieved on the XY Hamiltonian is greater than that achieved on the MC Hamiltonian.
It is possible that the ansatz must have enough layers before it can do something that outperforms a good product state solution.

At $p > 7$, we also observe that the first several optimal parameters for the XY Hamiltonian nearly match the optimal parameters for the MC Hamiltonian. We do not understand the reason for this behavior, and suggest it as a topic for future research. As an example, see \Cref{fig:p_10_mc_angles_on_H_mc_vs_H_xy}.

\begin{figure}[h]
    \centering
    \includegraphics[width=0.85\linewidth]{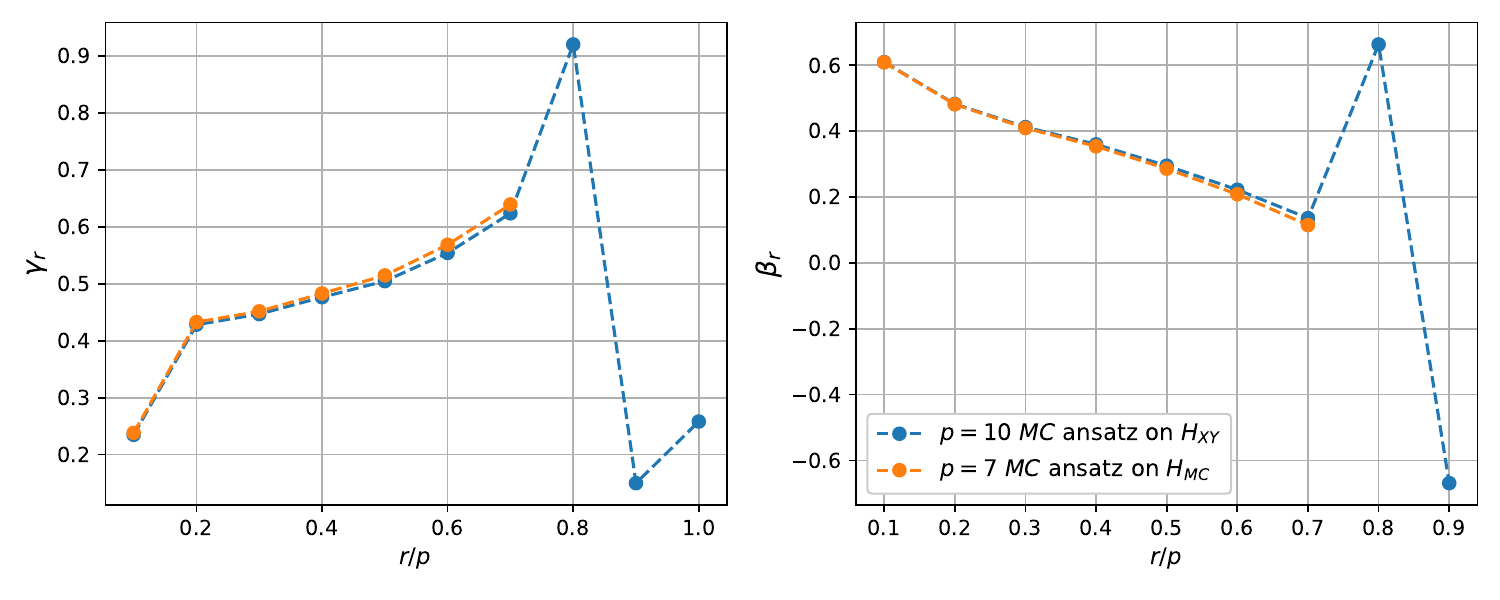}
    \caption{\footnotesize We compare the optimal angles $\gammav,\betav$ by ansatz layer $r$ in the infinite degree limit for two algorithms: (1) $p=10$ MC ansatz on the $H_{XY}$ Hamiltonian, and (2) $p=7$ MC ansatz on the $H_{MC}$ Hamiltonian (from \cite{basso2022}). Note that the first few optimal parameters are almost exactly equal.}
    \label{fig:p_10_mc_angles_on_H_mc_vs_H_xy}
\end{figure}

For the XY ansatz, we numerically obtain the values for \cref{eq:xy_nu_infinite_d} in \Cref{tab:xy_nu_values}. We visualize the optimal parameters in \cref{fig:angles_xy}, and list optimal parameters in \Cref{apx:angles}.

\begin{figure}[ht]
    \centering
    \includegraphics[width=\linewidth]{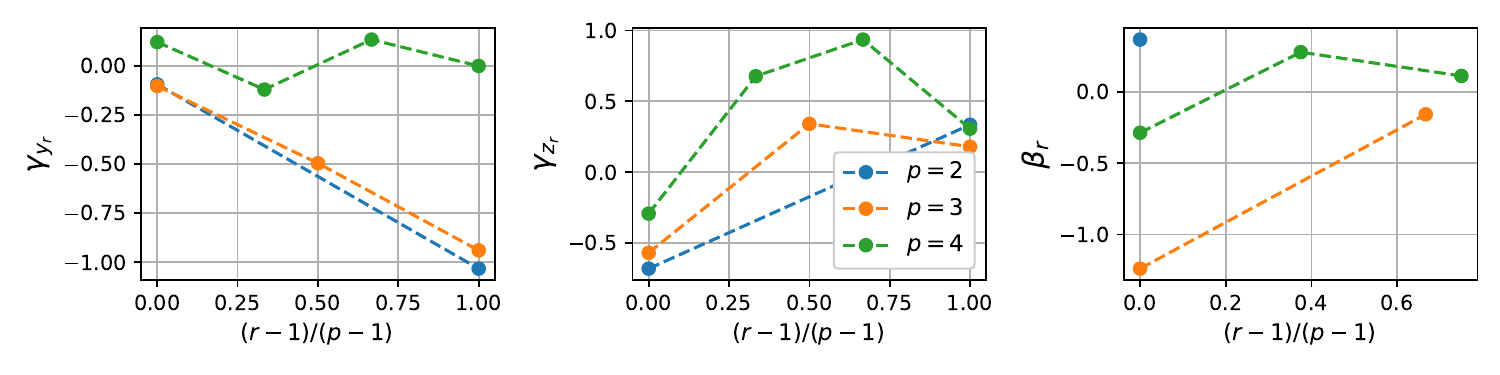}
    \caption{\footnotesize Optimal angles $\gammav_y$, $\gammav_z$ and $\betav_z$ by ansatz layer $r$ for the XY ansatz on the XY Hamiltonian in the infinite degree limit for various depths $p$. Here, we no longer recover parameter schedules with monotonically increasing $\gamma$ and decreasing $\beta$ inspired by quantum annealing, and thus no longer qualitatively match \cite{basso2022}.}
    \label{fig:angles_xy}
\end{figure}
\begin{table}[h!]
    \centering
    \begin{tabular}{|c | c c c c |}
    \hline
        p & 1 & 2 & 3 & 4  \\
        \hline
        $\nu_p$ for $H_{XY}$  & 0. & 0.40611131 & 0.48085015 & 0.515497  \\
    \hline
    \end{tabular}
    \caption{\footnotesize Optimal values for $\nu_p$ with the XY ansatz on the XY Hamiltonian up to depth $p=4$.}
    \label{tab:xy_nu_values}
\end{table}

We summarize the performance of both ans\"atze in \cref{fig:nus}. We compare each value $\nu_p$ to the normalized energy of CUT on high-girth regular graphs, which approaches the Parisi value $P_\star\approx0.7631...$ in the infinite degree limit~\cite{alaoui2023}. 
In general, we find that neither the MC or XY ans\"atze outperform CUT at the depths studied. However, it is unclear what may happen at larger depth, as the runtime of our analysis scales exponentially with depth.
\begin{figure}[t]
    \centering
    \includegraphics[width=0.9\linewidth]{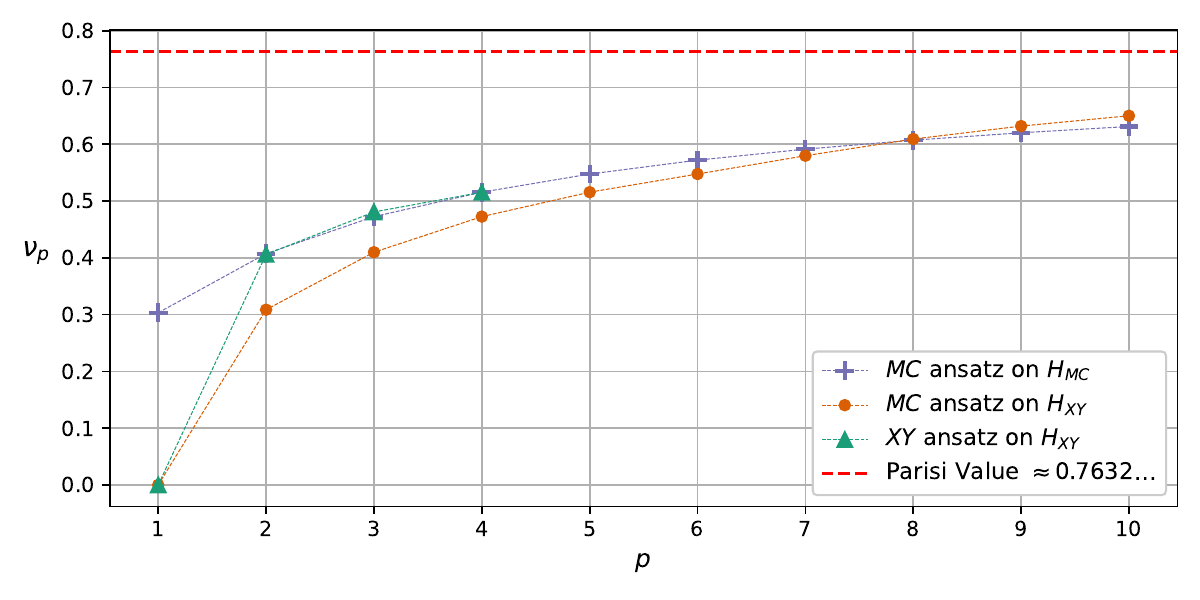}
    \caption{\footnotesize Optimal values $\nu_p$ obtained by the MC and XY ansatz on the XY Hamiltonian. The optimal values $\nu_p$ obtained by MC on the MC Hamiltonian from \cite{basso2022} are given for comparison, as well as a dashed horizontal line indicating the Parisi value, which corresponds to the value of CUT obtained by the algorithm of \cite{alaoui2023}. }
    \label{fig:nus}
\end{figure}

\section{Discussion}\label{sec:discussion}
Despite much effort, there has been limited evidence for quantum advantage on interesting combinatorial optimization problems~\cite{barak2015beating,chou2021limitations, chen2023local,montanaro2024quantum}.
A natural direction, then, is to  look for quantum advantage on \emph{non-commuting} Hamiltonians, which are fundamentally quantum in nature.
We hope this work furthers a competition between classical and quantum algorithms on these local Hamiltonian problems, both for deciding the value of the maximum energy, \emph{and} in preparing states close to this energy value.

In this work, we design and analyze the MC and XY ans\"atze, two variational quantum algorithms to optimize the Quantum MaxCut, XY, and EPR Hamiltonians, which are three specific classes of $2$-Local Hamiltonians. We study the performance of these algorithms for unweighted high-girth regular graphs. Our results indicate that the quantum algorithms, even with numerically-optimized parameters and modest depths (between $4$ and $10$), do not outperform simple classical approaches (based on trivial product states, maximum cuts, and maximum matchings) when maximizing the QMC and XY Hamiltonian at any graph degree. However, we find evidence that the MC ansatz outperforms these simple classical approaches, as well as the semi-definite programming algorithm from King \cite{king2023}, when maximizing the EPR Hamiltonian at small degree. 

We believe that this discrepancy between EPR and QMC/XY may be partly explained in terms of initial states. For the EPR Hamiltonian, the optimal product state is any permutation-invariant product state (for example, the all-zero state $\ket{0}^{\otimes n}$). For the QMC and XY Hamiltonians, the optimal product state may be highly nontrivial \cite{hwang2022}, while permutation-invariant states give us energy $0$. Our framework for analysis in \cref{sec:iterations} requires our variational algorithms to be initialized with a permutation-invariant product state. Thus, for QMC and XY, we start off with the \emph{worst possible} product state, while for EPR we start off with the \emph{best possible} product state. Our results for EPR may be interpreted as: given a good product state, the MC ansatz is able to distribute entanglement better than a SDP and short circuit (\cite{king2023}), a maximum matching (MATCH), or doing nothing at all (ZERO). For QMC and EPR, it may be the case that starting from a good product state, similar statements may hold.
For variational algorithms and the QAOA for classical optimization problems, this is sometimes referred to as ``warm-starting''~\cite{augustino2024, tate2024}. Indeed, concurrent work~\cite{kannan2024} warm-starts their variational ans\"atze with product states obtained from good classical cuts, for example via \cite{goemans1995}; they show this circuit may further improve the energy.

In the infinite degree limit, our algorithms fail to outperform classical methods. Intuitively, this is because of the \emph{monogamy of entanglement} phenomenon. In \cite[Lemma 5]{anshu2020}, the authors show that the optimal energy of QMC on an unweighted star graph with a degree-$D$ central vertex is $D+1$. The value of CUT on this graph is $D$. Thus, in the large $D$ limit, CUT --- which yields a product state --- is asymptotically optimal. The same holds for the EPR Hamiltonian (by \cref{rem:bipartite_qmc_epr_equivalence}). One interpretation is that the Hamiltonians aim to project all pairs of vertices into the \emph{maximally entangled} Bell state, but the central vertex cannot be simultaneously entangled with all of its neighbors. For regular graphs, as $D$ grows large, the central vertex is overwhelmed, and so product states become near-optimal. This phenomenon is studied in more detail in \cite{brandao2014}.

Our results for infinite 2-regular graphs correspond to finding low-energy states of the antiferromagnetic Heisenberg XXX\textsubscript{1/2} model on an infinite ring. These models are well studied; for instance, we know from \cite{bethe1931, faddeev1996} that the maximum normalized energy (and energy density, since $m = n$) for QMC in the infinite-size limit is $2\ln2 \approx 1.3863$. These graphs are bipartite (or nearly bipartite for odd $n$), and thus our numerics for the MC ansatz on the EPR Hamiltonian show that we can achieve energy $1.3638$ with just five layers of the algorithm. This is within $1.62\%$ error of the true ground state. By contrast, preparing an \emph{exact} ground state seems challenging: rigorous approaches to prepare Bethe ansatz states (eigenstates of the Hamiltonian \cite{bethe1931, faddeev1996}) on a quantum computer seem to scale unfavorably with system size \cite{dyke2021, li2022,sopena2022a, ruiz2024, ruiz2024b, raveh2024}.
It is surprising that our algorithm can prepare \emph{near}-ground states while the Bethe ansatz methods fail; this discrepancy warrants further study. 

\section*{Acknowledgements}
The authors would like to thank Joao Basso, Sami Boulebnane, Ishaan Kannan, Robbie King, and Leo Zhou for helpful discussions.
This work was done in part while K.M. was visiting the Simons Institute for the Theory of Computing, supported by NSF QLCI Grant No. 2016245.
K.M acknowledges support from AFOSR
(FA9550-21-1-0008).
This material is based upon work supported by the National Science Foundation Graduate Research Fellowship under Grant No. 2140001.
Any opinion, findings, and conclusions or recommendations expressed in this material are those of the authors(s) and do not necessarily reflect the views of the National Science Foundation.
A.S. acknowledges support of the National Science and Engineering Research Council of Canada (NSERC), Canada Graduate Scholarship.

\printbibliography

\appendix

\clearpage
\newpage

\section{Energy of MC ansatz with one layer on arbitrary graphs}
\label{apx:mcansatz_onelayer}
For MaxCut, there are numerous results about the expected performance of the QAOA algorithm. For instance \cite{wang2018} proved several formula for the expected performance of the $MC$ ansatz on MaxCut at depth $p = 1$, which we cast into our language below:

\begin{theorem}[\cite{wang2018} Energy of $p=1$ QAOA on MaxCut]\label{thm:mc_mc_depth_1}
Let $\ket{\gamma, \beta}$ be the state output $p = 1$ QAOA for max-cut, then we have 

\begin{enumerate}[label=(\alph*)]
    \item The expected energy of edge $(u, v)$ on state $\ket{\gamma, \beta}$  
    \begin{align}
        \frac{1}{2} + \frac{1}{4} (\sin 4\beta \sin \gamma) (\cos^{d_u} \gamma + \cos^{d_v} \gamma) - \frac{1}{4} (\sin^2 2\beta \cos^{d_u + d_v - 2 t_{uv}} \gamma) (1 - \cos^{t_{uv}} 2 \gamma),
    \end{align}
    where $d_u + 1$ is the degree of $u$, $d_v + 1$ is the degree of $v$, and $t_{uv}$ is the number of triangles in $G$ including vertices $(u, v)$. 
    \item 
    For a triangle-free, $d + 1$-regular graph $G$, the expected value for $p = 1$ QAOA is
    \begin{align}
        \langle  \psi(\gamma, \beta) \vert H_G^{MC} \vert \psi(\gamma, \beta) \rangle =  \frac{|E|}{2} \left (1 + \sin 4 \beta \sin \gamma \cos^d \gamma \right).
    \end{align}
    \item The optimal values of angles $\beta, \gamma$ in (b) satisfy $\beta = \frac{\pi}{8}, \gamma = \tan^{-1} (\frac{1}{\sqrt{d}}),$ and at these points the expression in (b) achieves value at least 
    \begin{align}
        \frac{|E|}{2} \left (1 + \frac{1}{\sqrt{e (d+1)}} \right).
    \end{align}
\end{enumerate}

\end{theorem}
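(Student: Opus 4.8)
The plan is to prove (a) by the standard depth-one ``reverse lightcone'' computation and then read off (b) and (c) as specializations. Writing $\ket{\gamma,\beta} = e^{-i\beta B}e^{-i\gamma C}\ket{+}^{\otimes n}$ with $B=\sum_j X_j$ and $C$ the MaxCut phase operator (in the normalization of \cite{wang2018}), the energy of edge $(u,v)$ is $\tfrac12 - \tfrac12\langle O_{uv}\rangle$, where $O_{uv}=e^{i\gamma C}e^{i\beta B}Z_uZ_v e^{-i\beta B}e^{-i\gamma C}$ and the expectation is taken in $\ket{+}^{\otimes n}$. The first step is to conjugate by the mixer, using $e^{i\beta B}Z_w e^{-i\beta B} = \cos(2\beta)Z_w + \sin(2\beta)Y_w$, which turns $e^{i\beta B}Z_uZ_v e^{-i\beta B}$ into a sum of four monomials $Z_uZ_v$, $Z_uY_v$, $Y_uZ_v$, $Y_uY_v$ with weights $\cos^2 2\beta$, $\sin2\beta\cos2\beta$, $\cos2\beta\sin2\beta$, $\sin^2 2\beta$. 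The second step uses that $C$ is diagonal, so only $Y$-factors are moved by $e^{-i\gamma C}$, and each localizes to the star of its vertex: $e^{i\gamma C}Y_u e^{-i\gamma C} = \big(\prod_{k\sim u}(\cos\gamma\, I - i\sin\gamma\, Z_uZ_k)\big)Y_u$. One then expands these products and evaluates in $\ket{+}^{\otimes n}$, using $\braket{+|X|+}=1$ and $\braket{+|Y|+}=\braket{+|Z|+}=0$, so that only monomials carrying $I$ or $X$ on every qubit survive.

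The bookkeeping then runs as follows. The $Z_uZ_v$ monomial contributes $0$, leaving bare $Z$'s on $u,v$. Each mixed monomial $Z_uY_v$, $Y_uZ_v$ survives through exactly the term in which the bare $Z$ on the unrotated endpoint is cancelled by the $Z$ that the shared edge $(u,v)$ deposits while rotating the $Y$ on the other endpoint; this leaves $\sin\gamma$ from that edge, $\cos\gamma$ from each of the remaining $d_u$ (resp.\ $d_v$) incident edges, and a surviving $X$ on the rotated endpoint, yielding $\tfrac14\sin4\beta\sin\gamma(\cos^{d_u}\gamma+\cos^{d_v}\gamma)$ after $2\sin2\beta\cos2\beta=\sin4\beta$. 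The $Y_uY_v$ monomial is the delicate one: the two factors produced on the shared edge $(u,v)$ — one from rotating $Y_u$, one from rotating $Y_v$ — cancel once the $Y$'s are commuted through (since $(\cos\gamma-i\sin\gamma Z_uZ_v)Y_u(\cos\gamma-i\sin\gamma Z_uZ_v) = Y_u$), so that edge is inert; each of the $d_u+d_v-2t_{uv}$ non-shared incident edges can only pick $\cos\gamma$; and for each of the $t_{uv}$ common neighbors $w$, the edges $(u,w)$ and $(v,w)$ either both pick $\cos^2\gamma$ (no leftover $Z$) or both pick their $Z$-term (contributing $-\sin^2\gamma$ and depositing $Z_uZ_v$ on the endpoints). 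A surviving term needs an odd number of common neighbors to take the second option, so that $Z_uY_u\propto X_u$ survives in $\ket{+}$, and the odd-part binomial identity $\sum_{r\text{ odd}}\binom{t_{uv}}{r}(\cos^2\gamma)^{t_{uv}-r}(-\sin^2\gamma)^{r} = \tfrac12(\cos^{t_{uv}}2\gamma - 1)$ collapses the sum to $-\tfrac14\sin^2 2\beta\,\cos^{d_u+d_v-2t_{uv}}\gamma\,(1-\cos^{t_{uv}}2\gamma)$. Adding the three contributions gives (a); getting this common-neighbor sum right, and checking that only these monomials survive, is the only genuinely delicate step, and the triangle-free case $t_{uv}=0$ reduces to the classic depth-one calculation of \cite{farhi2014}.

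Parts (b) and (c) are then routine. Setting $t_{uv}=0$ and $d_u=d_v=d$ in (a) kills the last term and gives per-edge energy $\tfrac12(1+\sin4\beta\sin\gamma\cos^d\gamma)$; summing over the $|E|$ edges gives (b). For (c), $\sin4\beta$ is maximized at $\beta=\pi/8$ with value $1$, and $\tfrac{d}{d\gamma}(\sin\gamma\cos^d\gamma)=\cos^{d-1}\gamma\,(\cos^2\gamma - d\sin^2\gamma)$ vanishes at $\tan^2\gamma=1/d$, i.e.\ $\gamma=\tan^{-1}(1/\sqrt d)$, where $\sin\gamma=1/\sqrt{d+1}$ and $\cos\gamma=\sqrt d/\sqrt{d+1}$; substituting gives the exact optimum $\tfrac{|E|}{2}\big(1+\tfrac{d^{d/2}}{(d+1)^{(d+1)/2}}\big)$. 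It remains to verify $\tfrac{d^{d/2}}{(d+1)^{(d+1)/2}}\ge\tfrac{1}{\sqrt{e(d+1)}}$, equivalently $(1-\tfrac1{d+1})^{d}\ge e^{-1}$; writing $x=\tfrac1{d+1}$ so that $d=\tfrac{1-x}{x}$, this reads $\tfrac{1-x}{x}\ln(1-x)\ge -1$, which follows from $\ln(1-x)\ge -\tfrac{x}{1-x}$ on $(0,1)$ (integrate $\tfrac1{1-t}\le\tfrac1{1-x}$ over $t\in[0,x]$). Thus the entire difficulty of the theorem is concentrated in the monomial bookkeeping of part (a).
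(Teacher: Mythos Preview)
Your proposal is correct and follows the standard approach of \cite{wang2018}, which is exactly what the paper invokes. Note that the paper does not itself reprove \cref{thm:mc_mc_depth_1}; it states it as a citation and then, in the proof of \cref{thm:O_mc_depth_1}, reuses the same decomposition (conjugate $Z_uZ_v$ by the mixer via \cref{claim:trace_mix}, then evaluate each resulting monomial against $U_C\rho U_C^*$ via \cref{claim:trace_cost}), deferring the $Y_uY_v$ and $Y_uZ_v$ cases back to \cite{wang2018}. Your write-up fills in precisely those deferred computations, including the odd-part binomial identity for the common-neighbor sum and the elementary inequality for part~(c), so there is nothing to flag.
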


We now extend these results to the non-commuting Hamiltonians considered in this work. First, let $O$ be an observable in the set $O \in \cup_{(u,v)\in E}\ofc{X_uX_v, Y_uY_v, Z_uZ_v}$. By definition, we can compute MC\textsubscript{H} for $\text{H} \in \ofc{\text{H}_{\text{QMC}}, \text{H}_{\text{XY}}, \text{H}_{\text{EPR}}}$ on any graph given that we can compute $\braket{\gamma, \beta | O | \gamma, \beta}$ for all $O$. Using similar formulas and notation as \Cref{thm:O_mc_depth_1}, we provide a closed for for these expectation values using the MC ansatz below.

\begin{theorem}[Energy of $p=1$ MC ansatz]\label{thm:O_mc_depth_1}
Let $\ket{\gamma, \beta}$ be the state output by the MC ansatz at $p=1$. Then

\begin{enumerate}[label=(\alph*)]
    \item  The expected energy of $\braket{X_uX_v}$, $\braket{Y_uY_v}$, and $\braket{Z_uZ_v}$ for an edge
    $(u, v)$ are given by
    \begin{align}\label{eq:p1_xyz_exp_vals_analytic}
        \braket{\gamma, \beta | O | \gamma, \beta} = \begin{cases}
        \frac{1}{2} \left(\cos ^{t_{uv}}(2 \gamma )+1\right) \cos
        ^{d_u+d_v-2 t_{uv}}(\gamma ), \quad O={X_uX_v}, \\
        -\frac{1}{2}\Big(\sin{4\beta}\of{\cos^{d_u}\gamma+\cos^{d_v}\gamma}\sin{\gamma}\\
        \hspace{20px}-\cos^{2}2\beta\cos^{d_u+d_v-2t_{uv}}\of{1-\cos^{t_{uv}}2\gamma}\Big), \quad O={Y_uY_v}, \\
        \frac{1}{2}\big(\sin{4\beta}\of{\cos^{d_u}\gamma+\cos^{d_v}\gamma}\sin{\gamma}\\
        \hspace{20px}-\sin^{2}2\beta\cos^{d_u+d_v-2t_{uv}}\of{1-\cos^{t_{uv}}2\gamma}\Big), \quad O={Z_uZ_v}, \\
        \end{cases}
    \end{align}
    where $d_u + 1$ is the degree of $u$, $d_v + 1$ is the degree of $v$, and $t_{uv}$ is the number of triangles in $G$ including vertices $(u, v)$. 
    \item For a triangle-free $d + 1$-regular graph $G$, \cref{eq:p1_xyz_exp_vals_analytic} can be simplified to
    \begin{align}\label{eq:p1_reg_triangle_free_xyz_exp_vals_analytic}
        \braket{\gamma, \beta | O | \gamma, \beta} = \begin{cases}
        \cos^{2d}\gamma, \quad O={X_uX_v}, \\
        -\sin4\beta\cos^d\gamma \sin\gamma, \quad O={Y_uY_v}, \\
        \sin4\beta\cos^d\gamma \sin\gamma, \quad O={Z_uZ_v}. \\
        \end{cases}
    \end{align}
\end{enumerate}
\end{theorem}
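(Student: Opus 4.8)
\textbf{Proof proposal for \Cref{thm:O_mc_depth_1}.}

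The plan is to directly compute the expectation value $\braket{\gamma,\beta|O|\gamma,\beta}$ for $O\in\{X_uX_v, Y_uY_v, Z_uZ_v\}$ by tracking how the single-layer MC ansatz circuit acts on the two-Pauli operator in the Heisenberg picture. Since the state is $\ket{\gamma,\beta}=e^{-i\beta B}e^{-i\gamma C_z}\ket{+}^{\otimes n}$, I would instead conjugate the observable: $\braket{\gamma,\beta|O|\gamma,\beta}=\braket{+^{\otimes n}|\, e^{i\gamma C_z}e^{i\beta B}\, O\, e^{-i\beta B}e^{-i\gamma C_z}\,|+^{\otimes n}}$. First conjugate by the mixer $e^{-i\beta B}=\prod_j e^{-i\beta X_j}$: this rotates $Y_j\mapsto \cos2\beta\,Y_j-\sin2\beta\,Z_j$ and $Z_j\mapsto\cos2\beta\,Z_j+\sin2\beta\,Y_j$, while leaving $X_j$ fixed. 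So $X_uX_v$ is untouched, while $Y_uY_v$ and $Z_uZ_v$ each spread into a sum of four terms of the form (Pauli on $u$)$\otimes$(Pauli on $v$) with $\{Y,Z\}$ on each site and trigonometric coefficients in $2\beta$.

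Next conjugate by the phaser $e^{-i\gamma C_z}$. Recall $C_z=-\tfrac{1}{\sqrt D}\sum_{(a,b)\in E}Z_aZ_b$, but the $\sqrt D$ scaling is just a convention and for the depth-1 formula on a fixed graph we work with the unscaled $\gamma$ as in \Cref{thm:mc_mc_depth_1}; each factor $e^{i\gamma Z_aZ_b}$ commutes with $Z$-type operators and rotates an $X$ or $Y$ on site $a$ by $\pm 2\gamma$ conditioned on $Z_b$. The standard computation (exactly as in the MaxCut analysis of \cite{wang2018}) proceeds term by term: for a single-site operator $P_u\in\{X_u,Y_u\}$, conjugating by all phaser gates incident to $u$ produces a sum over the $d_u+1$ neighbors of $u$; taking the $\ket{+}^{\otimes n}$ expectation kills all $Z$'s on vertices outside the edge $(u,v)$ and contributes a factor $\cos\gamma$ for each neighbor of $u$ other than $v$, similarly for $v$, and a factor involving $\cos 2\gamma$ for each of the $t_{uv}$ common neighbors (triangles). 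I would organize this into the three cases: the $X_uX_v$ term commutes through the phaser except for the conditioning on shared neighbors, giving the $\tfrac12(\cos^{t_{uv}}2\gamma+1)\cos^{d_u+d_v-2t_{uv}}\gamma$ expression; the mixed $Y\otimes Z$ and $Z\otimes Y$ cross-terms (with coefficient $\propto\sin2\beta\cos2\beta=\tfrac12\sin4\beta$) contribute the $\sin4\beta(\cos^{d_u}\gamma+\cos^{d_v}\gamma)\sin\gamma$ piece after the $\ket{+}$ expectation; and the $Y\otimes Y$ and $Z\otimes Z$ terms (coefficients $\sin^2 2\beta$ and $\cos^2 2\beta$) contribute the $\cos^{d_u+d_v-2t_{uv}}\gamma(1-\cos^{t_{uv}}2\gamma)$ pieces with signs depending on whether we started from $Y_uY_v$ or $Z_uZ_v$. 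Collecting signs carefully yields the three lines of \cref{eq:p1_xyz_exp_vals_analytic}. Part (b) then follows by setting $d_u=d_v=d$ and $t_{uv}=0$, using $\cos^0 2\gamma=1$ so the last group vanishes for $X$ and $\sin^2 2\beta+\cos^2 2\beta$ combines appropriately, leaving only the $\sin4\beta\cos^d\gamma\sin\gamma$ term for $Y$ and $Z$, and $\cos^{2d}\gamma$ for $X$.

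The main obstacle I anticipate is bookkeeping the sign conventions and the combinatorics of shared neighbors: when $u$ and $v$ are adjacent, the phaser edge $(u,v)$ itself acts on both single-site operators simultaneously, and one must be careful not to double-count it relative to the $d_u, d_v$ (which exclude the edge $(u,v)$) and the $t_{uv}$ triangles. A clean way to handle this is to split each vertex's neighbor set as $\{v\}\cup(\text{common neighbors})\cup(\text{private neighbors})$ and evaluate the $\ket{+}^{\otimes n}$ expectation as a product over these disjoint groups, using $\braket{+|X e^{\pm i\theta Z}|+}$-type identities; the $(u,v)$ edge is then the source of the $\pm\sin\gamma$ and $\pm\sin 2\beta$ prefactors while the common neighbors supply the $\cos 2\gamma$ factors and the private neighbors the $\cos\gamma$ factors. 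Since the structure is identical to \cite{wang2018}'s derivation of \Cref{thm:mc_mc_depth_1} (our $Y_uY_v$ case in fact reproduces their edge formula up to the offset and an overall sign, which is a useful consistency check), I would lean on that computation and only highlight the modifications needed for the $X_uX_v$ and $Z_uZ_v$ observables.
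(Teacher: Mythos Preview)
Your proposal is correct and follows essentially the same route as the paper: conjugate the observable by the mixer first (yielding the $\cos^2 2\beta$, $\sin^2 2\beta$, $\sin 4\beta$ coefficients on $Y_uY_v$, $Z_uZ_v$, $Y_uZ_v+Z_uY_v$), then evaluate each resulting term against $U_C\ket{+}^{\otimes n}\bra{+}^{\otimes n}U_C^*$ by borrowing the $Y_uY_v$, $Y_uZ_v$, $Z_uZ_v$ computations from \cite{wang2018} and handling the new $X_uX_v$ case via commutation with $U_B$ and the private/common-neighbor binomial expansion. The paper organizes the phaser step by splitting $C=C_u+C_v+C'$ and computing $Tr(e^{2i\gamma C_u}e^{2i\gamma C_v}X_uX_v\rho)$ directly, but this is exactly the same combinatorics you describe.
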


\begin{proof}
We first prove part (a). note that

\begin{align}\label{eq:cyclic_trace}
Tr(O U_B U_C \rho U_C^* U_B^*) = Tr(U_B^* O U_B U_C \rho U_C^*),
\end{align}

for each $O$ where $\rho = \ket{+}^{\otimes n}  \bra{+}^{\otimes n}$. We then make the claim
\begin{claim}
\label{claim:trace_mix}
    \begin{align}
        U_B^* O U_B = 
        \begin{cases}
            X_uX_v, \quad O=X_uX_v, \\
            (\cos^2 2 \beta) Y_u Y_v - (\cos 2 \beta) (\sin 2 \beta) (Y_u Z_v + Y_v Z_u) + (\sin^2 2\beta) Z_u Z_v, \quad O=Y_uY_v, \\
            (\cos^2 2\beta) Z_u Z_v + (\cos 2\beta) (\sin 2\beta) (Y_u Z_v + Y_v Z_u) + (\sin^2 2\beta) Y_u Y_v, \quad O=Z_uZ_v.
        \end{cases}
    \end{align}
\end{claim}
\begin{proof}
We prove this claim for each $O$ as follows
\begin{itemize}
    \item $Z_uZ_v$: This is given in \cite{wang2018}.
    \item $Y_uY_v$: This is given by an identical argument to the $Z_uZ_v$ case, and using Pauli relations.
    \item $X_uX_v$: This follows naturally as $U_B$ commutes with $X_u X_v$.\qedhere
\end{itemize}
\end{proof}

Combining this claim with \cref{eq:cyclic_trace} we see that to compute \cref{eq:cyclic_trace} we need to evaluate the terms $Tr(Z_uZ_v U_C \rho U_c^*)$, $Tr(Y_uZ_v U_C \rho U_c^*)$, $Tr(Y_uY_v U_C \rho U_c^*)$, and $Tr(X_uX_v U_C \rho U_c^*)$ ($Tr(Z_uY_v U_C \rho U_c^*)$ can then be computed via $Tr(Y_uZ_v U_C \rho U_c^*)$ by symmetry). We analyze these expectations term by term in the following claim, letting $O'\in \cup_{(u,v)\in E} \ofc{Z_uZ_v, Y_uY_v, Z_uY_v, X_uX_v}$.

\begin{claim}\label{claim:trace_cost}
    \begin{align}
        Tr(O' U_C \rho U_c^*) = \begin{cases}
        0, \quad O'=Z_uZ_v, \\
        \frac{1}{2}\cos^{d_u + d_v - 2t_{uv}}\gamma(1-\cos^{t_{uv}}2\gamma), \quad O'=Y_uY_v, \\
        \sin\gamma \cos^{d_v}\gamma, \quad O'=Y_uZ_v, \\       
        \frac{1}{2}\cos^{d_u + d_v - 2t_{uv}}\gamma(1+\cos^{t_{uv}}2\gamma), \quad O'=X_uX_v. \\
        \end{cases}
    \end{align}
\end{claim}

\begin{proof}
    We prove each case in turn:
    \begin{itemize}
        \item $O'=Z_uZ_v$: This case is trivial, as $Z_uZ_v$ commutes with $U_C$, and $Tr(Z_uZ_v \rho)=0$.
        \item $O'=Y_uY_v$: Proof is given in \cite{wang2018}.
        \item $O'=Y_uZ_v$: Proof is given in \cite{wang2018}.
        \item $O'=X_uX_v$: Divide the cost Hamiltonian into $C = C_u + C_v + C'$, where $C_u$ involves all neighbors of $u$, except $v$, and $C_v$ involves all neighbors of $v$ except $u$. By the Pauli commutation relations, observe that
        \begin{align}       
            X_u X_v e^{-i \gamma C_u} e^{-i \gamma C_v} = e^{i \gamma C_u} e^{i \gamma C_v} X_u X_v \quad X_u X_v e^{-i \gamma C'} = e^{-i \gamma C'} X_u X_v.
        \end{align}
        Therefore, since $U_C = e^{-i\gamma C_u} e^{-i \gamma C_v} e^{- i \gamma C'},$ we have
        \begin{align}
            Tr(X_u X_v U_C \rho U_C^*) = Tr(e^{-i \gamma C'} e^{i \gamma C_u} e^{i\gamma C_v} X_i X_j\rho U_C^*) = Tr(e^{2 i \gamma C_u} e^{2 i \gamma C_v} X_i X_j \rho).
        \end{align}
        Then observe
        \begin{align}
            e^{2i \gamma C_u} e^{2i \gamma C_v} = \prod_{i=1}^{d_u} ((\cos \gamma) I + i \sin(\gamma) Z_u Z_{u_i}) \prod_{i=1}^{d_v} ((\cos \gamma) I + i \sin(\gamma) Z_v Z_{v_i}),
        \end{align}
        where $\{u_i\}$ are the $d_u$ neighbors of $u$, except $v$ and $\{v_i\}$ are the $d_v$ neighbors of $v$, except $u$. Let $P$ be a Pauli string appearing in the expansion of $e^{2i \gamma C_u} e^{2i \gamma C_v}$. 
        
        If there exists a vertex $u_i$ chosen with $Z_u Z_{u_i}$ that is a neighbor of $u$, but not $v$, then for that term $Tr(P X_u X_v \rho) = 0$. The same is true if a vertex is chosen that is a neighbour of $v$ but not $u$. Therefore, vertices chosen must be common neighbours of $u$ and $v$, and furthermore, the total vertices chosen must be even, otherwise if there is an odd number of $Z$s weighted on vertex $u$, then $Tr(P X_u X_v \rho) = 0$, since on vertex $u$, we have $Tr(Z_u X_u \lvert + \rangle \langle + \rvert) = 0$. Therefore, we can expand the sum as
        \begin{align}
            Tr(e^{2i \gamma C_u} e^{2i \gamma C_v} X_i X_j \rho) &= \sum_{j = 0, j \; \text{even}}^{t_{uv}} \binom{t_{uv}}{j} (\cos^{d_u + d_v - 2j} \gamma) (i \sin \gamma)^{2j}  \\
            &= (\cos^{d_u + d_v - 2 t_{uv}} \gamma)\sum_{j = 0, j \; \text{even}}^{t_{uv}} \binom{t_{uv}}{i} (\cos^2 \gamma)^{t_{uv} - j} (\sin^2 \gamma)^{j}.
         \end{align}
         Then, we can use a linear combination of two binomial theorems, as in the same way \cite{wang2018}, albeit by adding the two equalities, in order to obtain the form claimed.\qedhere
    \end{itemize}
\end{proof}

The proof of (a) then follows from combining \cref{claim:trace_mix} and \cref{claim:trace_cost} with \cref{eq:cyclic_trace}. Part (b) then follows immediately from part (a) by linearity, and since for every edge $(u, v)$, we have $t_{uv} = 0$ and $d_u = d_v$.
\end{proof}

\clearpage
\newpage

\section{Useful lemmas}
\label{apx:usefullemmas}

In this section, we prove \Cref{lemma:xy_f_of_negative_a_is_a,lemma:yz_sum_fH_is_one}, which are the two nontrivial differences between the analysis in \cref{sec:iterations/xy_ansatz} and that of \cite{basso2022}.

We first note the following relations, using the definitions in \cref{eq:xy_sum_definitions}

\begin{align}
    &\frac{\braket{-a_y|-a_z}}{\braket{a_y|a_z}} =  \frac{\braket{-a_y|e^{i\beta X}|-a_z}}{\braket{a_y|e^{i\beta X}|a_z}} = i^{a_y} \implies \frac{\braket{-a_{z_1}|-a_y}}{\braket{a_{z_1}|a_y}}\frac{\braket{-a_y|e^{i\beta X}|-a_{z_2}}}{\braket{a_y|e^{i\beta X}|a_{z_2}}}=1,\label{eq:paired_unity_relation}\\
    &\frac{\braket{-a_y|a_z}}{\braket{a_y|a_z}} = i^{a_y-a_y a_z}, \quad \frac{\braket{a_y|-a_z}}{\braket{-a_y|a_z}} = i^{-a_y a_z} .\label{eq:y_z_bracket_relations}
\end{align}

\begin{lemma}\label{lemma:xy_f_of_negative_a_is_a}
$f(-\av) = f(\av)$,  where $f(\av)$ is defined in \cref{eq:xy_f_defn}.
\end{lemma}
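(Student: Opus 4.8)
The plan is to show that $f$ defined in \cref{eq:xy_f_defn} is invariant under flipping the sign of every index of $\av$, by tracking how each bracket factor in the definition of $f(\av)$ transforms under $\av \mapsto -\av$. The definition of $f(\av)$ is a product of factors of two types: ``overlap'' brackets of the form $\braket{a_{z_r}|a_{y_r}}$ (or $\braket{a_{y_{-r}}|a_{z_{-r}}}$) coming from the $Y$-to-$Z$ basis changes, and ``mixer'' brackets of the form $\braket{a_{y_r}|e^{i\beta_r X}|a_{z_{r+1}}}$ (or their conjugates on the right half), together with the central factor $\braket{a_{y_p}|e^{i\beta_p X}|a_0}\braket{a_0|e^{-i\beta_p X}|a_{y_{-p}}}$. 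First I would group the factors into consecutive pairs of the form (overlap bracket)$\times$(mixer bracket), matching the pattern in the left factor of \cref{eq:paired_unity_relation}, namely $\braket{a_{z_r}|a_{y_r}}\braket{a_{y_r}|e^{i\beta_r X}|a_{z_{r+1}}}$, and likewise on the right half.

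Next, I would invoke the relation $\frac{\braket{-a_{z_1}|-a_y}}{\braket{a_{z_1}|a_y}}\frac{\braket{-a_y|e^{i\beta X}|-a_{z_2}}}{\braket{a_y|e^{i\beta X}|a_{z_2}}}=1$ from \cref{eq:paired_unity_relation}: this says that each such consecutive pair is exactly invariant under negating all three indices $a_{z_r}, a_{y_r}, a_{z_{r+1}}$ involved. Since the factorization of $f(\av)$ tiles the whole index string $(a_{z_1},a_{y_1},\dots,a_{y_p},a_0,a_{y_{-p}},\dots,a_{z_{-1}})$ by these overlapping consecutive pairs (the shared $a_{z}$ indices cancel between adjacent pairs, so the product telescopes into a product of unit ratios), the entire product $f(-\av)/f(\av)$ collapses to $1$. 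The one place that needs separate care is the center of the string: I would check that the two central mixer brackets $\braket{a_{y_p}|e^{i\beta_p X}|a_0}$ and $\braket{a_0|e^{-i\beta_p X}|a_{y_{-p}}}$ together behave like one of the paired factors (using the first equality in \cref{eq:paired_unity_relation}, $\braket{-a_y|e^{i\beta X}|-a_z}/\braket{a_y|e^{i\beta X}|a_z}=i^{a_y}$, applied with the two opposite-sign exponentials so the $i^{a_{y_p}}$ factors cancel), and the overall constant $\tfrac12$ is obviously unaffected.

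The main obstacle I anticipate is purely bookkeeping: making sure the pairing of (overlap, mixer) factors is done consistently across the ``turn-around'' at $a_0$ and across the two halves of the string (left half uses $e^{i\beta X}$, right half uses $e^{-i\beta X}$, and the overlap brackets appear in opposite order $\braket{a_z|a_y}$ versus $\braket{a_y|a_z}$), so that every index except possibly $a_0$ appears in exactly one pair whose ratio is known to be $1$. Once the pairing is set up correctly, the argument is a one-line cancellation; I would present it by writing $f(-\av)/f(\av)$ as an explicit product of ratios, grouping terms, and citing \cref{eq:paired_unity_relation} for each group. I would double-check that $a_0$ itself, which is not negated in any ``inner'' pair but \emph{is} flipped as part of $\av \mapsto -\av$, is handled by the central-factor computation above, where its sign flip is absorbed by the $e^{\pm i\beta_p X}$ pair. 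This is the same structural identity that underlies the corresponding (trivial) statement in \cite{basso2022}; the only new feature here is the presence of the $Y$-$Z$ overlap brackets, which is exactly what \cref{eq:paired_unity_relation} is designed to neutralize.
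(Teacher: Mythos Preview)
Your approach is essentially the paper's: group $f(\av)$ into $2p$ consecutive pairs, each an overlap bracket and a mixer bracket sharing the same $a_y$ index, and apply \cref{eq:paired_unity_relation} to each pair. The paper does exactly this, \emph{including at the center}: the pair $\braket{a_{z_p}|a_{y_p}}\braket{a_{y_p}|e^{i\beta_p X}|a_0}$ on the left and its mirror $\braket{a_0|e^{-i\beta_p X}|a_{y_{-p}}}\braket{a_{y_{-p}}|a_{z_{-p}}}$ on the right are already standard pairs (since $a_0$ is a $Z$-basis index), so no special treatment is required.

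Your proposed alternative for the center --- pairing the two central mixer brackets with each other --- is both unnecessary and incorrect. The ratio $\braket{-a_{y_p}|e^{i\beta_p X}|-a_0}/\braket{a_{y_p}|e^{i\beta_p X}|a_0}$ is $i^{a_{y_p}}$, but the ratio $\braket{-a_0|e^{-i\beta_p X}|-a_{y_{-p}}}/\braket{a_0|e^{-i\beta_p X}|a_{y_{-p}}}$ equals $i^{-a_{y_{-p}}}$, not $i^{-a_{y_p}}$; these do not cancel unless $a_{y_p}=a_{y_{-p}}$. Simply extend your (overlap, mixer) pairing through $r=p$ on each half and you recover the paper's one-line argument verbatim.
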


\begin{proof}
First pair off every term in \cref{eq:xy_f_defn} to get 

\begin{align}
    f(\av) = &\frac{1}{2} \of{\braket{a_{z_1}|a_{y_1}} \braket{a_{y_1}|e^{i \beta_1 X_u}|a_{z_2}}} \cdots \of{\braket{a_{z_p}|a_{y_p}} \braket{a_{y_p}|e^{i \beta_p X_u}|a_{0}}} \\
    &\times \of{\braket{a_{0}|e^{-i \beta_p X_u}|a_{y_{-p}}}\braket{a_{y_{-p}}|a_{z_{-p}}}} \cdots \of{\braket{a_{z_{-2}}|e^{-i \beta_1 X_u}|a_{y_{-1}}}\braket{a_{y_{-1}}|a_{z_{-1}}}},
\end{align} 

and likewise for $f(-\av)$. Then, $\frac{f(-\av)}{f(\av)}$ is equal to the product of the fractions of each pair. Using \cref{eq:paired_unity_relation}, the fraction of each pair is $1$.
\end{proof}

\begin{lemma}\label{lemma:xy_f_prime_of_negative_a_is_a}
$f'(-\av) = f'(\av)$,  where $f(\av)$ is defined in \cref{eq:xy_f_prime_defn}.
\end{lemma}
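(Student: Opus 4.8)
The plan is to reuse the pairing argument from the proof of \cref{lemma:xy_f_of_negative_a_is_a}. Recall that $f'(\av)$ from \cref{eq:xy_f_prime_defn} differs from $f(\av)$ of \cref{eq:xy_f_defn} only by a harmless overall constant and by the single replacement $\braket{a_{y_p}|e^{i\beta_p X_u}|a_0}\mapsto\braket{a_{y_p}|e^{i\beta_p X_u}|-a_0}$ of one bracket. Group the brackets of $f'(\av)$ into the same $2p$ consecutive pairs used for $f(\av)$; then $f'(-\av)/f'(\av)$ factors as a product of per-pair ratios. Every pair except the $p$-th one is literally the same bracket product that appears in $f(\av)$, so each of those contributes a factor $1$ by \cref{eq:paired_unity_relation}, exactly as in \cref{lemma:xy_f_of_negative_a_is_a}. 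Hence it suffices to show the one modified pair $P(\av)\defeq\braket{a_{z_p}|a_{y_p}}\braket{a_{y_p}|e^{i\beta_p X_u}|-a_0}$ satisfies $P(-\av)=P(\av)$; since $P(-\av)=\braket{-a_{z_p}|-a_{y_p}}\braket{-a_{y_p}|e^{i\beta_p X_u}|a_0}$, this reduces to a single-qubit identity.

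To obtain it, I would establish
\begin{align*}
    \frac{\braket{-a_{z_p}|-a_{y_p}}}{\braket{a_{z_p}|a_{y_p}}}=i^{-a_{y_p}},
    \qquad
    \frac{\braket{-a_{y_p}|e^{i\beta_p X_u}|a_0}}{\braket{a_{y_p}|e^{i\beta_p X_u}|-a_0}}=i^{a_{y_p}}.
\end{align*}
The first is just the complex conjugate of the overlap relation $\braket{-a_y|-a_z}/\braket{a_y|a_z}=i^{a_y}$ already recorded in \cref{eq:paired_unity_relation}. For the second, I would write $\ket{-a_0}=X_u\ket{a_0}$, commute $X_u$ past $e^{i\beta_p X_u}$, and use $\bra{a_{y_p}}X_u=i^{-a_{y_p}}\bra{-a_{y_p}}$ (immediate from $X_u\ket{a_{y_p}}=i^{a_{y_p}}\ket{-a_{y_p}}$ for the $Y$-eigenstate $\ket{a_{y_p}}$, and also derivable from the overlap relations \cref{eq:y_z_bracket_relations}). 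Multiplying the two ratios gives $P(-\av)/P(\av)=i^{-a_{y_p}}\cdot i^{a_{y_p}}=1$, and together with the unit contribution of all other pairs this yields $f'(-\av)=f'(\av)$.

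A slightly shorter variant, which I might present instead, notes that $f'(\av)=2f(\av)\,\braket{a_{y_p}|e^{i\beta_p X_u}|-a_0}/\braket{a_{y_p}|e^{i\beta_p X_u}|a_0}$ whenever the denominator is nonzero, and that the second identity above makes this ratio invariant under $\av\mapsto-\av$; \cref{lemma:xy_f_of_negative_a_is_a} then finishes immediately, with a one-line caveat (continuity in the angles, or reverting to the pairing argument) for the measure-zero set of angles where the denominator vanishes. Either way, the only genuine obstacle is that the central pair is not covered by the off-the-shelf relation \cref{eq:paired_unity_relation}, so one must isolate and cancel the extra $Y$-basis phase by hand; once $\braket{a_{y_p}|e^{i\beta_p X_u}|-a_0}=i^{-a_{y_p}}\braket{-a_{y_p}|e^{i\beta_p X_u}|a_0}$ is in place, the computation collapses.
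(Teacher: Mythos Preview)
Your proof is correct and follows the same pairing strategy as the paper, which simply says ``the proof follows the exact same logic as \cref{lemma:xy_f_of_negative_a_is_a}.'' The one place you do more work than necessary is in singling out the $p$-th pair as requiring special treatment: the ratio in \cref{eq:paired_unity_relation} depends only on the $Y$-index $a_y$, not on the particular $Z$-arguments, so with the substitution $a_z\mapsto -a_0$ it already gives $\frac{\braket{-a_{y_p}|e^{i\beta_p X_u}|a_0}}{\braket{a_{y_p}|e^{i\beta_p X_u}|-a_0}}=i^{a_{y_p}}$ directly, and the modified pair cancels just like all the others without the $X_u$-commutation detour.
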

\begin{proof}
The proof follows the exact same logic as \Cref{lemma:xy_f_of_negative_a_is_a}.
\end{proof}

For the remaining lemmas, we first introduce notation that makes our bitstring indexing for the XY ansatz match exactly with the notation in \cite{basso2022}. First, let 

\begin{align}\label{eq:xy_ansatz_b_labelling}
    B= \{(a_1, a_2,\dots,a_{2p},a_0, a_{-2p},\dots,a_{-2}, a_{-1}): a_j=\pm 1\},
\end{align}

be the set of $(4p+1)$-bit strings, which is a relabelling of the set of length $(4p+1)$ bitstrings used in the XY ansatz. Then we can define $B_0$, $T(\av)$, $\av'$, and $\ddot{\av}$ and  exactly as in \cite{basso2022}, except with length $4p+1$ bitstrings. After using these definitions, we can return to the original labelling under the mapping

\begin{align}\label{eq:xy_labelling}
    r \rightarrow 
    \begin{cases}
        z_{(r+1)/2}, &r>0, \text{odd} \\
        y_{r/2},     &r>0, \text{even} \\
        z_{0},       &r=0  \\
        y_{r/2},     &r<0, \text{even} \\
        z_{(r-1)/2}, &r<0, \text{odd}. \\
    \end{cases}
\end{align}

This allows us to show the equivalent of \cite[Lemma 1]{basso2022}:
\begin{lemma}
For $f(\av)$ as defined in \cref{eq:xy_f_defn}, for any $\av$ such that $\av \neq \ddot{\av}$,
\begin{align}
    f(\av') = -f(\av).
\end{align}
\end{lemma}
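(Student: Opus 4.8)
The plan is to mimic the proof of \cite[Lemma 1]{basso2022}, which establishes the analogous identity $f(\av') = -f(\av)$ for the MC ansatz; the key is that the extra inner-product factors $\braket{a_{z_j}|a_{y_j}}$ introduced in the XY ansatz do not spoil the argument. Recall the relevant definitions from \cite{basso2022}: $\av'$ is obtained from $\av$ by flipping the sign of the first bit after position $0$ in a certain scan (equivalently, the operation that produces the ``paired'' reflection), and $\ddot{\av}$ is the set of fixed points of this operation. First I would unpack what $f(\av)$ and $f(\av')$ look like after applying the relabelling \cref{eq:xy_labelling}: under this relabelling the transfer-matrix entries $\braket{a_j | e^{i\beta X} | a_{j+1}}$ of \cite{basso2022} become either $\braket{a_{y_j}|e^{i\beta_j X}|a_{z_{j+1}}}$ (for the $X$-rotation steps) or $\braket{a_{z_j}|a_{y_j}}$ (for the new ``basis-change'' steps, which have $\beta=0$ in effect). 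So $f$ is still a product of single-qubit matrix elements along a path, just with twice as many factors.

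The next step is to observe that the operation $\av \mapsto \av'$ flips exactly one coordinate of $\av$ (the relevant ``pivot'' index), so the ratio $f(\av')/f(\av)$ equals the ratio of the two (or, depending on whether the pivot sits at an $X$-step or a basis-change step, potentially a small bounded number of) matrix elements incident to that flipped index, with all other factors cancelling. In the MC case this ratio works out to $-1$ by the identity $\braket{-a|e^{i\beta X}|b} = $ (sign) $\times \braket{a | e^{i\beta X} | -b}$ combined with the specific structure of $\av'$; I would track through the same cancellation here, using the relations in \cref{eq:paired_unity_relation} and \cref{eq:y_z_bracket_relations} to handle any basis-change factors $\braket{a_{z}|a_{y}}$ that get touched. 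Concretely, if the pivot lands at an $X$-rotation step the argument is verbatim that of \cite{basso2022}; if it lands at a basis-change step, the phases $i^{a_y}$ from \cref{eq:paired_unity_relation} must be shown to combine into an overall $-1$, which should follow because the pivot flip changes $a_y$ to $-a_y$ in a neighboring factor as well, and the net effect telescopes just as in \Cref{lemma:xy_f_of_negative_a_is_a}.

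I expect the main obstacle to be bookkeeping: making sure the relabelling \cref{eq:xy_labelling} is applied consistently so that the ``pivot index'' of \cite{basso2022} is correctly translated, and verifying that every stray phase factor $i^{a_y - a_y a_z}$ introduced by the basis-change brackets either cancels in pairs or contributes a clean sign. The cleanest way to present this is probably to first reduce to \cite[Lemma 1]{basso2022} by grouping each consecutive $(\braket{a_z|a_y},\, \braket{a_y|e^{i\beta X}|a_{z'}})$ pair into a single ``effective'' transfer element — which is legitimate because, by \cref{eq:paired_unity_relation}, simultaneously negating both qubits in such a pair leaves the product invariant — and then invoking the $4p+1$-bit version of \cite[Lemma 1]{basso2022} on these effective elements. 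If that grouping can be made to respect the $\av \mapsto \av'$ operation, the lemma follows immediately; the risk is that the pivot index might split a pair, in which case a short direct computation using \cref{eq:y_z_bracket_relations} is needed to close that one case.
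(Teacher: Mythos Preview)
Your proposal rests on a genuine misconception about the map $\av \mapsto \av'$. You write that it ``flips exactly one coordinate of $\av$,'' but in fact (following \cite{basso2022}) $\av'$ is obtained by negating \emph{every} coordinate with index between $-T(\av)$ and $T(\av)$ --- a whole block of bits centred at $0$, whose width is determined by the first position at which $\av$ fails to be symmetric under $j \leftrightarrow -j$. Consequently the ratio $f(\av')/f(\av)$ is not a ratio of one or two adjacent matrix elements; essentially every factor of $f$ lying inside the flipped block changes, and your primary argument (``all other factors cancelling'') does not get off the ground.

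What actually works is your tertiary fallback, carried out in full, and this is exactly what the paper does. One writes $f$ as a product of consecutive pairs $\bigl(\braket{a_{z_j}|a_{y_j}},\,\braket{a_{y_j}|e^{i\beta_j X}|a_{z_{j+1}}}\bigr)$ together with their mirror pairs on the negative side, and computes $f(\av')/f(\av)$ pair by pair. For every pair lying strictly inside the flipped block, both arguments of both brackets are negated simultaneously, and \cref{eq:paired_unity_relation} forces the pair-ratio to be $1$; pairs outside the block are untouched. The only nontrivial contribution comes from the boundary, where the edge of the block cuts between a negated and an un-negated argument. Here one must split into two cases according to the parity of $T(\av)$ (equivalently, whether the boundary falls at a $z$-index or a $y$-index under the relabelling \cref{eq:xy_labelling}), and in each case a short explicit computation with \cref{eq:y_z_bracket_relations} shows the boundary ratios multiply to $(-1)^{\pm 1} = -1$. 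Your ``reduce to \cite[Lemma~1]{basso2022} via effective transfer elements'' plan does not go through cleanly either, because each grouped element depends on three bits $(a_{z_j},a_{y_j},a_{z_{j+1}})$ rather than two, and the block boundary can cut inside a pair; once you patch that, you are doing the paper's direct computation anyway.
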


\begin{proof}
We show $\frac{f(\av')}{f(\av)} = -1$, by applying \cref{eq:y_z_bracket_relations}, using the labelling in \cref{eq:xy_labelling}. We first show for $T(a)$ even. This means that there is some $r$ such that $\av_{z_s}=\av_{z_{-s}}$ for all $s\geq r+1$ and $\av_{y_s}=\av_{y_{-s}}$ for all $s \geq r$. Furthermore $\av'$ corresponds to flipping all $\av_{y_{s}}$, $\av_{y_{s}}$ for all $s \leq s-1$, and all $\av_{z_{s}}$, $\av_{z_{-s}}$ for all $s \leq r$. Thus we obtain the fraction
\begin{align*}
    &\frac{f(\av')}{f(\av)}=\of{\frac{\braket{-a_{z_1}|-a_{y_1}}\braket{-a_{y_1}|e^{i\beta_1 X}|-a_{z_2}}}{\braket{a_{z_1}|a_{y_1}}\braket{a_{y_1}|e^{i\beta_1 X}|a_{z_2}}}} \cdots  
    \of{\frac{\braket{-a_{z_{r-1}}|-a_{y_{r-1}}}\braket{-a_{y_{r-1}}|e^{i\beta_{r-1} X}|-a_{z_r}}}{\braket{a_{z_{r-1}}|a_{y_{r-1}}}\braket{a_{y_{r-1}}|e^{i\beta_{r-1} X}|a_{z_r}}}} \\
    &\times\of{\frac{\braket{-a_{z_r}|a_{y_r}}}{\braket{a_{z_r}|a_{y_r}}}} \cdot\of{1} \cdots \of{1} \cdot 
     \of{\frac{\braket{a_{y_r}|a_{z_r}}}{\braket{a_{y_r}|-a_{z_r}}}} \\
    &\times \of{\frac{\braket{a_{z_r}|e^{-i\beta_{r-1} X}|-a_{y_{-(r-1)}}}\braket{-a_{y_{-(r-1)}}|-a_{z_{-(r-1)}}}}{\braket{-a_{z_r}|e^{-i\beta_{r-1} X}|a_{y_{-(r-1)}}}\braket{a_{y_{-(r-1)}}|a_{z_{-(r-1)}}}}} \cdots
    \of{\frac{\braket{-a_{z_{-2}}|e^{-i\beta_{-1} X}|-a_{y_{-1}}}\braket{-a_{y_{-1}}|-a_{z_{-1}}}}{\braket{a_{z_{-2}}|e^{-i\beta_{-1} X}|a_{y_{-1}}}\braket{a_{y_{-1}}|a_{z_{-1}}}}},
\end{align*}
where the term $\of{1}\cdots\of{1}$ in the middle line represents the trivial fractions of unity, as in these terms the numerator and denominator are equal by the definition of $\av'$. We then note by \cref{eq:paired_unity_relation} that the only non unity terms are
    \begin{align*}
    \of{\frac{\braket{-a_{z_r}|a_{y_r}}}{\braket{a_{z_r}|a_{y_r}}}} 
    \of{\frac{\braket{a_{y_r}|a_{z_r}}}{\braket{a_{y_r}|-a_{z_r}}}} = i^{a_{y_r} a_{z_r}}i^{a_{y_r} a_{z_r}}=(-1)^{a_{y_r} a_{z_r}}=(-1)^{\pm 1}=-1.
\end{align*}
Then, if $T(a)$ is even, there is some $r$ such that $\av_{y_s}=\av_{y_{-s}}$ and $\av_{z_s}=\av_{z_{-s}}$ for all $s \geq r$. Furthermore $\av'$ corresponds to flipping all $\av_{y_{s}}$, $\av_{y_{s}}$, $\av_{z_{s}}$, $\av_{z_{-s}}$ for all $s \leq r$. Then we can once again pair off terms as we did in the even case, and apply \cref{eq:paired_unity_relation} to obtain the nonzero terms
\begin{align*}
    \frac{f(\av')}{f(\av)}=\of{\frac{\braket{-a_{z_{r}}|-a_{y_{r}}}\braket{-a_{y_{r}}|e^{i\beta_r X}|a_{z_{r+1}}}}{\braket{a_{z_{r}}|a_{y_{r}}}\braket{a_{y_{r}}|e^{i\beta_r X}|a_{z_{r+1}}}}}
    \of{\frac{\braket{a_{z_{r+1}}|e^{-i\beta_r X}|a_{y_{r}}}\braket{a_{y_{r}}|-a_{z_{r}}}}{\braket{a_{z_{r+1}}|e^{-i\beta_r X}|-a_{y_{r}}}\braket{-a_{y_{r}}|a_{z_{r}}}}} \\
    =(-1)^{a_{y_r}(a_{z_r}-1)}(-1)^{a_{y_r}}=(-1)^{a_{y_r}a_{z_r}=(-1)^{\pm1}} = -1.\tag*{\qedhere}
\end{align*}
\end{proof}

We then show the equivalent of \cite[Lemma 2]{basso2022}:
\begin{lemma}
\label{lemma:basso_lemma_2}
 For $f(\av)$ as defined in \cref{eq:xy_f_defn}, for any $\av \in B$:
\begin{align*}
    f(\ddot{\av}) = f(\av)^*
\end{align*}
\end{lemma}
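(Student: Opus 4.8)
\textbf{Proof proposal for \Cref{lemma:basso_lemma_2}.}

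The plan is to mirror the structure of \cite[Lemma 2]{basso2022} while tracking the new $Y$--$Z$ basis-change brackets that distinguish the XY ansatz from the MC ansatz. Recall that $\ddot{\av}$ is obtained from $\av$ by the operation that, in the length-$(4p+1)$ relabelling of \cref{eq:xy_ansatz_b_labelling}, reverses the order of the string about the central index $0$ (mapping $a_r \mapsto a_{-r}$) and also conjugates; concretely $\ddot{\av}$ swaps index $r$ with index $-r$. The key observation is that $f(\av)$, as rewritten in the paired-off form in the proof of \Cref{lemma:xy_f_of_negative_a_is_a}, is a product of factors that come in two types: basis-overlap brackets $\braket{a_{z_s}|a_{y_s}}$ (and their mirror images $\braket{a_{y_{-s}}|a_{z_{-s}}}$), and mixer matrix elements $\braket{a_{y_s}|e^{i\beta_s X}|a_{z_{s+1}}}$ (and their mirror images with $e^{-i\beta_s X}$ running the other way, including the central pair $\braket{a_{y_p}|e^{i\beta_p X}|a_0}$ and $\braket{a_0|e^{-i\beta_p X}|a_{y_{-p}}}$). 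First I would establish the elementary single-factor identities: $\braket{b|e^{i\beta X}|c} = \overline{\braket{c|e^{-i\beta X}|b}}$ for all $b,c \in \{\pm 1\}$ (since $e^{i\beta X}$ is unitary, $\braket{b|e^{i\beta X}|c} = \overline{\braket{c|(e^{i\beta X})^\dagger|b}} = \overline{\braket{c|e^{-i\beta X}|b}}$), and analogously $\braket{b|c}_{YZ}$ satisfies $\braket{a_y|a_z} = \overline{\braket{a_z|a_y}}$ where the overbar denotes complex conjugation. These are exactly the facts used in \cite{basso2022} for the $X$-basis case; the only genuinely new ingredient is the $Y$-to-$Z$ overlap, which is a fixed complex number of modulus $1/\sqrt{2}$ whose conjugate is its ``reverse'' overlap.

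Second I would carry out the factor-by-factor comparison of $f(\ddot{\av})$ against $\overline{f(\av)}$. Writing out $f(\ddot{\av})$ using the definition \cref{eq:xy_f_defn} with every index $r$ replaced by $-r$: the bracket $\braket{a_{z_1}|a_{y_1}}$ in $f(\ddot\av)$ becomes $\braket{(\ddot a)_{z_1}|(\ddot a)_{y_1}} = \braket{a_{z_{-1}}|a_{y_{-1}}}$, which is precisely the complex conjugate of the mirror factor $\braket{a_{y_{-1}}|a_{z_{-1}}}$ appearing at the far right end of $f(\av)$; and so on, working inward from both ends. Each mixer element $\braket{a_{y_s}|e^{i\beta_s X}|a_{z_{s+1}}}$ in the left half of $f(\ddot\av)$ matches — via the single-factor identity above — the conjugate of the corresponding element $\braket{a_{z_{-(s+1)}}|e^{-i\beta_s X}|a_{y_{-s}}}$ in the right half of $f(\av)$. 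The central factors $\braket{a_{y_p}|e^{i\beta_p X}|a_0}$ and $\braket{a_0|e^{-i\beta_p X}|a_{y_{-p}}}$ get swapped and conjugated into each other. Since the overall prefactor $\tfrac12$ is real, collecting all factors gives $f(\ddot{\av}) = \overline{f(\av)} = f(\av)^*$.

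I expect the main obstacle to be purely bookkeeping: making sure the index reversal $r \mapsto -r$ composed with the relabelling \cref{eq:xy_labelling} pairs each factor of $f(\ddot\av)$ with the correct conjugated factor of $f(\av)$, in particular getting the roles of the $z$-to-$y$ versus $y$-to-$z$ overlap brackets straight (they are conjugates of one another, not equal) and handling the central ``seam'' at index $0$ correctly. There is no subtlety requiring the relations \cref{eq:paired_unity_relation}--\cref{eq:y_z_bracket_relations} here — those were needed for the sign lemma, not this one — so once the indexing is set up the verification is a routine term matching identical in spirit to \cite[Lemma 2]{basso2022}. A clean way to present it is to note that complex conjugation of $f(\av)$ reverses the order of the (scalar) factors trivially and conjugates each, then observe that ``reverse the list of factors and conjugate each'' is exactly the combinatorial effect of the map $\av \mapsto \ddot\av$ on the defining product, because the mixer exponents flip sign under conjugation in just the way that matches the $e^{i\beta}\leftrightarrow e^{-i\beta}$ structure of the two halves of $f$.
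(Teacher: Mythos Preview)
Your proposal is correct and takes essentially the same approach as the paper: both arguments rest on the elementary identities $\braket{b|c}=\overline{\braket{c|b}}$ and $\braket{b|e^{i\beta X}|c}=\overline{\braket{c|e^{-i\beta X}|b}}$, then match each factor of $f(\ddot\av)$ with the conjugate of its mirror factor in $f(\av)$. The only cosmetic difference is that the paper groups four factors at a time (one $z$--$y$ overlap and one mixer element from each half) and verifies the identity on those quadruples, whereas you do it factor by factor; the content is identical. (Minor wording note: your parenthetical ``and also conjugates'' in the recollection of $\ddot\av$ is extraneous---$\ddot\av$ is purely the index reversal $a_r\leftrightarrow a_{-r}$, as you then state correctly.)
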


\begin{proof}
This follows by the definition of $f(\av)$ along with noting that, for all $1 \leq r \leq p$: 
\begin{align*}
    \braket{a_{z_r}|a_{y_r}}&\braket{a_{y_r}|e^{i\beta_r X}|a_{z_{r+1}}} \braket{a_{z_{-(r+1)}}|e^{-i\beta_r X}|a_{y_{-r}}}\braket{a_{y_{-r}}|a_{z_{-r}}} \\
    =&\braket{\ddot{a}_{z_{-r}}|\ddot{a}_{y_{-r}}}\braket{\ddot{a}_{y_{-r}}|e^{i\beta_r X}|\ddot{a}_{z_{-(r+1)}}} \braket{\ddot{a}_{z_(r+1)}|e^{-i\beta_r X}|\ddot{a}_{y_{r}}}\braket{\ddot{a}_{y_{r}}|\ddot{a}_{z_{r}}} \\ &=\braket{\ddot{a}_{z_r}|\ddot{a}_{y_r}}^*\braket{\ddot{a}_{y_r}|e^{i\beta_r X}|\ddot{a}_{z_{r+1}}}^* \braket{\ddot{a}_{z_{-(r+1)}}|e^{-i\beta_r X}|\ddot{a}_{y_{-r}}}^*\braket{\ddot{a}_{y_{-r}}|\ddot{a}_{z_{-r}}}^*. \qedhere
\end{align*}
\end{proof}

We then note that Lemmas 3, 4, 5, and 6 from \cite{basso2022} apply in our setting: they follow exactly the exact same logic due to our abstraction of the concepts $f(\av)$, $H(\av)$, $B_0$, $T(\av)$, $\ddot{\av}$, $\av'$, and $\Gammav$. 
We reproduce \cite[Lemma 5]{basso2022} below:
\begin{lemma}\label{lemma:yz_sum_fH_is_one}
For $f(\av)$ as defined in \cref{eq:xy_f_defn} and $H_D^{(m)}(\av)$ as defined in \cref{eq:xy_H_defn}, the following holds for all $m$:
\begin{align*}
    \sum_{\av}  f(\av) H_D^{(m)}(\av) =1.
\end{align*}
\end{lemma}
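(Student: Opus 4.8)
The plan is to prove the identity by induction on $m$, following the proof of \cite[Lemma~5]{basso2022}. As a guide and sanity check, note that $\sum_{\av} f(\av) H_D^{(m)}(\av)$ has a direct operational meaning: reading the brackets in \cref{eq:xy_f_defn} and the nested products in \cref{eq:xy_H_defn} as path amplitudes, this sum reconstructs $\braket{s | U^\dagger U | s}$ for the XY-ansatz unitary $U$ on the tree-like neighborhood it sees; since $U$ is unitary and $\ket{s}=\ket{+}^{\otimes n}$ is normalized, the value must be $1$. The work is to turn this into a clean induction.

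For the base case $m=0$ we have $H_D^{(0)}(\av)=1$, so we must show $\sum_{\av} f(\av)=1$. At every summed (``internal'') index $a_{y_r}, a_{z_r}, a_0$ insert the completeness relation $\sum_c \ket{c}\bra{c}=\id$ in the appropriate ($Y$ or $Z$) basis. The resulting projectors telescope, collapsing the product of brackets to $\tfrac12\braket{a_{z_1} | e^{i\beta_1 X}\cdots e^{i\beta_p X}\, e^{-i\beta_p X}\cdots e^{-i\beta_1 X} | a_{z_{-1}}}$; the forward Trotter layers exactly cancel the backward ones (all being functions of $X$), leaving $\tfrac12\,\delta_{a_{z_1},a_{z_{-1}}}$, so the remaining sum over $a_{z_1}=a_{z_{-1}}\in\{\pm1\}$ gives $\tfrac12\cdot 2=1$.

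For the inductive step, assume $\sum_{\av} f(\av) H_D^{(m-1)}(\av)=1$ and substitute $H_D^{(m)}(\bv)=\of{\sum_{\av} f(\av)H_D^{(m-1)}(\av)e^{-i\Gammav\cdot(\av\bv)/\sqrt D}}^D$ into $\sum_{\bv} f(\bv) H_D^{(m)}(\bv)$. Expand the $D$-th power over $D$ independent ``child'' strings $\av^{(1)},\dots,\av^{(D)}$ and perform the $\bv$-sum first. After the same completeness-relation telescoping as in the base case, the cross-phases $\prod_{l} e^{-i\Gammav\cdot(\av^{(l)}\bv)/\sqrt D}$ turn into one-qubit rotations $e^{-i(\Gamma_j/\sqrt D)(\sum_l a^{(l)}_j)\,O_j}$ ($O_j\in\{Y,Z\}$) sandwiched between the cancelling Trotter layers; because $\Gamma_0=0$ and $\Gamma_{-j}=-\Gamma_j$, and because $f$ and $H^{(m-1)}_D$ obey the reflection and conjugation symmetries already established (the analogue of \cite[Lemma~1]{basso2022} together with \cref{lemma:xy_f_of_negative_a_is_a,lemma:basso_lemma_2}), these pieces reorganize into exactly the combinatorial identity treated in \cite[Lemma~5]{basso2022}, which the inductive hypothesis forces to equal $1$.

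The main obstacle is precisely this inductive step: bookkeeping the cross-phases between the parent string $\bv$ and the $D$ child strings $\av^{(l)}$ once the $D$-th power is expanded, and checking that they recombine into the recursion rather than generating spurious terms. This bookkeeping is exactly \cite[Lemma~5]{basso2022}, and it goes through verbatim in our setting once one verifies — as in \cref{lemma:xy_f_of_negative_a_is_a,lemma:xy_f_prime_of_negative_a_is_a,lemma:basso_lemma_2} and the surrounding discussion — that our $f$, $f'$, $H^{(m)}_D$, and $\Gammav$ satisfy the same structural identities as the corresponding objects there. Accordingly, the cleanest route is to reduce to \cite[Lemma~5]{basso2022} after this verification, rather than redo the combinatorics from scratch.
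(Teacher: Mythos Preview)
Your proposal is correct and lands on exactly the paper's approach: verify the structural analogues of \cite[Lemmas~1--2]{basso2022} for the XY-ansatz objects (done here as \cref{lemma:xy_f_of_negative_a_is_a,lemma:basso_lemma_2} and the surrounding remarks on $B_0$, $T(\av)$, $\av'$, $\ddot{\av}$, $\Gammav$) and then invoke \cite[Lemma~5]{basso2022} verbatim. Your operational reading of $\sum_{\av} f(\av) H_D^{(m)}(\av)$ as $\braket{s|U^\dagger U|s}=1$ on the depth-$m$ subtree is a correct standalone argument the paper does not spell out; the hand-wavy inductive-step sketch is unnecessary given that you (rightly) defer to \cite{basso2022} in the end.
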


For one calculation, we need the fact that the analogous sum for $f'$ is real:
\begin{lemma}
\label{lemma:f_prime_h_sum_is_real}
    For $f'(\av)$ as defined in \cref{eq:mc_f_prime_defn} and \cref{eq:xy_f_prime_defn} and $H_D^{(m)}(\av)$ as defined in \cref{eq:mc_H_defn} and \cref{eq:xy_H_defn}, we have $ \sum_{\av}  f'(\av) H_D^{(m)}(\av) \in \mathbb{R}$ for all $m$.
\end{lemma}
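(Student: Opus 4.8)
The plan is to exhibit an involution on the set of $\pm1$ bitstrings under which both $f'$ and $H_D^{(m)}$ are carried to their complex conjugates; reindexing the sum by this involution then forces $\sum_{\av} f'(\av)\,H_D^{(m)}(\av)$ to equal its own conjugate, hence to be real. The involution I would use is $\av\mapsto\tilde\av$, obtained by first applying time reversal $\ddot{}$ (the index flip $j\leftrightarrow -j$, which fixes the central index $0$) and then negating the central bit $a_0$. This is manifestly an involution and a bijection of bitstrings, and it applies verbatim to the XY ansatz after the index relabelling \cref{eq:xy_labelling}.

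First I would establish $f'(\tilde\av)=\overline{f'(\av)}$. Recall $f(\ddot\av)=\overline{f(\av)}$ (\Cref{lemma:basso_lemma_2} for the XY ansatz, \cite[Lemma~2]{basso2022} for the MC ansatz), proved by pairing each single-qubit matrix element $\braket{\cdot|e^{\pm i\beta X}|\cdot}$ in $f$ with its mirror image. Now $f'$ and $f$ share every bracket not involving $a_0$, so on those brackets the same pairing already produces the conjugate; the only brackets that do involve $a_0$ are the two ``central'' factors, and there $f'$ replaces $\braket{a_p|e^{i\beta_p X}|a_0}$ by $\braket{a_p|e^{i\beta_p X}|-a_0}$. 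A direct check with $e^{i\beta X}=\cos\beta\, I+i\sin\beta\, X$ shows that negating $a_0$ (on top of the index flip) converts the product of these two central factors into the conjugate of the corresponding product in $f'(\av)$ --- this is precisely the role of the extra central-bit flip in $\tilde{}$. Combining the central and non-central pieces gives $f'(\tilde\av)=\overline{f'(\av)}$, and the XY-ansatz case is identical, with the overlaps $\braket{a_{z_r}|a_{y_r}}$ handled exactly as in the proof of the XY form of \Cref{lemma:basso_lemma_2}.

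Next I would establish $H_D^{(m)}(\tilde\av)=\overline{H_D^{(m)}(\av)}$. The $\av$-dependence of $H_D^{(m)}$ enters only through $\Gammav\cdot(\av\bv)$, and $\Gamma_0=0$, so $H_D^{(m)}$ does not depend on $a_0$ at all; hence $H_D^{(m)}(\tilde\av)=H_D^{(m)}(\ddot\av)$. The identity $H_D^{(m)}(\ddot\av)=\overline{H_D^{(m)}(\av)}$ then follows by a short induction on $m$: the base case $H_D^{(0)}\equiv1$ is trivial, and the inductive step substitutes $\bv\mapsto\ddot\bv$ in the defining sum and uses $\Gammav\cdot(\ddot\av\,\ddot\bv)=-\Gammav\cdot(\av\bv)$ (from $\Gamma_{-j}=-\Gamma_j$), $f(\ddot\bv)=\overline{f(\bv)}$, the inductive hypothesis, and the fact that raising to the $D$-th power commutes with conjugation; this conjugation symmetry of $H_D^{(m)}$ is one of the \cite{basso2022} lemmas already noted to carry over. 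With both pieces in place, and since $\tilde{}$ is a bijection,
\[
\overline{\sum_{\av} f'(\av)\,H_D^{(m)}(\av)} \;=\; \sum_{\av}\overline{f'(\av)}\;\overline{H_D^{(m)}(\av)} \;=\; \sum_{\av} f'(\tilde\av)\,H_D^{(m)}(\tilde\av) \;=\; \sum_{\av} f'(\av)\,H_D^{(m)}(\av),
\]
so the sum is real. I expect the $f'$ step to be the crux: the obvious first try --- the time-reversal involution alone, which works for $f$ --- fails for $f'$ precisely because of the $-a_0$ in its central bracket, and the key observation is that composing $\ddot{}$ with the central-bit flip repairs exactly this discrepancy while leaving every other factor untouched. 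Everything else is a routine adaptation of the Basso machinery already imported in this appendix.
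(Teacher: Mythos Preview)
Your proposal is correct and essentially identical to the paper's proof. The paper defines $g(\av)$ as the operation flipping only $a_0$, so your involution $\tilde\av$ is exactly $g(\ddot\av)$; the paper then uses precisely your two key facts---$H_D^{(m)}(g(\av))=H_D^{(m)}(\av)$ (from $\Gamma_0=0$) combined with $H_D^{(m)}(\ddot\av)=H_D^{(m)}(\av)^*$, and $f'(g(\ddot\av))=f'(\av)^*$---and reindexes the sum the same way you do.
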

\begin{proof}
    We first write 
    \begin{align*}
        \sum_{\av}  f'(\av) H_D^{(m)}(\av)  = \frac{1}{2}\sum_{\av}  \left( f'(\av) H_D^{(m)}(\av)  + f'(\ddot{\av}) H_D^{(m)}(\ddot{\av})  \right)\,.
    \end{align*}
    We complete the proof by showing $ \sum_{\av} f'(\ddot{\av}) H_D^{(m)}(\ddot{\av}) = \left(\sum_{\av}  f'(\av) H_D^{(m)}(\av) \right)^*$. 
    First of all, let $g(\av)$ be the operation that only flips $a_0$. Then we have 
    \begin{align*}
        H_D^{(m)}(\av) = H_D^{(m)}(g(\av))\,, 
    \end{align*}
    by definition of $H_D^{(m)}(\av)$ (since $\Gamma_0 = 0$). Also, by a proof almost identical to \cref{lemma:basso_lemma_2}, we have
    \begin{align*}
        f'(g(\ddot{\av})) = f'(\av)^*\,.
    \end{align*}
    Putting these together, we have 
    \begin{align*}
        \sum_{\av} f'(\ddot{\av}) H_D^{(m)}(\ddot{\av})
        =
        \sum_{\av} f'(g(\ddot{\av})) H_D^{(m)}(g(\ddot{\av}))
        =
        \sum_{\av} f'(\av)^* H_D^{(m)}(\ddot{\av}) =  \sum_{\av} f'(\av)^* H_D^{(m)}(\av)^*\,,
    \end{align*}
    where the last equality follows from $H_D^{(m)}(\ddot{\av}) = H_D^{(m)}(\av)^*$ (our version of \cite[Lemma 4]{basso2022}).
\end{proof}

Finally, we prove one other statement corresponding to \cite[Equation 4.14]{basso2022} in our setting:
\begin{lemma}\label{lemma:yz_H_negative_a_is_H_a}
For $f(\av)$ as defined in \cref{eq:xy_f_defn} and $H_D^{(m)}(\av)$ as defined in \cref{eq:xy_H_defn}, the following holds for all $m$ and all $\av$:
\begin{align*}
    H^{\ofc{m}}(-\av) = H^{\ofc{m}}(\av)
\end{align*}
\end{lemma}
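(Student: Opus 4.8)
The plan is to prove $H_D^{(m)}(-\av) = H_D^{(m)}(\av)$ by induction on $m$, exactly paralleling the structure of the recursive definition of $H_D^{(m)}$ in \cref{eq:xy_H_defn}. The base case $m=0$ is immediate since $H_D^{(0)}(\av) \equiv 1$ is constant. For the inductive step, I would write out
\begin{align*}
    H_D^{(m)}(-\av) = \of{\sum_{\bv} H_D^{(m-1)}(\bv) \exp\ofb{-\frac{i}{\sqrt{D}} \Gammav \cdot ((-\av)\bv)} f(\bv)}^D,
\end{align*}
and then reindex the sum by substituting $\bv \mapsto -\bv$. Under this substitution $(-\av)(-\bv) = \av\bv$, so the exponential factor becomes $\exp[-\tfrac{i}{\sqrt{D}}\Gammav\cdot(\av\bv)]$, which is what appears in $H_D^{(m)}(\av)$. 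It remains to handle the other two factors: $f(-\bv) = f(\bv)$ by \cref{lemma:xy_f_of_negative_a_is_a}, and $H_D^{(m-1)}(-\bv) = H_D^{(m-1)}(\bv)$ by the inductive hypothesis. After the substitution the summand matches term-by-term with that of $H_D^{(m)}(\av)$, so the sums are equal, and raising to the $D$-th power preserves the equality.

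Concretely the key steps, in order, are: (1) state the induction on $m$ and dispatch the base case $H_D^{(0)} \equiv 1$; (2) in the inductive step, apply the change of variables $\bv \mapsto -\bv$ to the sum defining $H_D^{(m)}(-\av)$; (3) use the antipodal symmetry of the dot product, namely $\Gammav\cdot((-\av)(-\bv)) = \Gammav\cdot(\av\bv)$ (which just uses that $(-1)(-1)=1$ componentwise); (4) invoke \cref{lemma:xy_f_of_negative_a_is_a} to replace $f(-\bv)$ by $f(\bv)$; (5) invoke the inductive hypothesis to replace $H_D^{(m-1)}(-\bv)$ by $H_D^{(m-1)}(\bv)$; (6) conclude that the two sums are identical and hence their $D$-th powers agree. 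One could equivalently use the second (cosine) form of the recursion in \cref{eq:xy_H_defn}, where $\cos$ being even makes the sign flip of $\av$ even more transparent, but the argument is the same.

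I do not expect any real obstacle here — this is a routine induction that mirrors the corresponding fact (\cite[Equation 4.14]{basso2022}) from the MaxCut QAOA analysis. The only thing that requires a tiny bit of care is confirming that \cref{lemma:xy_f_of_negative_a_is_a} (proved earlier in this appendix) is exactly the statement needed, i.e. that the relevant $f$ here is the one appearing inside $H_D^{(m)}$, and that the indexing conventions for $\av$ (the $(4p+1)$-bit strings with the $y/z$ interleaving) are consistent between the lemma and the recursion. Since both the lemma and the recursion are stated in terms of the same $f(\av)$ from \cref{eq:xy_f_defn}, this is immediate. If one wanted to also record the analogous statement with $f'$ in place of $f$, the same proof goes through using \cref{lemma:xy_f_prime_of_negative_a_is_a} instead, though that is not needed for the present lemma.
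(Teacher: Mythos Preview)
Your proposal is correct and follows essentially the same approach as the paper: induction on $m$, with the base case $H_D^{(0)}\equiv 1$, and the inductive step handled by the substitution $\bv\mapsto -\bv$ together with \cref{lemma:xy_f_of_negative_a_is_a} and the inductive hypothesis. The paper presents the inductive step slightly differently---it first passes to the cosine form of the recursion and then appeals to evenness of $\cos$---but this is only a cosmetic difference from your direct substitution in the exponential form.
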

\begin{proof}
We prove by induction. For the base case, clearly $H^{\of{0}}(-\av) = 1 =  H^{\of{0}}(\av)$. Then we assume $H^{\of{m-1}}(-\av) = H^{\of{m-1}}(\av)$. Then we can take $\bv \rightarrow -\bv$ in the summand of \cref{eq:xy_H_defn} and combine it with its original form to find
\begin{align}
    H^{(m)}_D(\av) = \of{\sum_{\bv} f(\bv) H^{(m-1)}_D(\bv) \cos\ofb{\frac{1}{\sqrt{D}}\Gammav\cdot(\av \bv)}}^D.
\end{align}
Now, since $f(-\bv)=f(\bv)$, $H^{(m-1)}_D(-\bv)=H^{(m-1)}_D(\bv)$, and $\cos\ofb{\cdot}$ is even in $\bv$, the induction step holds.
\end{proof}

\clearpage
\newpage

\section{Proofs of iterations}\label{apx:iteration_proofs}
We now prove the statements from \cref{sec:iterations}.
\subsection{MC ansatz}\label{apx:iteration_proofs/mc}
We first prove \cref{thm:mc_ansatz_energy}. We seek the expected energy of the MC ansatz defined in \cref{eq:mc_ansatz} on Hamiltonians of the form in \cref{eq:2local_H_form}. For conciseness, we omit the identity term, as it is trivial. The analysis follows very closely to \cite{basso2022}.
We assume we have a $(D+1)$-regular graph with girth $> 2p+1$, and seek the expected energy of QMC with respect to the MC ansatz at depth $p$. As noted in the introduction, due to the regularity and high-girth of the graph, each edge contributes equally to the energy, so it suffices to find the energy contribution from single edge $(L,R)$
\begin{align} 
    &\braket{\paramv | (c_X X_L X_R + c_Y Y_L Y_R + c_Z Z_L Z_R) | \paramv}  \nonumber\\
    &\hspace{100px}=
    \bra{s} e^{i\gamma_1 C_z} e^{i\beta_1 B} \cdots e^{i \gamma_p C_z} e^{i\beta_p B} \of{c_X X_L X_R+ c_Y Y_L Y_R + c_Z Z_L Z_R} \nonumber\\
    &\hspace{116px} \times e^{-i\beta_p B} e^{-i \gamma_p C_z} \cdots e^{-i\beta_1 B} e^{-i\gamma_1 C_z} \ket{s} \label{eq:mc_ansatz_start}.
\end{align} 

To evaluate \cref{eq:mc_ansatz_start}, we note that in the Heisenberg picture, the operators 
\begin{align*}
    e^{i\gamma_1 C_z} e^{i\beta_1 B} \cdots e^{i \gamma_p C_z} e^{i\beta_p B} X_L X_R e^{-i\beta_p B} e^{-i \gamma_p C_z} \cdots e^{-i\beta_1 B} e^{-i\gamma_1 C_z},
\end{align*}
etc, only depend non-trivially on the subgraph of all vertices at most distance $p$ from $L$ and $R$. Because we
considered a high-girth, $D+1$ regular graph, this subgraph is precisely given by a pair of trees, glued at their roots. (For a visual representation, see \cite[Figure 2]{basso2022}). This subgraph has a total of $n=2\of{D^p+\cdots+D+1}$ vertices, which we label with the set $\ofb{n}$, and we can thus restrict our calculation to these $\ofb{n}$ vertices. We then begin by inserting $2p+2$ identities via complete sets in the computational (Pauli $Z$) basis, each of which iterates over $2^n$ $\pm{1}$-valued bitstrings:
\begin{align}
    &\braket{\paramv| c_X X_L X_R + c_Y Y_L Y_R + c_Z Z_L Z_R|\paramv } \nonumber\\
    &\hspace{10px}=
    \sum_{\{\zv^\ofb{i}\}_{i\in \mathcal{A}}} \braket{s |\zv^\ofb{1}}  
    e^{i\gamma_1 C_z(\zv^\ofb{1})} 
    \braket{\zv^\ofb{1}| e^{i\beta_1 B} |\zv^\ofb{2}} \cdots
    e^{i\gamma_{p} C_z(\zv^\ofb{p})}
    \braket{\zv^\ofb{p}| e^{i\beta_p B} |\zv^\ofb{p+1}}  \nonumber \\
    &\hspace{47px} \times\braket{\zv^\ofb{p+1}| c_X X_L X_R + c_Y Y_L Y_R + c_Z Z_L Z_R |\zv^\ofb{-(p+1)}} \nonumber \\
    &\hspace{47px} \times \braket{\zv^\ofb{-(p+1)}| e^{-i\beta_p B} |\zv^\ofb{-p}}
    e^{-i\gamma_p C_z(\zv^\ofb{-p})} \cdots
    \braket{\zv^\ofb{-2}| e^{-i\beta_1 B} |\zv^\ofb{-1}}
    e^{-i\gamma_1 C_z(\zv^\ofb{-1})} \braket{\zv^\ofb{-1}|s} \nonumber \\
    &\hspace{10px}=
    \sum_{\{\zv^\ofb{i}\}_{i\in \mathcal{A}}} \exp\ofb{i \sum_{j\in{\ofb{p}}} \gamma_j \of{C_z\of{\zv^\ofb{j}}-C\of{\zv^\ofb{-j}}}} \braket{\zv^\ofb{p+1}| c_X X_L X_R + c_Y Y_L Y_R + c_Z Z_L Z_R |\zv^\ofb{-(p+1)}} \nonumber \\
    &\hspace{47px}\times\frac{1}{2^n} \braket{\zv^\ofb{1}| e^{i\beta_1 B} |\zv^\ofb{2}} \cdots \braket{\zv^\ofb{p}| e^{i\beta_p B} |\zv^\ofb{p+1}} \braket{\zv^\ofb{-(p+1)}| e^{-i\beta_p B} |\zv^\ofb{-p}} \cdots \braket{\zv^\ofb{-2}| e^{-i\beta_1 B} |\zv^\ofb{-1}},\label{eq:mc_ansatz_basis} 
\end{align}

where
\begin{align}
    \mathcal{A} &= \ofc{1,2,\ldots,p+1,-(p+1),\ldots,-2,-1}.
\end{align}
We then note the relations
\begin{align} \label{eq:xyz_relations} 
    X\ket{z} &= \ket{-z}, \\
    Y\ket{z} &= iz\ket{-z}, \\
    Z\ket{z} &= z\ket{z},
\end{align} 

for $z \in \{\pm 1\}$,  which allows us to evaluate

\begin{align}
    \braket{\zv^\ofb{p+1}| X_L X_R |\zv^\ofb{-(p+1)}} &=  \delta_{z^{\ofb{p+1}}_L, -z^{\ofb{-(p+1)}}_L} \delta_{z^{\ofb{p+1}}_R, -z^{\ofc{-(p+1)}}_R} \prod_{v \in \ofb{n}/\ofc{L,R}} \delta_{z^{\ofb{p+1}}_v,z^{\ofb{-(p+1)}}_v}, \\
    \braket{\zv^\ofb{p+1}| Y_L Y_R |\zv^\ofb{-(p+1)}} &=  -z^\ofb{-(p+1)}_L z^\ofb{-(p+1)}_R\delta_{z^{\ofb{p+1}}_L, -z^{\ofb{-(p+1)}}_L} \delta_{z^{\ofb{p+1}}_R, -z^{\ofc{-(p+1)}}_R} \prod_{v \in \ofb{n}/\ofc{L,R}} \delta_{z^{\ofb{p+1}}_v,z^{\ofb{-(p+1)}}_v}, \\
    \braket{\zv^\ofb{p+1}| Z_L Z_R |\zv^\ofb{-(p+1)}} &= z^\ofb{-(p+1)}_L z^\ofb{-(p+1)}_R \prod_{v \in \ofb{n}} \delta_{z^{\ofb{p+1}}_v,z^{\ofb{-(p+1)}}_v}.
\end{align}
This can alternatively be written as
\begin{align}
    \braket{\zv^\ofb{p+1}| X_L X_R |\zv^\ofb{-(p+1)}} &= 
    \begin{cases}
    1, & \text{if} \quad {\zv^\ofb{p+1}}^{(L,R)} = \zv^\ofb{-(p+1)} \\
    0, & \text{else}
    \end{cases}, \\
    \braket{\zv^\ofb{p+1}| Y_L Y_R |\zv^\ofb{-(p+1)}} &= 
    \begin{cases}
    -z^\ofb{-(p+1)}_L z^\ofb{-(p+1)}_R, & \text{if} \quad {\zv^\ofb{p+1}}^{(L,R)} = \zv^\ofb{-(p+1)} \\
    0, & \text{else}
    \end{cases}, \\
    \braket{\zv^\ofb{p+1}| X_L X_R |\zv^\ofb{-(p+1)}} &= 
    \begin{cases}
    z^\ofb{-(p+1)}_L z^\ofb{-(p+1)}_R, & \text{if} \quad \zv^\ofb{p+1} = \zv^\ofb{-(p+1)} \\
    0, & \text{else}
    \end{cases} \label{eq:inner_product_nonzero_cases}, 
\end{align}

where ${\zv^\ofb{j}}^{(L,R)}$ denotes the bitstring $\zv^\ofb{j}$ with bits $L$ and $R$ flipped. To simplify the sum, we define $\zv^\ofb{0}\defeq\zv^\ofb{-(p+1)}$, and via the above relations restrict $\zv^\ofb{(p+1)}$ the cases where the summand yields nonzero values, allowing us to remove the sum over the complete set of bitstring $\zv^\ofb{(p+1)}$. This allows us to rewrite \cref{eq:mc_ansatz_basis} as 

\begin{align}
    &\braket{\paramv| c_X X_L X_R + c_Y Y_L Y_R + c_Z Z_L Z_R|\paramv } \nonumber\\
    &\hspace{50px}=
    \sum_{\{\zv^\ofb{i}\}_{i\in \mathcal{A'}}} \exp\ofb{i \sum_{j\in{\mathcal{A}'}} \Gamma_j C\of{\zv^\ofb{j}}} \frac{1}{2^n} \prod_{j=1}^{p-1} \braket{\zv^\ofb{j}| e^{i\beta_j B} |\zv^\ofb{j+1}} \braket{\zv^\ofb{-(j+1)}| e^{-i\beta_j B} |\zv^\ofb{-j}} \nonumber\\
    &\hspace{90px}\times \Bigg( c_X \braket{\zv^\ofb{p}| e^{i\beta_p B} |{\zv^\ofb{0}}^{(L,R)}} \braket{\zv^\ofb{0}| e^{-i\beta_p B} |\zv^\ofb{-p}}\nonumber \\
    &\hspace{110px}- c_Y z^\ofb{0}_L z^\ofb{0}_R\braket{\zv^\ofb{p}| e^{i\beta_p B} |{\zv^\ofb{0}}^{(L,R)}} \braket{\zv^\ofb{0}| e^{-i\beta_p B} |\zv^\ofb{-p}}\nonumber \\
    &\hspace{110px}+ c_Z z^\ofb{0}_L z^\ofb{0}_R\braket{\zv^\ofb{p}| e^{i\beta_p B} |{\zv^\ofb{0}}} \braket{\zv^\ofb{0}| e^{-i\beta_p B} |\zv^\ofb{-p}}\Bigg)\label{eq:mc_ansatz_before_bit_indexing}, 
\end{align}

where the third, fourth, and fifth line represent the contribution from $\braket{X_LX_R}$, $\braket{Y_LY_R}$, and $\braket{Z_LZ_R}$, respectively, and we defined
\begin{align}
    \mathcal{A}' &= \ofc{1,2,\ldots,0,\ldots,-2,-1},\\
    \Gammav &= (\gamma_1, \gamma_2, \ldots, \gamma_p, 0, -\gamma_p, \ldots, -\gamma_2, -\gamma_1).
\end{align}
In this formulation, we have a sum over $2p+1$ bitstrings of length $n$ each. We may instead replace this sum with an equivalent sum over the over $n$ length-($2p+1$) complete basis sets (one for each node $u$ in the subgraph). This substitution simply indexes over bit assignments $z^\ofb{i}_u$ vertex-by-vertex, rather than layer-by-layer. We then apply from \cref{sec:background/quantum_algos} that $C_z(z) \defeq -\frac{1}{\sqrt{D}}\sum_{(u,v)\in E}z_uz_v$ to rewrite the sum as

\begin{align}
    &\braket{\paramv| c_X X_L X_R + c_Y Y_L Y_R + c_Z Z_L Z_R|\paramv } \nonumber\\
    &\hspace{50px}=
    \sum_{{\ofc{\zv_u}}_{u\in \ofb{n}}} \exp\ofb{-\frac{i}{\sqrt{D}}  \sum_{(u',v')\in E} \Gammav  \cdot (\zv_{u'} \zv_{v'})} \of{\prod_{v \in \ofb{n} / \ofc{L,R}} f\of{\zv_v}} \nonumber \\
    &\hspace{80px} \times \Big(c_X f'(\zv_L)f'(\zv_R)-c_Y \zv_L\zv_Rf'(\zv_L)f'(\zv_R)+c_Z \zv_L\zv_Rf(\zv_L)f(\zv_R)\Big)\label{eq:mc_ansatz_bitstring_sum},
\end{align}

where 
\begin{align} 
    f(\av) &\defeq \frac{1}{2} \braket{a_1 | e^{i \beta_1 X} | a_2} \cdots \braket{ a_{p-1} | e^{i \beta_{p-1} X} | a_p} \braket{a_p | e^{i \beta_p X} | a_0} \nonumber \\
    &\quad \times \braket{a_{0} | e^{-i \beta_p X} | a_{-p}} \braket{a_{-p} | e^{-i \beta_{p-1} X} | a_{-(p-1)}} \cdots \braket{a_{-2} | e^{-i \beta_1 X} | a_{-1}}, \label{eq:mc_f_defn} \\
    f'(\av) &\defeq \frac{1}{2} \braket{a_1 | e^{i \beta_1 X} | a_2} \cdots \braket{ a_{p-1} | e^{i \beta_{p-1} X} | a_p} \braket{a_p | e^{i \beta_p X} | -a_0} \nonumber \\
    &\quad \times \braket{a_{0} | e^{-i \beta_p X} | a_{-p}} \braket{a_{-p} | e^{-i \beta_{p-1} X} | a_{-(p-1)}} \cdots \braket{a_{-2} | e^{-i \beta_1 X} | a_{-1}} \label{eq:mc_f_prime_defn}.
\end{align} 

Note that $\ket{-a_0}$ is equivalent to $X\ket{a_0}$, and that $f'$ is \emph{not} a derivative. With the form of \cref{eq:mc_ansatz_bitstring_sum} in hand, we may now search for an iterative formula to evaluate the expected energy of edge $(L,R)$.

Isolating a single leaf node $w$ of the subgraph, we can compute the sum over $\zv_w$

\begin{align}
    \sum_{\zv_w}  \exp\ofb{ -\frac{i}{\sqrt{D}}  \Gammav  \cdot (\zv_w \zv_{\parent(w)})}  f(\zv_w) \label{eq:H_defn_start},
\end{align}

where $\parent(w)$ denotes the unique parent of $w$ in the subgraph (as shown in \cite[Figure 2]{basso2022}). Exactly as in the case of \cite{basso2022}, all $D$ children of $\parent(w)$ contribute the same value the sum, so we define

\begin{align}
    H^{\of{1}}_D (\zv_{\parent(w)}) \defeq \of{\sum_{\zv_w}  \exp\ofb{ -\frac{i}{\sqrt{D}}  \Gammav  \cdot (\zv_w \zv_{\parent(w)})}  f(\zv_w)}^D,
\end{align}

to represent this contribution, which is equal for all parent nodes of all leaf nodes. We can then sum over parent nodes $\parent(w)$

\begin{align}
    \sum_{\zv_{\parent(w)}}  H^{(1)}_D(\zv_{\parent(w)}) \exp\ofb{- \frac{i}{\sqrt{D}} \Gammav  \cdot (\zv_{\parent(w)} \zv_{\parent(\parent(w))})} f(\zv_{\parent(w)}).
\end{align}

Once again, because $\parent(\parent(w))$ has $D$ identical children, which all contribute the same as $\parent(w)$, we have

\begin{align} 
    H_D^{(2)}(\zv_{\parent(\parent(w))}) &\defeq \of{\sum_{\zv_{\parent(w)}}  H^{(1)}_D(\zv_{\parent(w)}) \exp\ofb{- \frac{i}{\sqrt{D}} \Gammav  \cdot (\zv_{\parent(w)} \zv_{\parent(\parent(w))})} f(\zv_{\parent(w)})}^D.
\end{align}

Thus in general we have 

\begin{align}  
    H_D^{(m)}(\av) &\defeq \of{\sum_{\bv}  H^{(m-1)}_D(\bv) \exp\ofb{- \frac{i}{\sqrt{D}} \Gammav  \cdot (\av \bv)} f(\bv)}^D \nonumber \\
    &= \of{\sum_{\bv}  H^{(m-1)}_D(\bv) \cos\ofb{\frac{1}{\sqrt{D}} \Gammav  \cdot (\av \bv)} f(\bv)}^D , \\
    H_D^{(0)}(\av) &\defeq1 \label{eq:mc_H_defn},
\end{align}
for any $\av$ corresponding to a vertex that is a parent of the vertex corresponding to $\bv$. Note that in the last line, this is exactly the iteration in Eq.~4.11 of \cite{basso2022}, so we borrow that $H^{(m-1)}_D(\bv)=H^{(m-1)}_D(-\bv)$ to simplify the sum using the equality of the summand under the interchange $\bv \rightarrow -\bv$. We can repeat this process for $p$ iterations until we reach the level on the glued trees corresponding to the vertices where $\bv=L$ or $R$. Then it only remains to evaluate the following sum over $L$ and $R$:

\begin{align} 
    &\braket{\paramv | c_X X_L X_R +  c_Y Y_L Y_R + c_Z Z_L Z_R| \paramv} \nonumber \\
    &\hspace{120px}= \sum_{\av, \bv} \exp\ofb{-i\frac{\Gammav \cdot (\av \bv)}{\sqrt{D}}} H_D^{(p)}(\av) H_D^{(p)}(\bv)\nonumber \\
    &\hspace{150px}\times\of{c_X f'(\av)f'(\bv)- c_Y a_0b_0f'(\av)f'(\bv)+c_Z a_0b_0f(\av)f(\bv)},
\end{align} 

where $\av$, $\bv$ alias $\zv_L$, $\zv_R$, respectively, for compactness. Then, from \cite{basso2022} we have $H_D^{(p)}(-\av)=H_D^{(p)}(\av)$, and by \cref{eq:mc_f_defn} we can see that $f(-\av)=f(\av)$ and $f'(-\av)=f'(\av)$. Thus, the $XX$ term is even in $\av$, $\bv$ and the $YY$ and $ZZ$ terms are odd in $\av$, $\bv$, so we have

\begin{align}\begin{split}\label{eq:mc_xyz_exp_values}
    \braket{\paramv | X_L X_R | \paramv} &= \sum_{\av, \bv} f'(\av) f'(\bv) H^{(p)}_D(\av) H^{(p)}_D(\bv) \cos\left[\frac{\Gammav \cdot (\av \bv)}{\sqrt{D}}\right], \\
    \braket{\paramv | Y_L Y_R | \paramv} &= i\sum_{\av, \bv} a_0 b_0 f'(\av) f'(\bv) H^{(p)}_D(\av) H^{(p)}_D(\bv) \sin\left[\frac{\Gammav \cdot (\av \bv)}{\sqrt{D}}\right], \\
    \braket{\paramv | Z_L Z_R | \paramv} &= -i \sum_{\av, \bv} a_0 b_0 f(\av) f(\bv) H^{(p)}_D(\av) H^{(p)}_D(\bv) \sin\left[\frac{\Gammav \cdot (\av \bv)}{\sqrt{D}}\right]. 
\end{split}\end{align} 

As this computation only involves slight modifications \cite{basso2022}, the time and space complexity of evaluating this iteration (including all $H_D^{(m)}$ and $\nu_p(D, \gammav, \betav)$) remain $O(p 16^p)$ and $O(16^p)$, respectively. We numerically optimize parameters and report $\mathrm{MC}_{(p,\mathrm{QMC})}$, $\mathrm{MC}_{(p,\mathrm{XY})}$, and $\mathrm{MC}_{(p,\mathrm{EPR})}$ for small $p=\ofc{1,2,3,4,5}$ in \cref{sec:results/small_d}. 

\subsubsection{Infinite degree limit}\label{sec:iterations/mc_ansatz/d_inf}
We now prove \cref{cor:mc_inf_d_energy}. First, \cref{eq:mc_xyz_exp_values}, along with the definition of $\nu$ in \cref{eq:nu_DT} and \cref{eq:nu_H_DT}, allows us to compute $\nu_p(D,\mathrm{H},\mathrm{MC},\Thetav)$ for $\mathrm{H}=\ofc{\mathrm{H}_{\mathrm{QMC}}, \mathrm{H}_{\mathrm{XY}}, \mathrm{H}_{\mathrm{EPR}}}$. 

\begin{align}\label{eq:mc_qmc_nu_finite_d}
    &\nu_p(D, H, \mathrm{MC}, \Thetav) = \frac{\sqrt{D}}{2} \sum_{\av, \bv} H^{(p)}_D(\av) H^{(p)}_D(\bv) \\
    &\hspace{30px} \times \Bigg(\of{i\, a_0 b_0 \Big(c_Y f'(\av) f'(\bv) - c_Z f(\av) f(\bv)\Big) \sin\left[\frac{\Gammav \cdot (\av \bv)}{\sqrt{D}}\right]} +c_X f'(\av) f'(\bv)\cos\left[\frac{\Gammav \cdot (\av \bv)}{\sqrt{D}}\right]\Bigg),
\end{align}

where $\Thetav$ is specified by $(\gammav, \betav)$. We then apply the product rule of limits on $\sin$, $\cos$, and $H^{(p)}_D$ to find

\begin{align}
    \nu_p(H, MC, \gammav, \betav) &\defeq \lim_{D\rightarrow\infty} \nu_p(D, \gammav, \betav) \nonumber\\
    &= \frac{1}{2} \sum_{\av, \bv} H^{(p)}(\av) H^{(p)}(\bv)  \nonumber \\
    &\hspace{30px} \times \of{i\,a_0 b_0 \Big(c_Y f'(\av) f'(\bv) - c_Z f(\av) f(\bv)\Big)  \Gammav \cdot (\av \bv) +  \sqrt{D} c_X f'(\av) f'(\bv)} \label{eq:nu_mc_unequal_scaling},
\end{align}

where the limit $H^{(m)}\defeq\lim_{D\rightarrow \infty} H^{(m)}_D$ is shown in Section 4.2 of \cite{basso2022} to be

\begin{align}\label{eq:mc_h_lim_basso}
    H^{(m)}(\av) = \exp \ofb{ -\frac{1}{2} \sum_\bv f(\bv) H^{(m-1)}(\bv) \of{\Gammav \cdot (\av \bv) }^2 }.
\end{align}

We now prove argument 1) in the corollary. For the EPR Hamiltonian, the first two terms in \cref{eq:nu_mc_unequal_scaling} correspond to the contribution from $\braket{Z_L Z_R}$ and $\braket{Y_L Y_R}$, respectively. These terms scale as $\Theta(1)$ with respect to the degree. The third term, corresponding to the contribution from $\braket{X_L X_R}$, scales as $\Theta(\sqrt{D})$. Thus, the contributions from $Y$ and $Z$ components of the Hamiltonian vanish, leaving us to evaluate the $MC$ ansatz with respect to $1/2 \sum_{u,v} I_u I_v + X_u X_v$. This Hamiltonian is trivially maximized by the all $\ket{+}$ state, with energy $m$. This is precisely the energy obtained by the algorithm ZERO on EPR, so the algorithm is upper bounded by ZERO. The bound is tight, as this is the energy obtained by the MC ansatz with all angles set to $0$.

We then prove argument 2) in the corollary. Note that the contribution from the $\braket{X_L X_R}$ term can be expressed as

\begin{align*}
     \frac{c_X \sqrt{D}}{2} \sum_{\av, \bv} H^{(p)}(\av) H^{(p)}(\bv)  f'(\av) f'(\bv) = \frac{c_X \sqrt{D}}{2} \of{\sum_{\av} H^{(p)}(\av)  f'(\av)}^2.
\end{align*}

For QMC, $c_X < 0$, and we have from \cref{lemma:f_prime_h_sum_is_real} that $\sum_{\av} H^{(p)}(\av)  f'(\av)$ is real. Thus, the contribution from this term is always negative. As the normalized energy $\nu_p(QMC, MC, \gammav, \betav)$ for the MC ansatz on QMC is equal to the normalized energy $\nu_p(XY, MC, \gammav, \betav)$ of the MC ansatz on the XY model plus this term, we have that $\nu_p(QMC, MC, \gammav, \betav) \leq \nu_p(XY, MC, \gammav, \betav)$ for any angles.

Thus, for the rest of the proof, we turn our attention to the XY model, where $c_X=0$ and $c_Y=c_Z=-1$. We may then expand the dot product $\Gammav \cdot (\av \bv)$ to yield

\begin{align} 
    \nu_p(\gammav, \betav) 
    &= \frac{i}{2} \sum_{j=-p}^{p} \Gamma_j \Bigg(\of{\sum_{\av}       f(\av) H^{(p)}(\av)a_0 a_j}\of{\sum_{\bv} f(\bv)  H^{(p)}       (\bv)a_0 a_j} \nonumber\\
	&\hspace{53px} -\of{\sum_{\av} f'(\av) H^{(p)}(\av)a_0 a_j}\of{\sum_{\bv} f'(\bv) H^{(p)}(\bv)a_0 a_j}\Bigg), \\
	&= \frac{i}{2} \sum_{j=-p}^{p} \Gamma_j \of{\of{\sum_{\av} f(\av) H^{(p)}(\av)a_0 a_j}^2-\of{\sum_{\av} f'(\av) H^{(p)}(\av)a_0 a_j}^2}.  \label{eq:mc_nu_infinite_d}\\
\end{align}

We note the relations

\begin{align}
    &H^{(p)}(\av) = \exp\ofb{-\frac{1}{2}\sum_{j,k=-p}^p \Gamma_j \Gamma_k a_j a_k G_{j,k}^{(p-1)}} \label{eq:mc_ansatz_H_as_G}, \\
    &G_{j,k}^{(m)} \defeq \sum_{\av} f(\av) a_ja_k\exp\ofb{-\frac{1}{2}\sum_{j',k'} \Gamma_{j'} \Gamma_{k'} a_{j'} a_{k'} G_{j',k'}^{(m-1)}}, \\
    &G_{j,k}^{(0)} \defeq 1 \quad \forall j,k.
\end{align}

and define

\begin{align}
    G_{j,k}^{'(m)} &\defeq \sum_{\av} f'(\av) a_ja_k\exp\ofb{-\frac{1}{2}\sum_{j',k'} \Gamma_{j'} \Gamma_{k'} a_{j'} a_{k'} G_{j',k'}^{(m-1)}},
    \\
    G_{j,k}^{'(0)} &\defeq  1 \quad \forall (j,k),
\end{align}

from which it is clear that

\begin{align*} 
    \nu_p(\gammav, \betav) 
    &= \frac{i}{2} \sum_{j=-p}^{p} \Gamma_j  \ofb{\of{G_{0,j}^{(p)}}^2- \of{G_{0,j}^{'(p)}}^2}.
\end{align*}

The matrices $G_{j,k}^{(m)}$ for $1 \leq m \leq p$ were shown in Section A.4 of \cite{basso2022} to be iteratively computed (non-trivially!) in time and space $O(p^2 4^p)$ and $O(p^2)$, respectively. A simple modification to the iteration allows for the computation of $G^{'(p)}$ as well with constant multiplicative overhead. Using \cref{eq:mc_ansatz_H_as_G} we can evaluate $H^{(p)}(\av)$ for all $\av$ in time $O(p^2 4^p)$ and space $O(4^p)$, and the sum in \cref{eq:mc_nu_infinite_d} can then be evaluated in time $O(p^2 4^p)$. Given the bottlenecks, then, the total time and space requirements are $O(p^2 4^p)$ and $O(4^p)$ respectively. If we wish to save space, we can dynamically compute $H^{(p)}(\av)$ in \cref{eq:mc_nu_infinite_d} with \cref{eq:mc_ansatz_H_as_G}, which then reduces us back to $O(p^2)$ space but incurs $O(p^4 4^p)$ time. We find optimal parameters and report $\nu_p(\mathrm{XY}, \mathrm{MC})$ in \cref{sec:results/infinite_d}.

\subsection{XY ansatz}\label{apx:iteration_proofs/xy}
We now prove \cref{thm:xy_ansatz_energy}.  We proceed similarly to the MC ansatz case, by first determine the energy of a single edge $(L,R)$, again assuming a $(D+1)$-regular graph with girth $>2p+1$:
\begin{align}
    &\braket{\gammav_y, \gammav_z, \betav | (c_X X_L X_R + c_Y Y_L Y_R + c_Z Z_L Z_R) | \gammav_y, \gammav_z, \betav} \nonumber\\
    &\hspace{60px}=\bra{s} e^{i\gamma_{z_1} C_z} e^{i\gamma_{y_1} C_y} e^{i \beta_1 B} \cdots e^{i \gamma_{z_p} C_z} e^{i\gamma_{y_p} C_y} e^{i \beta_p B} (c_X X_L X_R + c_Y Y_L Y_R + c_Z Z_L Z_R) \nonumber\\
    &\hspace{75px}\times e^{-i \beta_p B} e^{-i\gamma_{y_p} C_y} e^{-i \gamma_{z_p} C_z} \cdots e^{-i \beta_1 B} e^{-i\gamma_{y_1} C_y} e^{-i\gamma_{z_1} C_z} \ket{s}.  
\end{align} 

This analysis follows closely to the case of the MC ansatz. However, we note that via the Heisenberg picture, the operators now depend on a subgraph of all vertices at most distance $2p$ from $L$ or $R$, due to the presence of \emph{two} entangling unitaries that apply to edges in the graph. Thus, now $n=2\of{D^{2p}+\cdots+D+1}$, and we again index the vertices by the set $\ofb{n}$. For simplicity, we first show the computation of $\braket{Z_L Z_R}$, and later show how to similarly compute $\braket{X_L X_R}$ and $\braket{Y_L Y_R}$. To evaluate $\braket{Z_L Z_R}$ we begin by inserting $2p+1$ identities via complete sets in the Pauli $Z$ basis and $2p$ identities via complete sets in the Pauli $Y$ basis
\begin{align}
    &\braket{\gammav_y, \gammav_z, \betav | Z_L Z_R | \gammav_y, \gammav_z, \betav} = \\
    &\hspace{12px}\sum_{\ofc{\zv^{[i]}}_{i\in\ofc{\mathcal{A}}}, \ofc{\yv^{[i]}}_{i\in\ofc{\mathcal{A}'}}} \hspace{-30px}
    \braket{s|\zv^{[1]}} \braket{\zv^{[1]}|e^{i \gamma_{z_1} C_z}|\zv^{[1]}} \braket{\zv^{[1]}|\yv^{[1]}} \braket{\yv^{[1]}|e^{i\gamma_{y_1}C_y}|\yv^{[1]}} \braket{\yv^{[1]} | e^{i \beta_1 B} |\zv^{[2]}} \times \cdots 
     \\ 
    &\hspace{50px}\times \braket{\zv^{[p]}|e^{i \gamma_{z_p} C_z}|\zv^{[p]}} \braket{\zv^{[p]}|\yv^{[p]}} \braket{\yv^{[p]}|e^{i\gamma_{y_p}C_y}|\yv^{[p]}} \braket{\yv^{[p]} | e^{i \beta_2 B} |\zv^{[0]}}  \braket{\zv^{[0]}|Z_L Z_R|\zv^{[0]}}  \\
    & \hspace{50px} \times \braket{\zv^{[0]} | e^{-i \beta_2 B} |\yv^{[-p]}} \braket{\yv^{[-p]}|e^{-i\gamma_{y_p}C_y}|\yv^{[-p]}} 
    \braket{\yv^{[-p]}|\zv^{[-p]}} \braket{\zv^{[-p]}|e^{-i \gamma_{z_p} C_z}|\zv^{[-p]}} \times  \cdots \\
    &\hspace{50px} \times \braket{\zv^{[-2]} | e^{-i \beta_1 B} |\yv^{[-1]}} \braket{\yv^{[-1]}|e^{-i\gamma_{y_1}C_y}|\yv^{[-1]}} 
    \braket{\yv^{[-1]}|\zv^{[-1]}} \braket{\zv^{[-1]}|e^{-i \gamma_{z_1} C_z}|\zv^{[-1]}} \braket{\zv^{[-1]}|s},
\end{align} 

where 
\begin{align}
    \mathcal{A} &= \ofc{1,2,\ldots,0,\ldots,-2,-1}, \\
    \mathcal{A}' &= \ofc{1,2,\ldots,p,-p,\ldots,-2,-1}. \label{eq:xy_a_set_defns}
\end{align}

We emphasize that $\ket{\yv^{[i]}}$ and $\ket{\zv^{[i]}}$ denote eigenstates in \emph{different bases}, hence terms such as $\braket{\yv^{[i]}|\zv^{[i]}}$ do not trivially enforce $\yv^{[i]}=\zv^{[i]}$. Identically to \cref{eq:mc_ansatz_bitstring_sum} of the MC ansatz case, we can replace the layer-by-layer sum over bit configurations $z^\ofb{i}_u, y^\ofb{i}_u$ by a vertex-by-vertex sum

\begin{align}
     \sum_{\{\zv_u\},\{\yv_u\}} z_L^{[0]} z_R^{[0]} \exp\Big[\frac{-i}{\sqrt{D}}\sum_{(u',v')\in E} \big( \Gammav_y \cdot (\yv_{u'} \yv_{v'}) + \Gammav_z \cdot (\zv_{u'} \zv_{v'}) \big)\Big] \prod_v h(\yv_{v}, \zv_{v}) g(\yv_{v}, \zv_{v})  \label{eq:xy_first_vector_form},
\end{align} 

where

\begin{align}
    &\{\yv_u\} \defeq (y_u^{[1]},\dots, y_u^{[p]}, y_u^{[-p]},\dots, y_u^{[-1]}), \quad \{\zv_u\} \defeq (z_u^{[1]}, \dots, z_u^{[p]}, z_u^{[0]}, z_u^{[-p]}, \dots, z_u^{[-1]}), \\
    &\Gammav_y \defeq (\gamma_1, \dots, \gamma_p, -\gamma_p, \dots, -\gamma_1), \quad \Gammav_z \defeq (\gamma_1, \dots, \gamma_p, 0, -\gamma_p, \dots, -\gamma_1), \\
    &h(\yv_u,\zv_u) \defeq \braket{z_u^{[1]}|y_u^{[1]}}\cdots\braket{z_u^{[p]}|y_u^{[p]}}\braket{y_i^{[-p]}|z_u^{[-p]}} \cdots \braket{y_u^{[-1]}|z_u^{[-1]}}, \\
    &\braket{y|z} \defeq \braket{z|y}^* = \frac{i^{(yz-y)/2}}{\sqrt{2}}, \\
    &g(\yv_u, \zv_u) \defeq \frac{1}{2} \braket{y_u^{[1]}|e^{i \beta_1 X_u}|z_u^{[2]}} \cdots \braket{y_u^{[p]}|e^{i \beta_p X_u}|z_u^{[0]}}\braket{z_u^{[0]}|e^{-i \beta_p X_u}|y_u^{[-p]}}\cdots\braket{z_u^{[-2]}|e^{-i \beta_1 X_u}|y_u^{[-1]}}, \\
    &\braket{y|e^{i \beta X}|z} \defeq \braket{z|e^{-i \beta X}|y}^*= \frac{i^{(yz-y)/2}}{\sqrt{2}}(\cos{\beta}+(-1)^{(yz-1)/2}  \sin{\beta}).\label{eq:xy_sum_definitions}\\
\end{align} 
To simplify notation, we rewrite the above sum. First, we index vectors in the following order:
\begin{align*}
    \av = \of{a_{z_1}, a_{y_1}, a_{z_2}, a_{y_2}, \dots a_{z_p}, a_{y_p}, 0, a_{y_{-p}}, a_{z_{-p}}, \dots, a_{y_{-2}}, a_{z_{-2}}, a_{y_{-1}}, a_{z_{-1}}}.
\end{align*}
Furthermore, we define
\begin{align}
    \Gammav \defeq \of{\Gamma_{z_1}, \Gamma_{y_1}, \Gamma_{z_2}, \Gamma_{y_2}, \dots, \Gamma_{z_p}, \Gamma_{y_p}, 0, -\Gamma_{y_p}, -\Gamma_{z_p}, \dots, -\Gamma_{y_2}, -\Gamma_{z_2}, -\Gamma_{y_1}, -\Gamma_{z_1}}.
\end{align}
Then we can rewrite \cref{eq:xy_first_vector_form} as
\begin{align}
     \braket{\gammav_y, \gammav_z, \betav | Z_L Z_R | \gammav_y, \gammav_z, \betav} =\sum_{\{\av_u\}} a_L^{[0]} a_R^{[0]} \exp\Big[\frac{-i}{\sqrt{D}}\sum_{(u',v')\in E} \Gammav \cdot (\av_{u'} \av_{v'}) \Big] \prod_v f(\av_v), \label{eq:xy_ansatz_vector_form}
\end{align} 
where
\begin{align}
    f(\av) = &\frac{1}{2} \braket{a_{z_1}|a_{y_1}} \braket{a_{y_1}|e^{i \beta_1 X_u}|a_{z_2}} \cdots \braket{a_{z_p}|a_{y_p}} \braket{a_{y_p}|e^{i \beta_p X_u}|a_{0}} \nonumber\\
    &\times \braket{a_{0}|e^{-i \beta_p X_u}|a_{y_{-p}}}\braket{a_{y_{-p}}|a_{z_{-p}}} \cdots \braket{a_{z_{-2}}|e^{-i \beta_1 X_u}|a_{y_{-1}}}\braket{a_{y_{-1}}|a_{z_{-1}}}\label{eq:xy_f_defn}.
\end{align} 

At this point, we note that we have a sum of the same form as the $\braket{Z_LZ_R}$ component of the MC ansatz \cref{eq:mc_ansatz_bitstring_sum}, albeit with different definitions of $\av$, $\Gammav$, $f$, etc. Indeed, we can introduce the corresponding functions

\begin{align}
    H^{(m)}_D(\av) &\defeq \of{\sum_{\bv} f(\bv) H^{(m-1)}_D(\bv) \exp\ofb{-\frac{i}{\sqrt{D}}\Gammav\cdot(\av \bv)}}^D, \\ 
    H^{(0)}_D(\av) &\defeq 1 \label{eq:xy_H_defn}.
\end{align}
As a consequence, if $f(\av)=f(-\av)$ and $H^{\ofc{m}}(\av)=H^{\ofc{m}}(-\av)$ for all $\av$ and $m$,
we may borrow \emph{all} of the analysis from \cref{eq:H_defn_start} to \cref{eq:mc_xyz_exp_values} to evaluate these sums.
We show these facts as \Cref{lemma:xy_f_of_negative_a_is_a} and \Cref{lemma:yz_H_negative_a_is_H_a} in \Cref{apx:usefullemmas}.
So we may borrow the analysis from before, noting that we need to sum over $2p$ layers. 
We arrive at the sum over the last two vertices $L$ and $R$
\begin{align} 
    \braket{\paramv | Z_L Z_R | \paramv} &= -i \sum_{\av, \bv} a_0 b_0 f(\av) f(\bv) H^{(2p)}_D(\av) H^{(2p)}_D(\bv) \sin\left[\frac{\Gammav \cdot (\av \bv)}{\sqrt{D}}\right] \label{eq:xy_zz_expectation_final},
\end{align} 
where similarly to the MC ansatz, $\av$ and $\bv$ alias $\zv_L$ and $\zv_R$ for conciseness.

The expectation values $\braket{Y_LY_R}$ and $\braket{Y_LY_R}$ can be evaluated via an almost identical argument. We start with  $\braket{Y_LY_R}$. We first use \cref{eq:inner_product_nonzero_cases} to see that the nonzero portion comes from the terms 

\begin{align}
    &\braket{\gammav_y, \gammav_z, \betav | Y_L Y_R | \gammav_y, \gammav_z, \betav} = \nonumber\\
    &\hspace{12px}\sum_{\ofc{\zv^{[i]}}_{i\in\ofc{\mathcal{A}}}, \ofc{\yv^{[i]}}_{i\in\ofc{\mathcal{A}'}}} \hspace{-30px}
    \braket{s|\zv^{[1]}} \braket{\zv^{[1]}|e^{i \gamma_{z_1} C_z}|\zv^{[1]}} \braket{\zv^{[1]}|\yv^{[1]}} \braket{\yv^{[1]}|e^{i\gamma_{y_1}C_y}|\yv^{[1]}} \braket{\yv^{[1]} | e^{i \beta_1 B} |\zv^{[2]}} \times \cdots 
     \nonumber\\ 
    &\hspace{50px}\times \braket{\zv^{[p]}|e^{i \gamma_{z_p} C_z}|\zv^{[p]}} \braket{\zv^{[p]}|\yv^{[p]}} \braket{\yv^{[p]}|e^{i\gamma_{y_p}C_y}|\yv^{[p]}} \braket{\yv^{[p]} | e^{i \beta_2 B} |{\zv^{[0]}}^{(L,R)}}  \braket{{\zv^{[0]}}^{(L,R)}|Y_L Y_R|\zv^{[0]}}  \nonumber\\
    & \hspace{50px} \times \braket{\zv^{[0]} | e^{-i \beta_2 B} |\yv^{[-p]}} \braket{\yv^{[-p]}|e^{-i\gamma_{y_p}C_y}|\yv^{[-p]}} 
    \braket{\yv^{[-p]}|\zv^{[-p]}} \braket{\zv^{[-p]}|e^{-i \gamma_{z_p} C_z}|\zv^{[-p]}} \times  \cdots \nonumber\\
    &\hspace{50px} \times \braket{\zv^{[-2]} | e^{-i \beta_1 B} |\yv^{[-1]}} \braket{\yv^{[-1]}|e^{-i\gamma_{y_1}C_y}|\yv^{[-1]}} 
    \braket{\yv^{[-1]}|\zv^{[-1]}} \braket{\zv^{[-1]}|e^{-i \gamma_{z_1} C_z}|\zv^{[-1]}} \braket{\zv^{[-1]}|s} \label{eq:xy_yy_expectation_start},
\end{align} 

where we use the same $\mathcal{A}$, $\mathcal{A}'$ from \cref{eq:xy_a_set_defns}. Much like in the MC ansatz case, we see that when recasting the sum into a sum of vertices, the only contributions that differ in the expectation of $\braket{Y_LY_R}$ from $\braket{Z_LZ_R}$ are those involving $\zv_L$ and $\zv_R$. In fact, we can borrow all the results from \cref{eq:mc_ansatz_before_bit_indexing} through \cref{eq:mc_xyz_exp_values} (under the present definitions of $\av$, $\Gammav$, etc) to arrive at 

\begin{align} 
    \braket{\paramv | Y_L Y_R | \paramv} &= i \sum_{\av, \bv} a_0 b_0 f'(\av) f'(\bv) H^{(2p)}_D(\av) H^{(2p)}_D(\bv) \sin\left[\frac{\Gammav \cdot (\av \bv)}{\sqrt{D}}\right] \label{eq:xy_yy_expectation_final},
\end{align} 

where

\begin{align}
    f'(\av) = &\braket{a_{z_1}|a_{y_1}} \braket{a_{y_1}|e^{i \beta_1 X_u}|a_{z_2}} \cdots \braket{a_{z_p}|a_{y_p}} \braket{a_{y_p}|e^{i \beta_p X_u}|-a_{0}} \nonumber \\
    &\times \braket{a_{0}|e^{-i \beta_p X_u}|a_{y_{-p}}}\braket{a_{y_{-p}}|a_{z_{-p}}} \cdots \braket{a_{z_{-2}}|e^{-i \beta_1 X_u}|a_{y_{-1}}}\braket{a_{y_{-1}}|a_{z_{-1}}} \label{eq:xy_f_prime_defn},
\end{align} 

and again where we use $f'(-\av)=f'(\av)$ in the derivation of \cref{eq:xy_yy_expectation_final} to arrive at its trigonometric form. For $\braket{X_L X_R}$, we again use \cref{eq:inner_product_nonzero_cases} to obtain an expression that is equal to \cref{eq:xy_yy_expectation_start}, albeit negated and without the presence of $a_0 b_0$. We can thus once again borrow all the results from \cref{eq:mc_ansatz_before_bit_indexing} through \cref{eq:mc_xyz_exp_values} (under the present definitions of $\av$, $\Gammav$, etc) to obtain 
\begin{align} 
    \braket{\paramv | X_L X_R | \paramv} &= \sum_{\av, \bv} a_0 b_0 f'(\av) f'(\bv) H^{(2p)}_D(\av) H^{(2p)}_D(\bv) \cos\left[\frac{\Gammav \cdot (\av \bv)}{\sqrt{D}}\right] \label{eq:xy_xx_expectation_final}.
\end{align} 

The complexity of evaluating $\nu_p(D, \gammav, \betav)$ is dominated by the iteration on $H^{(m)}_D(\av)$, as there are $O(2^{4p})$ values of $\av$ and $\bv$ that must be evaluated on at each step $1$ through $2p$, yielding time and space complexity of $O(p64^p)$ and $O(64^p)$, respectively.

\subsubsection{Infinite degree limit}\label{sec:iterations/xy_ansatz/d_inf}
The analysis of the infinite degree limit is identical to that of the MC ansatz in \cref{apx:iteration_proofs/mc}, under the substitution $p \rightarrow 2p$ and proper re-definitions of $\Gammav, \av$, etc. In particular, the argument from \cref{sec:iterations/mc_ansatz/d_inf} for statements 1) and 2) in the corollary are identical. Turning to the XY model then (where $c_X=0$, $c_Y=c_Z=-1$), the only logical step that remains to be shown is that the limit in \cref{eq:mc_h_lim_basso} exists and is finite. This limit is shown in \cite[Section 4.2]{basso2022} to follow from the fact \cite[Lemma 5]{basso2022} that for any $m$

\begin{align}
    \sum_{\av} f(\av) H_D^{(m)}(\av)=1.
\end{align}

We prove this (\Cref{lemma:yz_sum_fH_is_one}) in \Cref{apx:usefullemmas}. Thus, we obtain an identical final form

\begin{align} 
    \nu_p(\gammav, \betav) 
    &= \frac{i}{2} \sum_{j=-p}^{p} \Gamma_j  \ofb{\of{G_{0,j}^{(p)}}^2- \of{G_{0,j}^{'(p)}}^2} \label{eq:xy_nu_infinite_d},
\end{align}

where 
\begin{align}
    &H^{(p)}(\av) = \exp\ofb{-\frac{1}{2}\sum_{j,k=-p}^p \Gamma_j \Gamma_k a_j a_k G_{j,k}^{(p-1)}} \label{eq:xy_H_as_G}, \\
    &G_{j,k}^{(m)} \defeq \sum_{\av} f(\av) a_ja_k\exp\ofb{-\frac{1}{2}\sum_{j',k'} \Gamma_{j'} \Gamma_{k'} a_{j'} a_{k'} G_{j',k'}^{(m-1)}}, \\ 
    &G_{j,k}^{'(m)} \defeq \sum_{\av} f'(\av) a_ja_k\exp\ofb{-\frac{1}{2}\sum_{j',k'} \Gamma_{j'} \Gamma_{k'} a_{j'} a_{k'} G_{j',k'}^{(m-1)}}, \\
    &G_{j,k}^{(0)}, G_{j,k}^{'(0)} \defeq 1 \quad \forall j,k.
\end{align}

We now compute the time and space complexity of evaluating the iteration. We must first compute $O(p)$ matrices $G^{(m)}$, with $O(p^2)$ entries, each involving a sum over $O(p^2 16^p)$ terms, so in total the time and space requirements of computing the $G$ matrices are $O(p^5 16^p)$ and $O(p^2)$, respectively (we can compute the matrices in sequence from $1$ to $2p$). Then, to evaluate \cref{eq:xy_nu_infinite_d} we can either plug in \cref{eq:xy_H_as_G} in for $H^{(2p)}$, yielding a sum over $O(p)$ values $j$, where for each $j$, we compute two sums, each with $O(p^2 16^p)$ elements. This can be done in $O(1)$ space. Alternatively, one can pre-compute and store $H^{(2p)}(\av)$ for all $\av$ in time and space $O(p^2 16^p)$ and $O(16^p)$ and use this to compute \cref{eq:xy_nu_infinite_d} in time $O(p 16^p)$. The bottleneck still is in computing the $G$ matrices, so the time and space requirements can be taken to be $O(p^5 16^p)$ and $O(p^2)$.
\clearpage
\newpage

\section{Generalization to more expressive ans\"atze and larger locality}\label{apx:generic_ansatz}
We now show how to generalize the analysis presented in \cref{sec:iterations} for the MC and XY ans\"atze to allow for more multi-qubit gate interactions and larger locality of the target Hamiltonian. Specifically, we now aim to find high-energy states with respect to Hamiltonians of the form

\begin{align}
    H = \sum_{e \in E} H_e,
\end{align}

where the set $E$ denotes the hyperedges of a large girth\footnote{Like in \cite{basso2022}, girth is defined for hypergraphs as the minimum length of a Berge cycle in the graph, as defined in \cite{berge1989}.}, $(D+1)$-regular, $k$-uniform hypergraph, and $H_e$ denotes a Hamiltonian that acts non-trivially only on the qubits associated with the vertices in the hyperedge $e$ (i.e. it can be obtained by permuting qubits in the Hamiltonian $h \otimes I$, where $h$ is a Hamiltonian acting on exactly $k$ qubits). Furthermore, we require the $H_e$ is invariant under permutation of qubits associated with the vertices in $e$. This is analogous to the classical notion of a ``symmetric Boolean function''. We then observe that due to this symmetry we can express $H$ as

\begin{align}\label{eq:H_k_local_decomp}
    H = \sum_{e \in E} \sum_{\alpha=1}^{\binom{k+3}{3}} c_{\alpha} H^{\ofb{\alpha}}_e,
\end{align}

where the $c_{\alpha}$ are real valued coefficients and where each $H^{\ofb{\alpha}}_e$ denotes a Hamiltonian that is the tensor product of $n$ single-qubit Hamiltonians, where the Hamiltonian at qubit $i$ is some fixed $h^{\ofb{\alpha}}$ when $i \in e$ and $I$ when $i \notin e$, i.e $I^{\otimes e_1}\otimes h^{\ofb{\alpha}} \otimes I^{e_2-e_1} h^{\ofb{\alpha}} \otimes \cdots \otimes h^{\ofb{\alpha}} \otimes I^{k-e_k}$, where $e_j$ denotes the index of the $j$th vertex in $e$. In plain English, this means that a Hamiltonian consisting of permutation-invariant terms on $k$ qubits can be re-written as a linear combination of at most $O(k^3)$ terms that are simply of the form $h^{\otimes k}$, where $h$ is a single-qubit Hermitian matrix. This fact follows from Eq.11b of \cite{harrow2013} as well as Proposition 7.10 and Corollary 7.3 in \cite{watrous2018}, where $\textrm{dim}\of{\mathcal{X}}$ is the dimension of single qubit Hermitian matrices, which is $4$.\footnote{As an example, refer to $H_{QMC}$ as defined in \cref{eq:qmc_H}. The Hamiltonian consists of a sum over permutation invariant terms, and the Hamiltonian term corresponding to a single edge may be expressed via just four terms of the form $h\otimes h$, namely $h \in \ofc{I,X,Y,Z}$} Thus, to compute $\braket{H}$ it suffices to be able to compute each $\braket{H^{\ofb{\alpha}}_e}$, and there are only polynomially many such terms. 

Given that we allow for expanded locality in our Hamiltonian $H$, we can do the same for the multi-qubit gates in our ansatz. Accordingly, we allow ourselves to pick from the infinite set of gates of the form

\begin{align}
    C_j \defeq -\frac{1}{\sqrt{D}}\sum_{e \in E} P^{\ofb{j}}_e \label{eq:k_local_c_j_defn},
\end{align}

where $P^{\ofb{j}}_e$, is defined identically to $H^{\ofb{\alpha}}_e$ in \cref{eq:H_k_local_decomp} as the tensor product of the same single-qubit Hermitian $h^{\ofb{j}}$ of the qubits associated with the vertices in $e$. As an example, for the MC ansatz, we chose the single-qubit operator $h^{\ofb{j}}$ is $Z$, so that $P^{\ofb{0}}_e = Z_{e_1} Z_{e_2}$. For the XY ansatz, we have two such choices of $h^{\ofb{j}}$, namely $Y$ and $Z$. 

For the single qubit gates we likewise aim to allow for maximum generality, while still being in a form that we can analyze. Thus, we take our single-qubit ``mixing'' gates to be unitaries of the form $U(\beta)^{\otimes n}$, where $U$ is an arbitrary two-dimensional unitary rotation parameterized by the angle $\beta$. This definition naturally includes the $X$ mixer considered in the MC and XY ansatz. 

We are now ready to describe the class of ans\"atze that our method can analyze. In general the ansatz prepares the state 

\begin{align}
    \ket{\gammav, \betav} \defeq \of{\oleft{\prod_{\ell=1}^p} U\of{\beta_{\ell}}^{\otimes n} \of{\oleft{\prod_{j=1}^q} e^{-i \gamma_{j,\ell}C_j}}}\ket{\psi^{\otimes n}}\label{eq:k_local_ansatz_unitary_equation},
\end{align}

where the notation $\oright{\prod}$ denotes a product in order of smallest to largest index and  $\oleft{\prod}$, denotes a product in order of largest to smallest index. 

The ans\"atze in \cref{eq:k_local_ansatz_unitary_equation} consist of the following components: 

\begin{itemize}
    \item a symmetric initial product state of the form $\ket{\psi}^{\otimes n}$, where $\ket{\psi}$ is an arbitrary single qubit state,
    \item $p$ layers indexed by $\ell$, where each layer contains: 
    \subitem a) $q$ sub-layers of non-commuting $k$-qubit gates indexed by $j$ and parameterized $\gamma_{j,\ell}$,
    \subitem b) a single sub-layer of the same single qubit gate on all $n$ qubits, parameterized by $\beta_{\ell}$,
\end{itemize}

and where we defined $\gammav \defeq (\gamma_{1,1},\gamma_{2,1},\ldots,\gamma_{q,1},\gamma_{1,2},\gamma_{2,2},\ldots,\gamma_{q,\ell})$ and $\betav \defeq (\beta_1, \ldots, \beta_{\ell})$. This allows us to evaluate the expected energy of the ansatz on the Hamiltonian H of the form in \cref{eq:H_k_local_decomp} via

\begin{align}
    &\braket{\gammav, \betav| H |\gammav, \betav} = \sum_{e \in E} \sum_{\alpha} \braket{\gammav, \betav| H^{\ofb{\alpha}}_e |\gammav, \betav}\\
    &\hspace{20px}= \sum_{e \in E} \sum_{\alpha}\bra{\psi^{\otimes n}}\of{\oright{\prod_{\ell=1}^p}  \of{\oright{\prod_{j=1}^q} e^{i \gamma_{j,\ell}C_j}} U^{\dagger}\of{\beta_{\ell}}^{\otimes n}} H^{\ofb{\alpha}}_e \of{\oleft{\prod_{\ell=1}^p} U\of{\beta_{\ell}}^{\otimes n} \of{\oleft{\prod_{j=1}^q} e^{-i \gamma_{j,\ell}C_j}}}\ket{\psi^{\otimes n}}\label{eq:k_local_H_expectation_value_decomp}.
\end{align}

It may now be understood why we chose this form of an ansatz. Note that combined, the decomposition of $H$ into a linear combination of terms of the form $H^{\ofb{\alpha}}_e$ and the choice of gates $C^{\ofb{j}}$ allows us to recover the adiabatic theorem \cite{farhi2000a} for fixed $n$ in the limit of large depth $p$ and small angles. This can be seen in two ways. We first present a construction equivalent to the Hamiltonian Variational Ansatz (HVA) in \cite{wecker2015}. To see this, first observe that we may partition the Hamiltonian into $\sum_{\alpha} H^{\ofb{\alpha}}$, where $H^{\ofb{\alpha}}\defeq\sum_{e \in E} H^{\ofb{\alpha}}_e$. Then, if it is the case that for some $\alpha$, the maximal energy state of $H^{\ofb{\alpha}}$ is of form $\ket{\psi^{\otimes n}}$ for some $\ket{\psi}$, we can prepare our initial state as this $\ket{\psi^{\otimes n}}$. Then we can  choose our multi-qubit gates $C^{\ofb{j}}$ to be precisely the terms $H^{\ofb{\alpha}}$. Then, via the adiabatic theorem, if we evolve by the unitaries $C^{\ofb{j}}$ with small enough angles and for enough layers such that the Trotter product formula \cite{trotter1959} gives that we suitably approximate  evolution under $H$, we can recover the adiabatic theorem, without the need for mixing unitaries, i.e. $U(\beta)=I$. Indeed, this is the case for the EPR Hamiltonian, where $\ket{0}^{\otimes n}$ optimizes $H^{\ofb{\alpha}}=\sum_e Z_{e_1}Z_{e_2}$, and we can then evolve under terms such as $e^{i \gamma X_i X_j}$, $e^{i \gamma Y_i Y_j}$, and $e^{i \gamma Z_i Z_j}$ (we do not evolve by the identity as it simply accumulates a global phase). If we do not have a suitable symmetric product state that maximizes some portion of the Hamiltonian (e.g. for the QMC Hamiltonian), we can start of in the maximal-energy eigenstate of the mixing operator, which by definition is a symmetric product state, and by similar arguments we can slowly evolve into the maximal-energy eigenstate of $H$ given suitable time. However, we note that this simply serves as intuition for our ansatz, as we work in small, constant depth regimes. In this regime the first approach is more suitable, as we saw that for the QMC Hamiltonian, the lack of a good starting state makes improving over simple classical approaches with a constant number of layers prohibitive.

We may now then analyze a single term in \cref{eq:k_local_H_expectation_value_decomp}. We do this by once again inserting identities via complete basis sets whenever we encounter multi-qubit gates. This gives us

\begin{align}
     &\braket{\gammav, \betav| H^{\ofb{\alpha}}_e |\gammav, \betav} \nonumber\\
     &= \sum_{\ofc{\v^{\ofb{j,\ell}}}_{\substack{j\in \ofb{q}\\\ell\in\mathcal{A}'}}, \v^{\ofb{0}}} 
     \braket{\psi^{\otimes n}|\v^{\ofb{1,1}}} \nonumber\\
     &\times
     \Biggg(\oright{\prod_{\ell=1}^{p-1}}\Biggg(\oright{\prod_{j=1}^{q-1}} \braket{\v^{\ofb{j,\ell}}|e^{i\gamma_{j,\ell}C_j}|\v^{\ofb{j,\ell}}} \braket{\v^{\ofb{j,\ell}}|\v^{\ofb{j+1,\ell}}}\Biggg) \braket{\v^{\ofb{q,\ell}}|e^{i\gamma_{q,\ell}C_q}|\v^{\ofb{q,\ell}}} \braket{\v^{\ofb{q,\ell}}|U^{\dagger}\of{\beta_{\ell}}^{\otimes n}|\v^{\ofb{1,\ell+1}}}\Biggg)\nonumber\\
     &\times
     \Biggg(\oright{\prod_{j=1}^{q-1}} 
     \braket{\v^{\ofb{j,p}}|e^{i\gamma_{j,p}C_j}|\v^{\ofb{j,p}}}
     \braket{\v^{\ofb{j,p}}|\v^{\ofb{j+1,p}}} \Biggg) 
     \braket{\v^{\ofb{q,p}}|e^{i\gamma_{q,p}C_q}|\v^{\ofb{q,p}}} \braket{\v^{\ofb{q,p}}|U^{\dagger}\of{\beta_{p}}^{\otimes n}|\v^{\ofb{0}}} \nonumber\\
     &\times
     \braket{\v^{\ofb{0}}|H^{\ofb{\alpha}}_e|\v^{\ofb{0}}} \nonumber\\
     &\times
     \braket{\v^{\ofb{0}}|U\of{\beta_{p}}^{\otimes n}|\v^{\ofb{q,-p}}}
     \braket{\v^{\ofb{q,-p}}|e^{-i\gamma_{q,p}C_q}|\v^{\ofb{q,-p}}}
     \Biggg(\oleft{\prod_{j=1}^{q-1}} 
     \braket{\v^{\ofb{j+1,-p}}|\v^{\ofb{j,-p}}}
     \braket{\v^{\ofb{j,-p}}|e^{-i\gamma_{j,p}C_j}|\v^{\ofb{j,-p}}}
     \Biggg) \nonumber\\
     &\times
     \Biggg(\oleft{\prod_{\ell=1}^{p-1}}
     \braket{\v^{\ofb{0}}|U\of{\beta_{\ell}}^{\otimes n}|\v^{\ofb{q,-\ell}}}
     \braket{\v^{\ofb{q,-\ell}}|e^{-i\gamma_{q,\ell}C_q}|\v^{\ofb{q,-\ell}}}
     \Bigg(\oleft{\prod_{j=1}^{q-1}}
     \braket{\v^{\ofb{j+1,-\ell}}|\v^{\ofb{j,-\ell}}}
     \braket{\v^{\ofb{j,-\ell}}|e^{-i\gamma_{j,\ell}C_j}|\v^{\ofb{j,-\ell}}}\Bigg)\Biggg) \nonumber\\
     &\times\braket{\v^{\ofb{1,-1}}|\psi^{\otimes n}}\label{eq:k_local_insert_bases},
\end{align}

where $\mathcal{A}'=(1,2,\ldots,p,-p,\ldots,-2,-1)$ and where the $\ket{\v^{\ofb[j,\ell]}}$ are the $2^n$  eigenstates of $C_j$, and $\ket{\v^{\ofb{0}}}$ are the $2^n$ eigenstates of $H^{\ofb{\alpha}}_e$. For example, when evaluating $\braket{Z_L Z_R}$ in the MC ansatz, $C_j$ is diagonal in the Pauli $Z$ basis, so the eigenstates are simply $\ket{z}$ where $z\in\ofc{+1,-1}^n$. We then let $\v^{\ofb[j,\ell]}$, $\v^{\ofb{0}}$ be length-$n$, $\pm 1$-valued bitstrings that index the $2^n$ eigenstates. Via these definitions, we have that

\begin{align}
    \braket{\v^{\ofb{j,\ell}}|e^{i\gamma_{j,\ell}C_j}|\v^{\ofb{j,\ell}}} &= -\frac{i}{\sqrt{D}}\sum_{i \in e} \prod{i \in e}v^{\ofb{j,\ell}_i}, \\
    \braket{\v^{\ofb{0}}|H^{\ofb{\alpha}}_e|\v^{\ofb{0}}} &= \prod_{i\in e} v^{\ofb{0}_i}.
\end{align}

This allows us to, much like in the case of the MC or XY ansatz, replace the sum over the $2pq+1$ length-$n$ complete basis sets described by $\v^{\ofb[j,\ell]}$ and $\v^{\ofb{0}}$ and associated with each layer by an equivalent sum over $n$ length-$2pq+1$ complete basis sets described by $\v_i$.

\begin{align}
    \braket{\gammav, \betav| H^{\ofb{\alpha}}_e |\gammav, \betav}=\sum_{\v_i} f(\v_i)\exp\ofb{-\frac{i}{\sqrt{D}}\sum_{e' \in E} \Gammav \cdot \of{\prod_{i'\in e'} \v_{i'}}} \prod_{i\in e}v^{\ofb{0}}_i,
\end{align}

where we defined 

\begin{align}
    \Gammav &\defeq (\gamma_{1,1},\gamma_{2,1},\ldots,\gamma_{q,1},\gamma_{1,2},\gamma_{2,2}\ldots,\gamma_{q,\ell},0,\gamma_{q,-\ell},\ldots,\gamma_{2,-2},\gamma_{2,-1},\gamma_{q,-1},\ldots,\gamma_{2,-1},\gamma_{1,-1}), \label{eq:k_local_Gamma_defn}\\ 
    f(\v_i) &\defeq 
    \braket{\psi|v^{\ofb{1,1}}_i} 
    \Biggg(\oright{\prod_{\ell=1}^{p-1}}\Biggg(\oright{\prod_{j=1}^{q-1}} \braket{v^{\ofb{j,\ell}}_i|v^{\ofb{j+1,\ell}}_i}\Biggg) \braket{v^{\ofb{q,\ell}}_i|U^{\dagger}\of{\beta_{\ell}}|v^{\ofb{1,\ell+1}}_i}\Biggg) \nonumber\\
    &\times
    \Biggg(\oright{\prod_{j=1}^{q-1}} 
    \braket{v^{\ofb{j,p}}_i|\v^{\ofb{j+1,p}}_i} \Biggg) \braket{v^{\ofb{q,p}}_i|U^{\dagger}\of{\beta_{p}}|\v^{\ofb{0}}_i}
    \braket{v^{\ofb{0}}_i|U\of{\beta_{p}}|v^{\ofb{q,-p}}_i}
    \Biggg(\oleft{\prod_{j=1}^{q-1}} 
    \braket{v^{\ofb{j+1,-p}}_i|v^{\ofb{j,-p}}_i}
    \Biggg) \nonumber\\
    &\times
    \Biggg(\oleft{\prod_{\ell=1}^{p-1}}
    \braket{v^{\ofb{0}}_i|U\of{\beta_{\ell}}|v^{\ofb{q,-\ell}}_i}
    \Bigg(\oleft{\prod_{j=1}^{q-1}}
    \braket{v^{\ofb{j+1,-\ell}}_i|v^{\ofb{j,-\ell}}_i}
    \braket{v^{\ofb{1,-1}}_i|\psi}\Bigg)\Biggg) \label{eq:k_local_f_defn},
\end{align}

and $\prod_i \av_i$ denotes element-wise vector multiplication. We note that the computation of inner products such as $\braket{v^{\ofb{j,\ell}}_i|v^{\ofb{j',\ell'}}_i}$ can be computed in constant time, as they only rely on the two-dimensional Hilbert space of a single qubit. For each $C_j$, one may express the Hermitian unitary $h'^{\ofb{j}}$ obtained by rescaling the traceless component of the associated Hermitian $h^{\ofb{j}}$ described after \cref{eq:k_local_c_j_defn} in terms of its eigendecomposition $h'^{\ofb{j}}=Q_j\Lambda Q_j^{\dagger}$, where $\Lambda$ is equivalent to the Pauli $Z$ matrix, with $(1,-1)$ along its diagonal, and $Q_j$ is the $2\times2$ matrix, with its two columns given by the eigenvectors of $h'^{\ofb{j}}$. Then it can be easily seen that 

\begin{align}
    \braket{v^{\ofb{j,\ell}}_i|v^{\ofb{j,\ell'}}_i}=(Q_j^{\dagger}Q_{j'})_{v^{\ofb{j,\ell}}_i, v^{\ofb{j',\ell'}}_i},
\end{align}

where the subscript ${v^{\ofb{j,\ell}}_i, v^{\ofb{j',\ell'}}_i}$ denotes the row $v^{\ofb{j,\ell}}_i$ and column $v^{\ofb{j,\ell}}_i$, where 

\begin{align}
    Q_j^{\dagger}Q_{j'} = \begin{bmatrix}
                            (Q_j^{\dagger}Q_{j'})_{1,1} & (Q_j^{\dagger}Q_{j'})_{1,-1} \\
                            (Q_j^{\dagger}Q_{j'})_{-1,1} & (Q_j^{\dagger}Q_{j'})_{-1,-1}
                            \end{bmatrix}.
\end{align}

We note the structure of the subgraph in the reverse light-cone of $H^{\ofb{\alpha}}_e$ is exactly given by Fig 5 in \cite{basso2022}, i.e. it is a set of $k$ $D$-ary hypertrees glued about $e$, albeit in this case each hypertree is of depth $pq$ instead of $p$.\footnote{Note that in \cite{basso2022} $q$ refers to the arity of the cost function, here we take the more conventional choice of $k$ to be the arity, while $q$ refers to the number of unique multi-qubit gates considered in each layer} We can then proceed as in Sec. 8.4 of \cite{basso2022}, by summing over a set of leaf nodes $w_1,w_2,\ldots,w_{k-1}$ which all share a parent ($\parent(w_1)=\parent(w_2)=\cdots=\parent(w_k-1)$), which we denote as $\parent(w_1)$. Summing over the bit configurations $\v_{w_1},\v_{w_2},\ldots,\v_{w_k-1}$ then yields

\begin{align}
    \sum_{\v_{w_1},\ldots, \v_{w_{q-1}}} 
    \exp\ofb{ -\frac{i}{\sqrt{D}}  \Gammav \cdot (\v_{\parent(w_1)}\v_{w_1}\v_{w_2} \cdots \v_{w_{k-1}} )}\prod_{j=1}^{k-1}  f(\zv_{w_j}),
\end{align}

and because each of $w_1,w_2,\ldots,w_{k-1}$ contributes identically and is a function of $\parent(w_1)$ we obtain after summing over them

\begin{align}
    H_D^{(1)}(\v_{\parent(w_1)}) \defeq \of{\sum_{\v_{w_1},\ldots, \v_{w_{q-1}} }  
     \exp\ofb{-\frac{i}{\sqrt{D}} \Gammav \cdot (\v_{\parent(w_1)}\v_{w_1}\v_{w_2} \cdots \v_{w_{k-1}})} \prod_{j=1}^{k-1}  f(\v_{w_j})}^D.
\end{align}

We can then likewise sum over all $\v_{\parent(w_1)}$ sharing a parent $\parent{w'_1}\defeq\v_{\parent(\parent(w_1))}$. Let us label these vertices by $w'_1, w'_2, \ldots, w'_{k-1}$. The sums again are identical and a function of $\v_{\parent(\parent(w_1))}$, so we obtain the following function on $\v_{\parent(\parent(w_1))}$

\begin{align}
    H_D^{(2)}(\v_{\parent(w'_1)}) &\defeq
    \of{\sum_{\v_{w'_1},\ldots, \v_{w'_{k-1}}}   \exp\ofb{-\frac{i}{\sqrt{D}} \Gammav \cdot (\v_{\parent(w'_1)}\v_{w'_1}\v_{w'_2} \cdots \v_{w'_{k-1}})}\prod_{j=1}^{k-1}  \of{f(\v_{w'_j}) H_D^{(1)}(\v_{w'_j})}}^D,
\end{align}

we can repeat this iteration to define in general for a parent vertex $\av$ with children $\bv_1, \ldots \bv_{k-1}$

\begin{align}
    H_D^{(m)}(\v_{a}) &\defeq
    \of{\sum_{\bv_1,\ldots, \bv_{k-1}}  \exp\ofb{-\frac{i}{\sqrt{D}} \Gammav \cdot (\av \bv_1 \cdots \bv_{k-1})}\prod_{j=1}^{k-1}  \of{f(\bv_i) H_D^{(m-1)}(\bv_i)}}^D.
\end{align}

And we can then finally evaluate $\braket{\gammav, \betav| H^{\ofb{\alpha}}_e |\gammav, \betav}$ via

\begin{align}
    \braket{\gammav, \betav| H^{\ofb{\alpha}}_e |\gammav, \betav} = \sum_{\ofc{\v_i}_{i\in e}} \of{\prod_{i\in e} v^{\ofb{0}}_i} \exp\ofb{-\frac{i}{\sqrt{D}} \Gammav \cdot (\v_{e_1} \v_{e_2} \cdots \v_{e_k})} \prod_{j \in e} f(\v_j)H_D^{(pq)}(\v_j).
\end{align}

We note that the complexity of the iteration is identical to that of \cite{basso2022}, albeit with $p \times q$ effective layers. Thus we can borrow their analysis of the time and memory complexity to find that the iteration has an overall time complexity of $\mathcal{O}(pq 4^{pqk})$ and memory complexity of $\mathcal{O}(4^{pq})$. We do not presently show how to take the $D \rightarrow \infty$ limit, though it can be done by following methods similar to \cref{sec:iterations/mc_ansatz/d_inf} and \cref{sec:iterations/xy_ansatz/d_inf}.

\clearpage
\newpage

\section{Optimal variational parameters on the XY Hamiltonian in the infinite degree limit}
\label{apx:angles}
As discussed in \cref{sec:iterations/mc_ansatz/d_inf} and \cref{sec:iterations/xy_ansatz/d_inf}, in the infinite-degree limit, the MC and XY ansatz yield nontrivial energies only for the XY Hamiltonian. We now present the optimal angles for the two ans\"atze for this Hamiltonian, obtained as described in \cref{sec:results}. Following \cref{rem:mixer_commute}, in the following tables the final angle $\beta_p$ is set to $0$.

\begin{table}[ht]
    \centering
    \begin{tabular}{|l|l|}
        \hline
        $p$ & $\gammav$ \\
        \hline
        1 & [0.0] \\
        \hline
        2 & [0.5243, 0.7434] \\
        \hline
        3 & [0.3892, 0.7070, 0.5182] \\
        \hline
        4 & [0.3297, 0.5688, 0.6406, 2.1246] \\
        \hline
        5 & [0.2950, 0.5144, 0.5586, 0.6429, 2.2040] \\
        \hline
        6 & [0.2705, 0.4804, 0.5074, 0.5646, 0.6396, 2.4378] \\
        \hline
        7 & [0.2853, 0.4998, 0.5391, 0.6157, 0.8918, 0.1616, 0.2659] \\
        \hline
        8 & [0.2640, 0.4720, 0.4953, 0.5473, 0.6167, 0.9041, 0.1569, 0.2631] \\
        \hline
        9 & [0.2483, 0.4469, 0.4676, 0.5039, 0.5489, 0.6201, 0.9124, 0.1530, 0.2606] \\
        \hline
        10 & [0.2348, 0.4282, 0.4468, 0.4764, 0.5047, 0.5542, 0.6237, 0.9203, 0.1499, 0.2581] \\
        \hline
    \end{tabular}
    
    \vspace{0.3cm} % Add vertical space between the two parts
    
    \begin{tabular}{|l|l|}
        \hline
        $p$ & $\betav$ \\
        \hline
        1 & [0.0] \\
        \hline
        2 & [0.3997, 0.0] \\
        \hline
        3 & [0.4714, 0.2811, 0.0] \\
        \hline
        4 & [0.5500, 0.3675, 0.2109, 0.0] \\
        \hline
        5 & [0.5710, 0.4176, 0.3028, 0.1729, 0.0] \\
        \hline
        6 & [0.5899, 0.4492, 0.3559, 0.2643, 0.1485, 0.0] \\
        \hline
        7 & [0.5753, 0.4272, 0.3193, 0.2056, 0.6244, -0.6133, 0.0] \\
        \hline
        8 & [0.5930, 0.4553, 0.3656, 0.2809, 0.1769, 0.6386, -0.6364, 0.0] \\
        \hline
        9 & [0.6018, 0.4696, 0.3940, 0.3270, 0.2475, 0.1533, 0.6521, -0.6548, 0.0] \\
        \hline
        10 & [0.6091, 0.4821, 0.4119, 0.3595, 0.2947, 0.2218, 0.1357, 0.6628, -0.6685, 0.0] \\
        \hline
    \end{tabular}
    \caption{\footnotesize Optimal angles for the MC ansatz on the XY Hamiltonian for various depths $p$. Angles are normalized using time-reversal symmetry of the QAOA (flipping signs of all angles) to be positive. The final values $\beta_p$ are set to $0.0$ as they do not affect the energy.}
    \label{tab:angles_mc}
\end{table}

\begin{table}[ht]
    \centering
    \begin{tabular}{|l | l | l| l|}
        \hline
        $p$ & $\gammav_z$ & $\gammav_y$ & $\betav$ \\
        \hline
        1 & [0.0] & [0.0] & [0.0] \\
        \hline
        2 & [-0.0940, -1.0332] & [-0.6813, 0.3340] & [0.3667, -0.0] \\
        \hline
        3 & [-0.1024, -0.4958, -0.9400] & [-0.5696, 0.3408, 0.1790] & [-1.2382, -0.1569, 0.0] \\
        \hline
        4 & [0.1217, -0.1201, 0.1346, 0.0] & [-0.2930, 0.6766, 0.9354, 0.3066] & [-0.2865, 0.2786, 0.1117, -0.0] \\
        \hline
    \end{tabular}
    \caption{\footnotesize Optimal angles for $\nu_p$ with the XY ansatz on the XY Hamiltonian up to depth $p=4$. Final values $\beta_p$ are set to $0.0$ as they do not affect the energy.}
    \label{tab:angles_xy}
\end{table}

\end{document}